\numberwithin{equation}{section}
\theoremstyle{plain}
\newtheorem{theorem}{Theorem}[section]
\newtheorem{lemma}[theorem]{Lemma}
\newtheorem{proposition}[theorem]{Proposition}
\newtheorem{corollary}[theorem]{Corollary}
\newcommand{\thistheoremname}{}
\newtheorem*{genericthm*}{\thistheoremname}
\newenvironment{namedthm*}[1]
{\renewcommand{\thistheoremname}{#1}%
	\begin{genericthm*}}
	{\end{genericthm*}}
\theoremstyle{remark}
\newtheorem{remark}{Remark}[section]
\newtheorem{assumption}{Assumption}
\newtheorem{definition}{Definition}
\newtheorem{example}{Example}
\newtheorem{algo}{Algorithm}
\setlist[enumerate]{label={(\alph*)}, nosep}
\setlist[itemize]{nosep}
\newcommand{\cmark}{\ding{51}}
\newcommand{\xmark}{\ding{55}}
\definecolor{myblue}{rgb}{0.1, 0.1, 0.7}
\newcommand{\Supp}{Appendix}
\newcommand{\di}{\mathrm{d}}
\newcommand{\T}{\intercal}
\newcommand{\argmin}{\mathop{\mathrm{argmin}}}
\newcommand{\diag}{\mathrm{diag}}
\newcommand{\I}{\mathbb{1}}
\newcommand{\E}{\mathbb{E}}
\newcommand{\Var}{\mathrm{Var}}
\newcommand{\Bias}{\mathrm{Bias}}
\newcommand{\MSE}{\mathrm{MSE}}
\newcommand{\norm}[1]{\left\lVert #1 \right\rVert}
\newcommand{\Corr}{\mathrm{Corr}}
\newcommand{\pr}{\mathbb{P}}
\newcommand{\Normal}{\mathrm{N}}
\newcommand{\Ber}{\mathrm{Bernoulli}}
\newcommand{\PSR}{\mathsf{PSR}}
\newcommand{\TSR}{\mathsf{TSR}}
\newcommand{\WFD}{\mathsf{WFD}}
\newcommand{\obm}{\mathsf{obm}}
\newcommand{\sub}{\mathsf{sub}}
\newcommand{\bart}{\mathsf{bart}}
\newcommand{\Lest}{\mathsf{L}}
\newcommand{\LASER}{\mathsf{LASER}}
\newcommand{\Z}{\mathbb{Z}}
\newcommand{\N}{\mathbb{N}_0}
\newcommand{\R}{\mathbb{R}}
\date{}
\begin{document}

\title{\bf Principles of Statistical Inference in Online Problems}
\author[1]{Man Fung Leung \thanks{Correspondence email: \href{mailto:mfleung2@illinois.edu}{\nolinkurl{mfleung2@illinois.edu}}}}
\author[2]{Kin Wai Chan}
\affil[1]{Department of Statistics, University of Illinois at Urbana-Champaign}
\affil[2]{Department of Statistics, The Chinese University of Hong Kong}
\maketitle

\begin{abstract}
To investigate a dilemma of statistical and computational efficiency faced by long-run variance estimators, we propose a decomposition of kernel weights in a quadratic form and some online inference principles.
These proposals allow us to characterize efficient online long-run variance estimators.
Our asymptotic theory and simulations show that this principle-driven approach leads to online estimators with a uniformly lower mean squared error than all existing works.
We also discuss practical enhancements such as mini-batch and automatic updates to handle fast streaming data and optimal parameters tuning.
Beyond variance estimation, we consider the proposals in the context of online quantile regression, online change point detection, Markov chain Monte Carlo convergence diagnosis, and stochastic approximation.
Substantial improvements in computational cost and finite-sample statistical properties are observed when we apply our principle-driven variance estimator to original and modified inference procedures.
\end{abstract}

\noindent
{\it Keywords:} 
Long-run variance;
Nonparametric estimation;
Online learning;
Recursive estimation;
Streaming data.

\vfill
\newpage
\doublespacing

\section{Introduction} \label{sec:introduction}

Online problems arise naturally in many fields of statistics.
On top of them, modern computing allows intractable offline problems to be approached with online techniques.
To name a few, multidimensional integrals can be approximated by Markov chain Monte Carlo methods \citep{hastings1970mcmc};
deep neural networks can be trained using stochastic approximation \citep{robbins1951sgd}.
The usefulness of statistical inference in these problems motivates the study of online long-run variance estimators.
Existing works \citep{yeh2000dbm,wu2009recursive,rtavc,rtacm,chen2020sgd,zhu2023sgd} focus on how to select subsamples so that a batch means estimator can be updated online.
Nevertheless, their proposals are all dominated by the original offline batch means estimator and little is known about the reason behind.

We propose a novel framework to study online inference procedures in a principle-driven way.
\cref{sec:general-framework} introduces the framework, which consists of an interesting window decomposition and some general principles.
\cref{sec:long-run-variance-estimation} shows that the framework leads to online long-run variance estimators with higher statistical and computational efficiency than all existing works.
\cref{sec:beyond-variance-estimation} considers the framework in the context of online quantile regression, online change point detection, Markov chain Monte Carlo convergence diagnosis, and stochastic approximation.
\cref{sec:discussion} discusses our findings and some future directions.
All experiments are performed on Red Hat Enterprise Linux 7.9 with an Intel Xeon Gold 6148 CPU and R version 4.2.2.
An R-package \texttt{rlaser} that implements our estimators is available online.
Some algorithms and additional results are also deferred to the \Supp.

At the beginning of each section, we briefly describe or motivate subsequent subsections.
\cref{sec:review} reviews the setting of long-run variance estimation and introduces some notation.

\subsection{Review} \label{sec:review}

Suppose the sample mean $\bar{X}_n = \sum_{i=1}^n X_i/n$ is of interest.
For independent and identically distributed data $X_1, \ldots, X_n$, \citet{welford1962recursive} formulated the first online estimator of $n \Var(\bar{X}_n)$:
\begin{equation} \label{eq:welford}
	\hat{\sigma}^2_{n,\WFD}
	= \frac{1}{n} \sum_{i=1}^n (X_i -\bar{X}_n)^2
	= \frac{n-1}{n} \left\{ \hat{\sigma}^2_{n-1,\WFD} +\frac{1}{n}(X_n -\bar{X}_{n-1})^2 \right\},
\end{equation}
where $\hat{\sigma}^2_{0,\WFD} = 0$.
The \hyperref[eq:welford]{$\hat{\sigma}^2_{n,\WFD}$} is an online estimator because only a finite number of arithmetic operations ($O(1)$-time update) and statistics ($O(1)$-space update) are involved at each time.
It is actively used to analyze streaming data \citep{river}, despite the dependence structure (e.g., in \citealt{maleki2021anomaly} and \citealt{bassler2022anomaly}) may call for a robust estimator.

Autocorrelation robustness is a prerequisite for correct inference with many online techniques such as Markov chain Monte Carlo \citep{jones2006fw} and stochastic approximation \citep{chen2020sgd}.
Consider $\{X_i\}_{i \in \Z}$ to be stationary and ergodic with mean $\mu = \E(X_1)$ and autocovariance $\gamma_k = \E\{(X_0 -\mu)(X_k -\mu)\}$.
Under suitable conditions (e.g., in \citealt{wu2011asymptotic}), the long-run variance
\begin{equation} \label{eq:lrv}
	\sigma^2 = \lim_{n \to \infty} n \Var(\bar{X}_n) = \sum_{k \in \Z} \gamma_k
\end{equation}
accounts for possibly non-zero $\{\gamma_k\}_{k=1}^{\infty}$.
In contrast, \hyperref[eq:welford]{$\hat{\sigma}^2_{n,\WFD}$} only estimates $\gamma_0$.

Long-run variance estimation has a long history because it is crucial to the inference of dependent data.
Let $\ell_n$ be a batch size that is monotonically increasing in $n$.
A classical estimator utilizing the overlapping batch means \citep{meketon1984obm} is
\begin{equation} \label{eq:obm}
	\hat{\sigma}^2_{n,\obm}
	= \frac{\sum_{i=\ell_n}^n \left(\sum_{j=i-\ell_n+1}^i X_j -\ell_n \bar{X}_n \right)^2}{\sum_{i=\ell_n}^n \ell_n}.
\end{equation}
However, there is no simple relation like \hyperref[eq:welford]{$\hat{\sigma}^2_{n,\WFD}$} between \hyperref[eq:obm]{$\hat{\sigma}^2_{n,\obm}$} and \hyperref[eq:obm]{$\hat{\sigma}^2_{n-1,\obm}$} \citep{wu2009recursive}.

Online long-run variance estimation starts with the dynamic batch means proposed by \citet{yeh2000dbm}.
The dynamic batch means allows $O(1)$-time and $O(1)$-space update but may lead to inconsistency.
\citet{wu2009recursive} proposed the first consistent alternative based on a subsample selection rule,
which modifies \hyperref[eq:obm]{$\hat{\sigma}^2_{n,\obm}$} by replacing $\ell_n$ with a sequence of batch sizes $\{\ell_i\}_{i=1}^n$:
\begin{equation} \label{eq:subClass}
	\hat{\sigma}^2_{n,\sub}
	= \frac{\sum_{i=1}^n \left(\sum_{j=i-\ell_i+1}^i X_j -\ell_i \bar{X}_n \right)^2}{\sum_{i=1}^n \ell_i}.
\end{equation}
However, the mean squared error of Wu's estimator is 1.78 times of \hyperref[eq:obm]{$\hat{\sigma}^2_{n,\obm}$}.
\citet{rtavc,rtacm} proposed trapezoidal ($\TSR$) and parallelogrammatic ($\PSR$) selection rules to improve the mean squared error from 1.78 times (Wu) to 1.20 times ($\TSR$) and 1.12 times ($\PSR$) of \hyperref[eq:obm]{$\hat{\sigma}^2_{n,\obm}$}.

Finally, we introduce some notation.
Denote the floor, ceiling and indicator functions by $\lfloor \cdot \rfloor$, $\lceil \cdot \rceil$ and $\I_{\{\cdot\}}$.
For $a, b \in \R$, define $\sum_{k=a}^b x_k = \sum_{k=\lceil a \rceil}^{\lfloor b\rfloor} x_k$
and $a \lor b = \max(a,b)$.
Write $a_n \sim b_n$ if $\lim_{n \to \infty} a_n/b_n = 1$.
For $p \ge 1$, denote the $\mathcal{L}^p$-norm by $\norm{\cdot}_p = (\E|\cdot|^p)^{1/p}$.
Denote $\MSE(\cdot) = \norm{\cdot -\sigma^2}_2^2$ and $\Bias(\cdot) = \E(\cdot) -\sigma^2$.

\section{General Framework} \label{sec:general-framework}

Existing literature often regards online update as an algorithmic property and studies online estimators on an algorithm-by-algorithm basis.
This is not only true for variance but also other estimands, e.g., in \citet{xiao2011recursive} and \citet{huang2014recursive}.
Consequently, searching for efficient online estimators in a principle-driven way remains an open problem.
We study this problem with a focus on long-run variance estimation.
In \cref{sec:window-decomposition}, we propose a window decomposition that enables us to pinpoint the source of sub-efficiency in existing works.
While some refinements can be made already, we propose some principles that characterize efficient online estimators for the first time in \cref{sec:principle-driven-sufficient-conditions}.

\subsection{Window Decomposition} \label{sec:window-decomposition}

Consider the following quadratic form:
\begin{equation} \label{eq:genClass}
	\hat{\sigma}_n^2 = \hat{\sigma}_n^2(W)
	= \frac{1}{n} \sum_{i=1}^n \sum_{j=1}^n W_n(i, j) (X_i -\bar{X}_n) (X_j -\bar{X}_n),
\end{equation}
where $W_n(i,j)$ is a window function.
This choice is motivated by the fact that \cref{eq:genClass} can be used to show the asymptotic equivalence of
\hyperref[eq:obm]{$\hat{\sigma}^2_{n,\obm}$} and the celebrated Bartlett kernel estimator,
\begin{equation} \label{eq:bart}
	\hat{\sigma}^2_{n,\bart}
	= \sum_{k=-\ell_n}^{\ell_n} \left( 1 -\frac{|k|}{\ell_n} \right) \frac{1}{n} \sum_{i=|k|+1}^n (X_i -\bar{X}_n) (X_{i-|k|} -\bar{X}_n).
\end{equation}
To search for efficient online estimators of \hyperref[eq:lrv]{$\sigma^2$},
we decompose $W_n(i,j)$ in \cref{eq:genClass} as
\begin{equation} \label{eq:winDecom}
	W_n(i,j) = T\left( d_n^T(i,j) \right) S\left( d_n^S(i,j) \right),
\end{equation}
where $T(\cdot): [0,\infty) \rightarrow \R$ is called a tapering function,
$S(\cdot): [0,\infty) \rightarrow \{0,1\}$ is called a subsampling function,
and $d_n^T(i,j) > 0$ and $d_n^S(i,j) > 0$ are distances between times $i$ and $j$.
Since \cref{eq:winDecom} disentangles the tapering weight and subsampling frequency, it opens the door to principle-driven studies of online estimation.
For online data, the distances in \cref{eq:winDecom} are naturally
\begin{equation} \label{eq:distance}
	d^T_n(i,j) = \frac{|i-j|}{t_n(i,j)} \quad \text{and} \quad
	d^S_n(i,j) = \frac{|i-j|}{s_n(i,j)}
\end{equation}
to measure the standardized time lag $|i-j|$ between observations $X_i$ and $X_j$,
where $t_n(i,j)>0$ and $s_n(i,j)>0$ are some smoothing parameters.
The asymptotic equivalence of \hyperref[eq:obm]{$\hat{\sigma}^2_{n,\obm}$} and \hyperref[eq:bart]{$\hat{\sigma}^2_{n,\bart}$} can be reexplained by having the same window that satisfies \cref{eq:distance}.
In contrast, existing sub-efficient online estimators violate \cref{eq:distance};
see Example \ref{eg:on}.

The tapering function mainly controls the statistical efficiency as it determines how much weight to be placed on $X_i X_j$, which estimates $\gamma_{|i-j|}$.
To reduce variance, it is sensible to assign a lighter weight when $d^T_n(i,j)$ is large.
However, if $T(d^T_n(i,j))$ tapers $X_i X_j$ too much, a larger bias is introduced.
Hence, the taper governs the bias--variance tradeoff.

The subsampling function mainly controls the computational efficiency as it determines the members of each subsample.
Since $\{X_j\}_{j>i}$ is not observed at time $i$, rewrite
\begin{equation} \label{eq:genClassOn}
	\hat{\sigma}_n^2
	= \frac{1}{n} \sum_{i=1}^n
	(X_i -\bar{X}_n) \sum_{j=1}^i W'_n(i,j) (X_j -\bar{X}_n),
\end{equation}
so that the inner summand is adapted to time $i$,
where $W'_n(i,j) = (2 -\I_{i=j}) W_n(i,j)$ assuming $W_n(i,j) = W_n(j,i)$.
Denote the $i$-th subsample as $\mathcal{S}_n(i) = \{X_j : S( d^S_n(i,j) )=1, 1 \le j \le i\}$.
The computational complexity of $\hat{\sigma}_n^2$ depends on the size and regularity of $\mathcal{S}_n(1),\ldots,\mathcal{S}_n(n)$.

The next two examples use \cref{eq:winDecom} to identify the structural problem of existing online estimators.
As a remark, \cref{eq:winDecom} is a conceptual decomposition because $S(d_n^S(i,j))$ can be absorbed into the definition of $T(d_n^T(i,j))$, which reduces to the classical definition of window in Example \ref{eg:off}.

\begin{example}[Offline estimators] \label{eg:off}
	Let $T(x)=1-x$; $S(x) = \I_{x\le 1}$; and $d^T_n(i,j) = d^S_n(i,j) = |i-j|/\ell_n$,
	where $\ell_n = \Lambda n^\lambda$, $\Lambda \in \R^+$ and $\lambda \in (0, 1)$.
	Then, it gives the Bartlett window $W_{\bart}$:
	\begin{equation} \label{eq:winBart}
		W_{n,\bart}(i,j)
		=\left( 1-\frac{|i-j|}{\ell_n} \right) \I_{|i-j| \le \ell_n}.
	\end{equation}
	Other window functions $K(x) = T(x)\I_{x\le 1}$ can be formed similarly, e.g., the Tukey--Hanning window with $T(x)=\{1+\cos(\pi x)\}/2$.
	For these estimators, there are two important features:
	\begin{itemize}
		\item The smoothing parameters that determine the tapering and subsampling behaviors are identical and simultaneously controlled by $\ell_n$, i.e., $t_n(i,j)=s_n(i,j)=\ell_n$.
		\item The distances $d_n^T(i,j)$ and $d_n^S(i,j)$ depend not only on $i,j$ but also on $n$ because both distances are standardized by the global bandwidth $\ell_n$, which is a function of $n$.
	\end{itemize}
\end{example}

\begin{example}[Online estimators] \label{eg:on}
	Let $T(x) = 1-x$; $S(x) = \I_{x\le 1}$; $d^T_n(i,j) = (|i-j| + \overline{\ell}_n - \ell_{i \lor j})/\overline{\ell}_n$; and $d^S_n(i,j) = |i-j|/\ell_{i \lor j}$,
	where $\ell_n = \Lambda n^\lambda$, $\overline{\ell}_n = n^{-1} \sum_{i=1}^n \ell_i$, $\Lambda \in \R^+$ and $\lambda \in (0, 1)$.
	Then, it gives the window $W_{\PSR}$ of the best existing online estimator \citep{rtacm}:
	\begin{equation} \label{eq:winPSR}
		W_{n,\PSR}(i,j)
		= \left( 1- \frac{|i-j| +\overline{\ell}_n -\ell_{i \lor j}}{\overline{\ell}_n} \right) \I_{|i-j| \le \ell_{i \lor j}}.
	\end{equation}
	The~\Supp~provides a general window of other less-efficient online estimators and the derivation of \cref{eq:winPSR}.
	For these estimators, there are also two important features:
	\begin{itemize}
		\item The natural distance $|i-j|$ in \cref{eq:winPSR} is distorted by a counterintuitive amount of $\overline{\ell}_n - \ell_{i \lor j}$.
		\item The distances $d_n^T(i,j)$ and $d_n^S(i,j)$ are regularized by the ad-hoc global average bandwidth $\overline{\ell}_n$ and the local bandwidth $\ell_{i \lor j}$, respectively.
	\end{itemize}
\end{example}

\subsection{Principle-driven Sufficient Conditions} \label{sec:principle-driven-sufficient-conditions}

\cref{eq:winDecom} gives hints to refine existing works.
In particular, Example \ref{eg:on} suggests that the window $W_{\PSR}$ in \cref{eq:winPSR} can be improved immediately by using $W_{\Lest}$ defined as
\begin{equation} \label{eq:winL}
	W_{n,\Lest}(i,j)
	= \left( 1- \frac{|i-j|}{\ell_{i \lor j}} \right) \I_{|i-j| \le \ell_{i \lor j}}.
\end{equation}
We prove that $\hat{\sigma}^2_n(W_{\Lest})$ uniformly dominates the best existing online estimator $\PSR$ in the \Supp.
However, we can achieve more with \cref{eq:winDecom} by characterizing efficient online estimators in a principle-driven way.

\begin{namedthm*}{LASER principles} \label{prin:LASER}
	Consider \cref{eq:winDecom}.
	The principles below, which can be summarized as Local Alike Separated Exterior Ramping, unite the statistical and computational properties of \hyperref[eq:genClass]{$\hat{\sigma}_n^2$}.
	\begin{enumerate}
		\item[(L)] An online estimator should utilize local subsamples.
		\item[(A)] Under stationarity, if $(X_i,X_j)$ and $(X_{i'}, X_{j'})$ are included in the subsamples and $|i-j| = |i'-j'|$, they should be weighted alike.
		\item[(S)] The tapering and subsampling parameters should be separately chosen.
		\item[(E)] An $O(1)$-time estimator should be able to exteriorize (factor out) the global part of the tapering parameter from the weight of every subsample.
		\item[(R)] An $O(1)$-space estimator should ramp up (increase) the subsample size until it is too large.
	\end{enumerate}
\end{namedthm*}

Philosophically, Principle \hyperref[prin:LASER]{L} means that subsamples should be built locally and adapted to their current time to avoid reconstruction of subsamples when new data arrive.
In our context, the subsample $\mathcal{S}_n(i)$ should depend on its local time $i$ only.
It leads to the window $W_{\Lest}$ in \cref{eq:winL}.

Principle \hyperref[prin:LASER]{A} states that when the distances are the same, the data pairs contain the same amount of information under stationarity and so should be treated equally.
This is sensible and helps to explain the sub-efficiency of existing online estimators; see Example \ref{eg:on}.

Principle \hyperref[prin:LASER]{S} allows tuning statistical and computational properties separately.
To tune the tapering and subsampling behaviors,
$W_{\bart}$ in \cref{eq:winBart} uses the same $\ell_n$,
whereas $W_{\Lest}$ in \cref{eq:winL} uses the same $\ell_{i\vee j}$.
Consequently, they sacrifice computational and statistical efficiency, respectively.

Principle \hyperref[prin:LASER]{E} focuses on the relation between $O(1)$-time update and the taper.
If the $n$-related part of the tapering parameter $t_n(i,j)$ can be exteriorized (factored) out of all weights,
then the weighted sums will be local and $O(1)$-time update is possible.
For illustration, we exteriorize
\begin{equation} \label{eq:egE}
	\frac{1}{n} \sum_{i=1}^n X_i \sum_{j=1}^i \left( 1-\frac{|i-j|}{\ell_n} \right) X_j
	= \left( \frac{\ell_n}{n} \sum_{i=1}^n X_i \sum_{j=1}^i X_j \right)
	- \left\{ \frac{1}{n \ell_n} \sum_{i=1}^n X_i \sum_{j=1}^i (i-j) X_j \right\}.
\end{equation}
Then, we can update \cref{eq:egE} in $O(1)$ time because the inner sums $\sum_{j=1}^i X_j$ and $\sum_{j=1}^i (i-j) X_j$ are local after factoring out $\ell_n/n$ and $1/(n\ell_n)$, respectively.
In contrast, we cannot exteriorize $1/\ell_n$ out of $T(|i-j|/\ell_n)$ when $T(x)=\{1+\cos(\pi x)\}/2$.

Principle \hyperref[prin:LASER]{R} suggests a relation between $O(1)$-space update and the subsamples.
To illustrate, updating $\sum_{i=1}^n X_i$ to $\sum_{i=2}^{n+1} X_i$ requires storing the far observation $X_1$
so that $X_1$ can be removed from the sum.
Hence, trimming far observations frequently in the subsamples, e.g.,
$(X_1,X_2), (X_2,X_3), (X_3,X_4),$ 
$(X_4, X_5), \ldots$,
incurs a higher memory cost than ramping up the subsample size moderately, e.g.,
$(X_1,X_2), (X_1,X_2,X_3), (X_1,X_2,X_3,X_4), (X_4, X_5), \ldots$.
Hence, to allow $O(1)$-space update, we define the effective (ramped) subsampling parameter
\begin{equation} \label{eq:rampSub}
	s'_n = \left\{
	\begin{array}{ll}
		s'_{n-1} +1, &\quad \text{if} \quad s_{n-1} \le s'_{n-1} +1 < \phi s_{n-1}; \\
		s_n, &\quad \text{if} \quad s'_{n-1} +1 \ge \phi s_{n-1},
	\end{array}
	\right.
\end{equation}
where $s_n$ is the intended subsampling parameter, and $\phi \in [1, \infty)$ is the memory parameter that controls when $s_n'$ stops ramping up and resets to $s_n$.
For $\phi = 1$, $s'_n$ reduces to $s_n$.

Next, we provide a mathematical setup that relates the computational complexities of \hyperref[eq:genClass]{$\hat{\sigma}_n^2$} to the LASER principles.

\begin{proposition}[Online update] \label{prop:on}
	Consider \cref{eq:winDecom} with the distances in \cref{eq:distance} and a subsampling function $S(x) = \I_{x\le 1}$.
	Let $q \in \Z^+$; $a_0,\ldots,a_q \in \R$; and $\{s_n\}_{n \in \Z^+}$, $\{t_n\}_{n \in \Z^+}$ and $\{t'_n\}_{n \in \Z^+}$ be some non-zero sequences.
	Then \hyperref[eq:genClass]{$\hat{\sigma}_n^2$} can be updated in $O(1)$ time if
	\begin{enumerate}
		\item[(L)] the subsampling parameter is local, i.e., $s_n(i,j) = s_{i \lor j}$;
		\item[(A)] the tapering parameter is separable in $i \lor j$ and $n$, i.e., $t_n(i,j) = t_n t'_{i \lor j}$;
		\item[(S)] $\{s_n\}, \{t_n\}, \{t'_n\}$ are possibly different sequences; and
		\item[(E)] the tapering function is of the form $T(x) = \sum_{r=0}^q a_r x^r$.
	\end{enumerate}
	Suppose further that $\{s_n\}$ is a fixed or monotonically increasing random sequence.
	Then \hyperref[eq:genClass]{$\hat{\sigma}_n^2$} can be updated in $O(1)$ space if
	\begin{enumerate}
		\item[(R)] the subsampling parameter is ramped with $\phi \ge 2$, i.e., $s_n(i,j) = s'_{i \lor j}$ in \cref{eq:rampSub}.
	\end{enumerate}
\end{proposition}

Proposition \ref{prop:on} considers a more general setting than the implication of the LASER principles for two reasons.
First, many windows can be approximated by a polynomial basis.
Second, it includes the simple modification in \cref{eq:winL} as well as existing subsample selection rules, which can be written as a linear combination of several parts that satisfy Proposition \ref{prop:on}.
For $O(1)$-space update, we need to know the intended $s_n$ in advance to prepare for resets of $s'_n$.
Therefore, we consider a prespecified or data-driven but monotonically increasing sequence $\{s_n\}$.

To fully align with the LASER principles, we recommend the following settings.
In (A),  we use a constant sequence $\{t'_i\}$,
i.e., $t'_1=t'_2=\cdots$.
In (E), we set $a_0=1$, $a_q=-1$, and $a_1=\cdots=a_{q-1}=0$ so that $T(x) = 1-x^q$.
It can enhance the computational efficiency by updating fewer summary statistics,
while the window order $q$ and convergence rate remain unchanged;
see \citet{parzen1957kernel}.
By Proposition \ref{prop:on} with the recommended settings based on the LASER principles,
our proposed online estimator of \hyperref[eq:lrv]{$\sigma^2$} is
\begin{equation} \label{eq:LASER}
	\hat{\sigma}_{n,\LASER(q,\phi)}^2
	= \hat{\sigma}_n^2(W_{\LASER(q,\phi)})  \quad \text{with} \quad
	W_{n,\LASER(q,\phi)}(i,j)
	= \left( 1 -\frac{|i-j|^q}{t_n^q} \right) \I_{|i-j| \le s'_{i \lor j}},
\end{equation}
where $\Theta,\Psi \in \R^+$; $\theta,\psi \in (0,1)$; $q \in \Z^+$; $\phi \in [1,\infty)$; $t_n \sim \Theta n^{\theta}$; $s_n \sim \Psi n^{\psi}$; and $s'_n$ is defined in \cref{eq:rampSub}.
This proposal is always $O(1)$-time, and $O(1)$-space when $\phi \ge 2$.

\begin{table}[!t]
\centering
\scalebox{0.93}{
\begin{tabularx}{1.08\textwidth}{Xcccccc}
	\toprule[2pt]
	\multicolumn{3}{c}{Estimator}
	& \multicolumn{2}{c}{Statistical properties}
	& \multicolumn{2}{c}{Computational properties} \\
	\cmidrule(lr){1-3} \cmidrule(lr){4-5} \cmidrule(lr){6-7}
	$\hat{\sigma}_n^2$ & $q$ & $\phi$
	& Optimal MSE & Automatic
	& Time & Space  \\
	\midrule[1pt]
	$\bart$ \citep{andrews1991kernel} & $1$ & \slash
	& $B_n = 2.29 \sigma^4 \kappa_1^{2/3} n^{-2/3}$ & \xmark
	& $O(n)$ & $O(n)$ \\
	\citet{parzen1957kernel} & $3$ & \slash
	& $P_n = 3.39 \sigma^4 \kappa_3^{2/7} n^{-6/7}$ & \xmark
	& $O(n)$ & $O(n)$ \\
	\midrule[1pt]
	\citet{welford1962recursive} & $1$ & \slash
	& inconsistent for $\sigma^2$ & \slash
	& $O(1)$ & $O(1)$ \\
	\citet{yeh2000dbm} & $1$ & \slash
	& inconsistent for $\sigma^2$ & \xmark
	& $O(1)$ & $O(1)$ \\
	\citet{wu2009recursive} & $1$ & \slash
	& $1.78B_n$ & \xmark
	& $O(1)$ & $O(1)$ \\
	$\TSR$ \citep{rtavc} & $1$ & \slash
	& $1.20B_n$ & \xmark
	& $O(1)$ & $O(1)$ \\
	$\PSR$ \citep{rtavc} & $1$ & \slash
	& $1.12B_n$ & \xmark
	& $O(1)$ & $O(n^{1/3})$ \\
	$\PSR$ \citep{rtacm} & $3$ & \slash
	& $1.06P_n$ & \cmark
	& $O(1)$ & $O(n^{1/7})$ \\
	\midrule[1pt]
	$\Lest$ (\hyperref[eq:winL]{simple modification}) & $1$ & \slash
	& $1.08B_n$ & \cmark
	& $O(1)$ & $O(n^{1/3})$ \\
	$\LASER$ (\hyperref[eq:LASER]{proposal}) & $1$ & $2$
	& $1.01B_n$ & \cmark
	& $O(1)$ & $O(1)$ \\
	$\LASER$ (\hyperref[eq:LASER]{proposal}) & $1$ & $1$
	& $0.96B_n$ & \cmark
	& $O(1)$ & $O(n^{1/3})$ \\
	$\LASER$ (\hyperref[eq:LASER]{proposal}) & $3$ & $1$
	& $0.98P_n$ & \cmark
	& $O(1)$ & $O(n^{1/7})$ \\
	\bottomrule[2pt]
\end{tabularx}
}
\caption{Summary of properties of different long-run variance estimators.}
\label{tab:properties}
\end{table}

Finally, we summarize the properties of different estimators in Table \ref{tab:properties}.
While we have restricted our attention to online estimators, our construction can be used with many innovations in the offline literature and extended easily to other estimands.
We also enhance our proposals practically with mini-batch and automatic updates, which will be discussed in \cref{sec:practical-enhancements}.
Since automatic update frees users from smoothing parameter selection, they only need to decide $q$ and $\phi$.
In general, we recommend \hyperref[eq:LASER]{$\LASER(1,1)$}.
If memory is scarce, we recommend \hyperref[eq:LASER]{$\LASER(1,2)$}.
We conclude with an example that verifies the superiority of our principle-driven estimator.

\begin{example}[Online long-run variance estimation] \label{eg:mse}
	Consider the following models:
	
	\begin{enumerate}[label=(\Roman*)]
		\item Autoregressive moving average: Let $X_i = a X_{i-1} +b\varepsilon_{i-1} +\varepsilon_i$, where $\varepsilon_i \sim \Normal(0,1)$.
		Take $a = 0.5$ and $b = 0.5$,
		which results in a mildly autocorrelated linear time series.
		\item Bilinear: A strongly autocorrelated nonlinear time series is generated from
		\begin{equation} \label{eq:bilinear}
			X_i = (0.9 +0.1\varepsilon_i) X_{i-1} +\varepsilon_i, \quad \text{where} \quad \varepsilon_i \sim \Normal(0,1).
		\end{equation}
		\item Fractional Gaussian noise: Let $X_i$ be a zero-mean Gaussian process with $\gamma_k = a(k+b)^{-c}$.
		Take $a = 100$, $b = 5$ and $c = 5$, which leads to another nonlinear time series.
	\end{enumerate}
	
	Under Models I--III, we compute the mean squared errors of $\TSR$, $\PSR$, \hyperref[eq:LASER]{$\LASER(1,1)$} and
	\hyperref[eq:LASER]{$\LASER(1,2)$} based on 1000 replications.
	Oracle parameters are used to eliminate the effect of smoothing parameter selection, which will be investigated in Example \ref{eg:auto}.
	Figure \ref{fig:mse-ratio} confirms the improvements brought by our framework in finite sample.
\end{example}

\begin{figure}[!t]
	\centering
	\includegraphics[width=\linewidth]{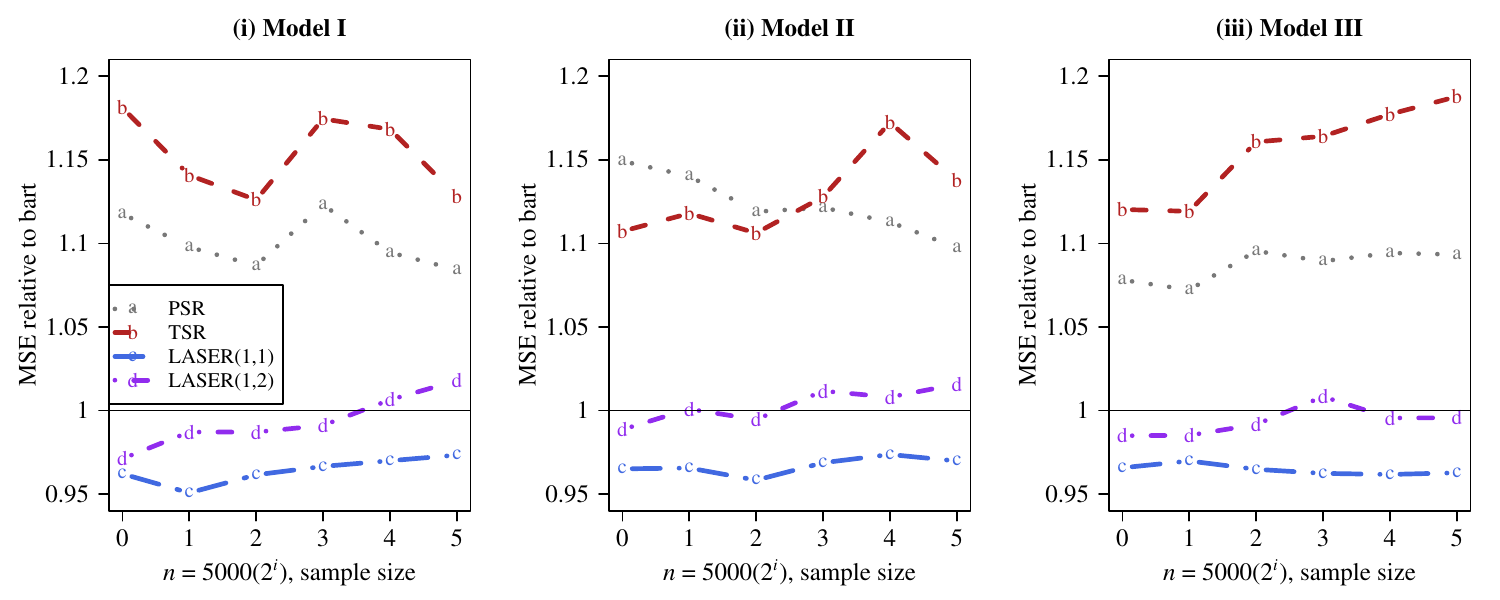}
	\caption{Long-run variance estimation using different online estimators: (a) $\TSR$ (dotted gray); (b) $\PSR$ (dashed red); (c) \hyperref[eq:LASER]{$\LASER(1,1)$} (longdash blue); (d) \hyperref[eq:LASER]{$\LASER(1,2)$} (dotdash purple). 
		The plots show the value of $\MSE(\cdot)/\MSE(\hat{\sigma}^2_{n,\bart})$.}
	\label{fig:mse-ratio}
\end{figure}

\begin{remark} \label{rmk:pd}
	The finite-sample estimates of \hyperref[eq:LASER]{$\LASER(q,\phi)$} may not be positive definite.
	To resolve this issue, we follow \citet{jentsch2015pd} and implement an adjustment that retains asymptotic properties; see the \Supp~and also \citet{vats2021kernel}.
\end{remark}

\section{Long-run Variance Estimation} \label{sec:long-run-variance-estimation}

To compare with existing works on the same basis, we develop the asymptotic theory of \hyperref[eq:genClass]{$\hat{\sigma}_n^2$} based on the dependence measures of \citet{wu2005asymptotic} in \cref{sec:consistency,sec:l2-optimal-convergence-rate}.
A surprising finding is that our proposed estimator can be ``super-optimal'' in the sense of \citet{rtacm}, i.e., it is online with a lower asymptotic mean squared error than \hyperref[eq:obm]{$\hat{\sigma}^2_{n,\obm}$} and \hyperref[eq:bart]{$\hat{\sigma}^2_{n,\bart}$}.
In \cref{sec:practical-enhancements}, we discuss mini-batch and automatic updates, which addresses some practical reasons why offline estimators remain common in online problems.

\subsection{Consistency} \label{sec:consistency}

Let $X_i = g(\mathcal{F}_i)$ for some measurable function $g$,
where $\mathcal{F}_i = (\dots, \epsilon_{i-1}, \epsilon_i)$ is the shift process of independent and identically distributed (iid) innovations $\{ \epsilon_i \}_{i \in \Z}$.
Let $\epsilon'_j$ be an iid copy of $\epsilon_j$,
$\mathcal{F}_{i,\{j\}} = (\mathcal{F}_{j-1}, \epsilon'_j, \epsilon_{j+1}, \dots, \epsilon_i)$
and $X_{i,\{j\}} = g(\mathcal{F}_{i,\{j\}})$.
\citet{wu2005asymptotic} defined $\delta_{i,p} = \norm{ X_i -X_{i,\{0\}} }_p$ (physical dependence measure)
and $\omega_{i,p} = \norm{ \E(X_i \mid \mathcal{F}_0) -\E(X_i \mid \mathcal{F}_{0,\{0\}}) }_p$ (predictive dependence measure).
These measures have a wide range of applications, which include but are not limited to long-run variance estimation.
Interested readers are referred to \citet{wu2011asymptotic}.

\begin{assumption}[Stability] \label{asum:stability}
	For some $\alpha > 2$,
	$\Delta_\alpha = \sum_{i=0}^\infty \delta_{i,\alpha} < \infty$.
\end{assumption}

\begin{assumption}[Summability of window] \label{asum:winSums}
	Let $\alpha' = \min(\alpha/2 ,2)$, $c \in (0, 1-1/\alpha')$,
	\begin{align*}
		G_{1,n} &= \max_{1 \le i \le n} \sum_{j=1}^n \left\{ \big|  W_n(i,j) \big| +W_n^2(i,j) \right\}, \\
		G_{2,n} &= \max_{1 \le i,j \le n} \big| W_n(i,j) \big| +\max_{1 \le i \le n} \left\{ \sum_{j=2}^n \big| W_n(i,j) -W_n(i,j-1) \big|^{\alpha'} \right\}^{\frac{1}{\alpha'}}.
	\end{align*}
	The window in \cref{eq:genClass} satisfies $G_{1,n} = o(n^{2-2/\alpha'})$ and $G_{2,n} = o(n^c)$.
\end{assumption}

\begin{assumption}[General window for long-run variance estimation] \label{asum:winGen}
	For $n\in\Z^+$ and $k=0,\ldots,n-1$, define
	$w_{n,k} = n^{-1} \sum_{i=k+1}^n W_n(i,i-k)$.
	The window in \cref{eq:genClass} satisfies
	\begin{enumerate}[label=(\alph*)]
		\item $W_n(i,j) = W_n(j,i)$ for all $n\in\Z^+$ and $i,j \in \{1, \ldots, n\}$;
		\item $W_n(i,i) = 1$ for all $n\in\Z^+$ and $i \in \{1, \ldots, n\}$; and
		\item there exists an increasing sequence $\{b_n\}_{n\in\Z^+}$ that diverges to $\infty$ as $n \to \infty$ such that \linebreak
		$\max_{1 \le k \le b_n} |w_{n,k}-1| = o(1)$ and
		$\max_{b_n < k < n} |w_{n,k}| = O(1)$.
	\end{enumerate}
\end{assumption}

Assumption \ref{asum:stability} is known as the stability condition in \citet{wu2005asymptotic}, which ensures $\sigma^2 < \infty$.
Assumptions \ref{asum:winSums} and \ref{asum:winGen} regulate the behavior of a general window so that $\hat{\sigma}_n^2$ is guaranteed to be precise and accurate, respectively.
These assumptions are mild and easily verifiable once $W_n(i,j)$, $s_n(i,j)$ and $t_n(i,j)$ are specified.
For instance, they hold under Definition \ref{def:winLASER}.

\begin{assumption}[$\tilde{q}$-th order serial dependence] \label{asum:uq}
	For some $\tilde{q} \in \Z^+$,
	$u_{\tilde{q}} = \sum_{k \in \Z} |k|^{\tilde{q}} |\gamma_k| < \infty$.
\end{assumption}

\begin{definition}[Parameters of $\LASER(q,\phi)$] \label{def:winLASER}
	Let $q \in \Z^+$ and $\phi \in [1,\infty)$ be fixed;
	$s'_n$ depends on $s_n$ according to \cref{eq:rampSub}; and
	$\alpha' = \min(\alpha/2, 2)$.
	The window function takes the form in \cref{eq:LASER}.
	The intended subsampling parameter and tapering parameter take the form
	$s_n = \min( \lfloor \Psi n^\psi \rfloor, n-1)$ and
	$t_n = \min( \lceil \Theta n^\theta \rceil, n)$.
	The coefficients satisfy $\Psi, \Theta \in \R^+$.
	The exponents satisfy $0 < \psi < \min \{ 2-2/\alpha', 1/(1+q) \}$, and either
	\begin{enumerate}[label=(\alph*)]
		\item $\psi \le \theta$; or
		\item $\max\{ \psi +(\psi -2+2/\alpha')/(2q), (q-\tilde{q}) \psi /q\} < \theta < \psi$,
	\end{enumerate}
	where $\tilde{q}$ is the order of serial dependence in Assumption \ref{asum:uq}.
\end{definition}

\begin{theorem}[Consistency] \label{thm:consistency}
	Let $\alpha > 2$.
	Suppose $\norm{X_1}_\alpha < \infty$ and Assumption \ref{asum:stability} holds.
	\begin{enumerate}[label=(\alph*)]
		\item If Assumptions \ref{asum:winSums} and \ref{asum:winGen} hold, then
		$\lVert \hat{\sigma}_n^2 -\sigma^2 \rVert_{\alpha/2} = o(1)$.
		\item Under Definition \ref{def:winLASER}(a) or \ref{def:winLASER}(b) with Assumption \ref{asum:uq},
		$\lVert \hat{\sigma}^2_{n, \LASER(q,\phi)} -\sigma^2 \rVert_{\alpha/2} = o(1)$.
	\end{enumerate}
\end{theorem}

Theorem \ref{thm:consistency}(a) shows the $\mathcal{L}^{\alpha/2}$-consistency of general \hyperref[eq:genClass]{$\hat{\sigma}^2_n$}.
It covers both offline and online estimators, e.g., all consistent estimators stated in Table \ref{tab:properties}.
Theorem \ref{thm:consistency}(b) is dedicated to our proposed estimator under Definition \ref{def:winLASER}.
When $\psi \le \theta$, i.e., under Definition \ref{def:winLASER}(a), the estimator is always consistent.
However, if $\psi > \theta$, i.e., under Definition \ref{def:winLASER}(b), $s_n/t_n$ diverges so the window is not absolutely bounded.
In this case, the estimator is inconsistent unless some condition is imposed on the serial dependence.
Definition \ref{def:winLASER}(b) handles it under Assumption \ref{asum:uq}.
Consequently, $\theta$ admits the lower bound $(q-\tilde{q}) \psi /q$.

\subsection{\texorpdfstring{$\mathcal{L}^2$-optimal Convergence Rate}{L2-optimal Convergence Rate}} \label{sec:l2-optimal-convergence-rate}

\begin{assumption}[$q$-th order weak stability] \label{asum:qWeakStable}
	For some $\alpha, q \ge 1$,
	$\Omega_\alpha^{(q)} = \sum_{j=0}^\infty j^q \omega_{j, \alpha} < \infty$.
\end{assumption}

Assumption \ref{asum:qWeakStable} is satisfied by a broad class of linear and nonlinear time series; see \citet{wu2011asymptotic} and the references therein.
This assumption implies that $u_q = \sum_{k \in \Z} |k|^q |\gamma_k| < \infty$ and
$v_q = \sum_{k \in \Z} |k|^q \gamma_k$ is well-defined \citep{wu2009recursive}, which are used in the following theorem.

\begin{theorem}[Exact convergence rate] \label{thm:mse}
	Suppose $\norm{X_1}_\alpha < \infty$ and Definition \ref{def:winLASER} holds.
	\begin{enumerate}[label=(\alph*)]
		\item Let $\alpha \ge 4$. If Assumption \ref{asum:stability} holds, then as $n \to \infty$,
		\begin{gather*}
			n^{1-\psi -\max\{ 2q(\psi-\theta), 0\}} \Var(\hat{\sigma}_n^2)
			\to \mathcal{V}_{\psi,\Psi,\theta,\Theta,q,\phi} \sigma^4,
			\qquad \text{where} \qquad \mathcal{V}_{\psi,\Psi,\theta,\Theta,q,\phi} = \\
			\left\{
			\begin{array}{ll}
				\frac{2 \Psi (\phi + 1)}{\psi + 1} \I_{\psi \le \theta}
				-\frac{8 \Psi^{q + 1} \Theta^{- q} (\phi^{q + 2} - 1)}{(\phi - 1) (q + 1) (q + 2) (\psi q + \psi + 1)} \I_{\psi = \theta}
				+\frac{2 \Psi^{2 q + 1} \Theta^{- 2 q} (\phi^{2 q + 2} - 1)}{(\phi - 1) (q + 1) (2 q + 1) (2 \psi q + \psi + 1)} \I_{\psi \ge \theta}, & \phi > 1; \\
				\frac{4 \Psi}{\psi + 1} \I_{\psi \le \theta}
				-\frac{8 \Psi^{q + 1} \Theta^{- q}}{(q + 1) (\psi q + \psi + 1)} \I_{\psi = \theta}
				+\frac{4 \Psi^{2 q + 1} \Theta^{- 2 q}}{(2 q + 1) (2 \psi q + \psi + 1)} \I_{\psi \ge \theta} , & \phi = 1. \\
			\end{array}
			\right.
		\end{gather*}
		\item Let $\alpha \ge 2$. If Assumption \ref{asum:qWeakStable} holds for the fixed $q$, then as $n \to \infty$,
		\[
		\Bias(\hat{\sigma}_n^2) \sim \left\{
		\begin{array}{ll}
			o(n^{-q\psi}), & \psi < \theta; \\
			-\Theta^{-q} n^{-q \theta} v_q, & \psi \ge \theta. \\
		\end{array}
		\right.
		\]
	\end{enumerate}
\end{theorem}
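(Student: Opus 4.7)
The plan is to analyze the bias (part b) and variance (part a) of $\hat{\sigma}^2_{n,\LASER(q,\phi)}$ via the cumulant expansion of a quadratic form combined with the physical dependence measures of \citet{wu2005}. A preliminary step in both parts is replacing $\bar{X}_n$ by $\mu$ in (\ref{eq:genClass}); by Assumption \ref{asum:stability} (with Assumption \ref{asum:qWeakStable} where needed), the resulting centering error contributes at strictly lower order, via standard bounds available in this framework.

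For part (b), write $\E[\hat{\sigma}_n^2]=\sum_{k\in\Z} w_{n,k}\gamma_k$ with $w_{n,k}=n^{-1}\sum_{i=|k|+1}^{n} W_n(i,i-k)$ and $W_n(i,i-k)=(1-|k|^q/t_n^q)\I_{|k|\le s'_{i}}$. When $\psi\ge\theta$, subsampling is broader than tapering, so for each fixed $k$ eventually $|k|\le s'_i$ and the bias is driven by the taper: $\sum_k(w_{n,k}-1)\gamma_k\sim -t_n^{-q}\sum_k |k|^q\gamma_k=-\Theta^{-q}n^{-q\theta}v_q$, with $v_q$ well-defined because Assumption \ref{asum:qWeakStable} implies $u_q<\infty$. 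When $\psi<\theta$, the subsample window closes first; inside we have $|k|^q/t_n^q=o(n^{-q\psi})$ uniformly for $|k|\le s_n$, and the omitted tail satisfies $\sum_{|k|>s_n}|\gamma_k|=o(s_n^{-q})=o(n^{-q\psi})$ by $u_q<\infty$, giving $\Bias(\hat{\sigma}_n^2)=o(n^{-q\psi})$.

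For part (a), I would use the identity
\begin{equation*}
  \Var(\hat{\sigma}_n^2)=\tfrac{1}{n^2}\sum_{i,j,i',j'} W_n(i,j)W_n(i',j')\bigl\{\gamma_{i-i'}\gamma_{j-j'}+\gamma_{i-j'}\gamma_{j-i'}+\kappa_4\bigr\}+R_n,
\end{equation*}
where $\kappa_4$ is the joint fourth cumulant and $R_n$ collects the $\bar{X}_n$-centering errors. Under $X_1\in\mathcal{L}^4$ and Assumption \ref{asum:stability}, both $R_n$ and the $\kappa_4$-sum are $O(n^{-2}\sum_{i,j}|W_n(i,j)|)$, hence dominated by the Gaussian part, which by the absolute summability of $\{\gamma_k\}$ and the slow variation of $W_n$ in the $(i,j)$ position for large $s'$ yields $\Var(\hat{\sigma}_n^2)\sim 2\sigma^4 n^{-2}\sum_{i,j}W_n^2(i,j)$. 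Binomially expanding $(1-|k|^q/t_n^q)^2$ and summing over $|k|\le s'_{i\lor j}$ gives
\begin{equation*}
  n^{-2}\sum_{i,j}W_n^2(i,j)\sim n^{-2}\sum_{i=1}^{n}\Bigl[2s'_i-\tfrac{4(s'_i)^{q+1}}{(q+1)t_n^q}+\tfrac{2(s'_i)^{2q+1}}{(2q+1)t_n^{2q}}\Bigr].
\end{equation*}
The ratio $(s_n/t_n)^q\sim\Psi^q\Theta^{-q}n^{q(\psi-\theta)}$ controls which of the three terms dominates, explaining the trichotomy $\psi<\theta$, $\psi=\theta$, $\psi>\theta$ and the normalization $n^{1-\psi-\max\{2q(\psi-\theta),0\}}$.

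The main obstacle is computing $\sum_{i=1}^{n}(s'_i)^r$ for $r\in\{1,q+1,2q+1\}$ under the ramping rule (\ref{eq:rampSub}). Starting from a reset at some $n_r$, $s'$ grows linearly, $s'_{n_r+m}=s_{n_r}+m$, until $s'_{n-1}+1\ge\phi s_{n-1}$; since $\psi<1$ makes $s_n$ slowly varying across a cycle of length $\approx(\phi-1)s_{n_r}$ (relative change $s_{n_r}/n_r=o(1)$), the ratio $s'/s$ is approximately uniform on $[1,\phi]$ over the cycle, giving $\text{cycle-avg}\{(s'_n/s_n)^r\}\to(\phi^{r+1}-1)/[(r+1)(\phi-1)]$. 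Combining with the Riemann-sum approximation $n^{-1}\sum_{i=1}^n s_i^r\sim\Psi^r n^{r\psi}/(r\psi+1)$ and substituting $r=1,q+1,2q+1$ reproduces the three components of $\mathcal{V}_{\psi,\Psi,\theta,\Theta,q,\phi}$ for $\phi>1$; the $\phi=1$ expressions follow by L'H\^opital since $\lim_{\phi\to1^+}(\phi^{r+1}-1)/(\phi-1)=r+1$, matching the degenerate case $s'_n\equiv s_n$. Boundary effects (early $n$ where $s_n=O(1)$ and incomplete cycles near $n$) contribute only $o$-terms absorbed into the $\sim$ asymptotics.
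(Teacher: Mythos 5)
Your route—expanding $\Var(\hat{\sigma}_n^2)$ through the product-moment/cumulant decomposition $\gamma_{i-i'}\gamma_{j-j'}+\gamma_{i-j'}\gamma_{j-i'}+\mathrm{cum}_4$ and reducing the Gaussian part to $2\sigma^4 n^{-2}\sum_{i,j}W_n^2(i,j)$—is genuinely different from the paper's, which replaces the sample mean, blocks the ramped subsamples, and then approximates by an $m$-dependent process and martingale differences (Lemmas \ref{lem:eqvMDepend}, \ref{lem:eqvMart}, \ref{lem:varExact}) precisely so that the result holds at exactly $\alpha=4$. Your cycle-average computation of $n^{-1}\sum_i (s'_i)^r$ (ratio $s'_i/s_i$ asymptotically uniform on $[1,\phi]$, then the Riemann sum for $\sum_i s_i^r$) does reproduce the paper's constants in all three terms for both $\phi>1$ and $\phi=1$, and your bias sketch follows the same lines as the paper's. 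However, the key domination claim in the variance has a genuine flaw: you bound both the fourth-cumulant sum and the centering remainder $R_n$ by $O\bigl(n^{-2}\sum_{i,j}|W_n(i,j)|\bigr)$. Since $\sum_{i,j}|W_n(i,j)| \asymp n^{1+\psi+\max\{q(\psi-\theta),0\}}$, this bound is of order $n^{\psi+\max\{q(\psi-\theta),0\}-1}$, which is \emph{the same order} as the Gaussian part $n^{\psi+\max\{2q(\psi-\theta),0\}-1}$ whenever $\psi\le\theta$ (in particular at the MSE-optimal $\psi=\theta$), so ``hence dominated'' does not follow in the main case. The repair would use boundedness of the window together with absolute summability of fourth cumulants, giving $O\bigl(n^{-1}\max_{i,j}W_n^2(i,j)\bigr)$; but that summability under only $\Delta_4<\infty$ (no moments beyond $\alpha=4$) is itself a nontrivial lemma that you neither prove nor cite—avoiding exactly this is why the paper extends Propositions 1 and 2 of \citet{liu2010} instead of using cumulants.

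Two further steps are asserted rather than proved. First, the localization ``Gaussian part $\sim 2\sigma^4 n^{-2}\sum W_n^2$'' is justified only by ``slow variation of $W_n$'', but the LASER window is discontinuous in $(i,j)$: it jumps at the subsample boundary $|i-j|=s'_{i\lor j}$ and at every ramping reset, where $s'$ drops from about $\phi s$ back to $s$. You must show that these boundary regions contribute $o\bigl(\sum W_n^2\bigr)$ under only $\sum_k|\gamma_k|<\infty$; this is exactly what the paper's blocking decomposition together with Lemmas \ref{lem:lastBlock} and \ref{lem:varExact} accomplishes, and it is not automatic for a time-varying (non-lag-window) kernel. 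Second, in the bias for $\psi\ge\theta$ you replace $w_{n,k}$ by $1-k^q/t_n^q$, which ignores the partial-inclusion effect from indices $i$ with $s'_i<k$: its contribution is of the form $\sum_k(\eta_k/n)\bigl(1-k^q/t_n^q\bigr)\gamma_k$ and must be shown to be $o(n^{-q\theta})$ uniformly over the growing range of lags (the paper does this with Kronecker's lemma under $u_q<\infty$), not merely that each fixed lag is eventually included. None of these gaps looks fatal to your strategy, but as written the variance argument does not go through.
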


It follows from Theorem \ref{thm:mse} that $\psi = \theta = 1/(1+2q)$ optimize the order of $\MSE(\hat{\sigma}_n^2)$ for each $q$; see Remark \ref{rmk:mse}.
For the examples in Table \ref{tab:properties}, $\mathcal{V}_{1/3,\Psi,1/3,\Theta,1,2} = 2.5 \Psi^3/\Theta^2 - 5.6 \Psi^2/\Theta + 4.5 \Psi$,
$\mathcal{V}_{1/3,\Psi,1/3,\Theta,1,1} = (2/3) \Psi^3/\Theta^2 - 2.4 \Psi^2/\Theta + 3 \Psi$, and
$\mathcal{V}_{1/5,\Psi,1/5,\Theta,3,1} = (5/21) \Psi^7/\Theta^6 - (10/9) \Psi^4/\Theta^3 + (10/3) \Psi$.

\begin{corollary}[$\mathcal{L}^2$-optimal parameters] \label{coro:mse}
	Define $\kappa_q = |v_q|/\sigma^2$.
	Suppose that the conditions in Theorem \ref{thm:mse}(a) and (b) hold.
	If $\psi = \theta = 1/(1+2q)$, the $\mathcal{L}^2$-optimal $\Psi$ is
	\[
	\Psi_\star = \left\{
	\begin{array}{ll}
		\left\{ \frac{(\phi + 1) (2 q + 1)}{2q (q + 1)}
		- \frac{4 (\phi^{q + 2} -1) (2 q + 1)}{(\phi - 1) q (q + 1) (q + 2) (3 q + 2)}
		+ \frac{(\phi^{2 q + 2} - 1)}{2 (\phi - 1) q (q + 1) (2 q + 1)} \right\}^{-\frac{1}{1+2q}} \kappa_q^{\frac{2}{1+2q}}, & \phi > 1; \\
		\left\{ \frac{2 q + 1}{q (q + 1)}
		- \frac{4 (2 q + 1)}{q (q + 1) (3 q + 2)}
		+ \frac{1}{q(2 q + 1)} \right\}^{-\frac{1}{1+2q}} \kappa_q^{\frac{2}{1+2q}}, & \phi = 1. \\
	\end{array}
	\right.
	\]
	In addition, if $\Theta = \rho \Psi_\star$ is allowed for any $\rho \in \R^+$, then the $\mathcal{L}^2$-optimal $\Theta$ is
	\[
	\Theta_\star = \left\{
	\begin{array}{ll}
		\left\{ \frac{(q+2) (3q+2) (\phi^{2q+2}-1)}{4 (2q+1)^2 (\phi^{q+2}-1)}
		+ \frac{\Psi_\star^{-2q-1} \kappa_q^2 (\phi-1) (q+1) (q+2) (3q+2)}{4 (2q+1) (\phi^{q+2}-1)} \right\}^{\frac{1}{q}} \Psi_\star, & \phi > 1; \\
		\left\{ \frac{(q+1) (3q+2)}{2 (2q+1)^2}
		+ \frac{\Psi_\star^{-2q-1} \kappa_q^2 (q+1) (3q+2)}{4 (2q+1)} \right\}^{\frac{1}{q}} \Psi_\star, & \phi = 1. \\
	\end{array}
	\right.
	\]
	For $O(1)$-space update, the $\mathcal{L}^2$-optimal $\phi$ is $\phi_\star = 2$.
	Otherwise, the $\mathcal{L}^2$-optimal $\phi$ is $\phi_\star = 1$.
\end{corollary}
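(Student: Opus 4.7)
The corollary is a three-stage optimization built on Theorem \ref{thm:mse}: first evaluate the limiting AMSE at $\psi = \theta = 1/(1+2q)$, then minimize sequentially over $\Psi$, $\Theta$, and $\phi$. The first two stages are standard first-order-condition calculus; only the $\phi$-stage requires any real argument.

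\textbf{Step 1: Simplifying at the rate-optimal exponents.} With $\psi = \theta = 1/(1+2q)$, the condition $\max\{2q(\psi-\theta),0\} = 0$ in Theorem \ref{thm:mse}(a) gives $\Var(\hat{\sigma}_n^2) \sim n^{-2q/(1+2q)} \mathcal{V}_{\psi,\Psi,\theta,\Theta,q,\phi} \sigma^4$, while Theorem \ref{thm:mse}(b) gives $\Bias(\hat{\sigma}_n^2) \sim -\Theta^{-q} n^{-q/(1+2q)} v_q$. Variance and squared bias share the common rate $n^{-2q/(1+2q)}$, so after scaling the AMSE collapses to $\mathcal{M}(\Psi,\Theta;\phi) := \mathcal{V}_{\psi,\Psi,\theta,\Theta,q,\phi}\sigma^4 + \Theta^{-2q}v_q^2$. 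Substituting the identities $\psi+1 = 2(q+1)/(2q+1)$, $\psi q + \psi + 1 = (3q+2)/(2q+1)$, and $2\psi q + \psi + 1 = 2$ into $\mathcal{V}$ puts $\mathcal{M}$ in the canonical form $A(\phi)\Psi\sigma^4 - B(\phi)\Psi^{q+1}\Theta^{-q}\sigma^4 + C(\phi)\Psi^{2q+1}\Theta^{-2q}\sigma^4 + \Theta^{-2q}v_q^2$, where $A,B,C$ are read off term-by-term from $\mathcal{V}$ in the $\phi > 1$ and $\phi = 1$ branches.

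\textbf{Step 2: Optimizing $\Psi$, then $\Theta$.} Under the default coupling $\Theta = \Psi$ underlying the first claim, the variance collapses to $(A - B + C)\Psi\sigma^4$, so $\mathcal{M}$ becomes a convex univariate function of $\Psi$ whose first-order condition $(A-B+C)\sigma^4 = 2q\Psi^{-2q-1}v_q^2$ rearranges via $\kappa_q^2 = v_q^2/\sigma^4$ to $\Psi_\star = \{(A-B+C)/(2q)\}^{-1/(2q+1)}\kappa_q^{2/(2q+1)}$; expanding $A,B,C$ reproduces the stated formula. Next, freeing $\Theta$ with $\Psi = \Psi_\star$ fixed, the equation $\partial\mathcal{M}/\partial\Theta = 0$ after multiplying by $\Theta^{2q+1}/q$ reads $B\Psi_\star^{q+1}\Theta^q\sigma^4 - 2C\Psi_\star^{2q+1}\sigma^4 - 2v_q^2 = 0$, so $\Theta_\star^q = (2C/B)\Psi_\star^q + (2\kappa_q^2/B)\Psi_\star^{-q-1}$; dividing by $\Psi_\star^q$ and plugging in $B,C$ yields both the $\phi > 1$ and $\phi = 1$ formulas, with the limit $(\phi^{k+1}-1)/(\phi-1) \to k+1$ confirming the two branches match continuously.

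\textbf{Step 3: Optimizing $\phi$ (the main obstacle).} Substituting $(\Psi_\star(\phi), \Theta_\star(\phi))$ into $\mathcal{M}$ yields a scalar function $\mathcal{M}^*(\phi)$ on $[1,\infty)$, and the envelope theorem gives $d\mathcal{M}^*/d\phi = (\partial\mathcal{M}/\partial\phi)|_{\Psi_\star,\Theta_\star}$. The plan is to show this partial is strictly positive for $\phi > 1$, forcing $\mathcal{M}^*$ to be strictly increasing, so the unconstrained minimum sits at $\phi = 1$ and, restricted by the $O(1)$-space condition $\phi \ge 2$ of Proposition \ref{prop:space}, the minimum sits at $\phi = 2$. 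The difficulty is that $\partial A/\partial\phi$, $\partial B/\partial\phi$, and $\partial C/\partial\phi$ are built from $(\phi^{k+1}-1)/(\phi-1) = 1 + \phi + \cdots + \phi^k$ and its derivative, and the three contributions compete in sign (the first and third push positive while the middle pushes negative). I intend to substitute the explicit $\Psi_\star, \Theta_\star$ to cancel the sign-ambiguous pieces and reduce the positivity check to a polynomial inequality in $\phi$; if a fully closed-form argument resists, a pragmatic backup is to verify the inequality for the integer values $q = 1, 2, 3$ actually used in the paper, which is a one-variable sign check.
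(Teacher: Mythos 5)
Your Steps 1 and 2 are exactly the paper's route: evaluate the standardized AMSE at $\psi=\theta=1/(1+2q)$ (where all indicators in $\mathcal{V}$ are active and the variance and squared bias share the rate $n^{-2q/(1+2q)}$), first minimize over $\Psi$ under the coupling $\Theta=\Psi$, then free $\Theta=\rho\Psi_\star$ with $\Psi_\star$ held fixed and solve the first-order condition in $\rho$. Your constants $A,B,C$ and the resulting identities $\Psi_\star^{2q+1}=2q\kappa_q^2/(A-B+C)$ and $\rho_\star^q=2C/B+(2\kappa_q^2/B)\Psi_\star^{-2q-1}$ reproduce the stated formulas, so that part is fine and matches the paper's proof essentially verbatim.

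The weak point is Step 3. The envelope theorem does not apply as you invoke it: $(\Psi_\star(\phi),\Theta_\star(\phi))$ is a \emph{sequential} optimum, not a joint one. Only $\partial\mathcal{M}/\partial\Theta=0$ holds there; $\partial\mathcal{M}/\partial\Psi$ evaluated at $(\Psi_\star,\Theta_\star)$ is generally nonzero because $\Psi_\star$ was chosen optimal under the constraint $\Theta=\Psi$, not at $\Theta=\Theta_\star$. Hence $\frac{d}{d\phi}\mathcal{M}^*(\phi)=\partial_\phi\mathcal{M}+(\partial_\Psi\mathcal{M})\,\frac{d\Psi_\star}{d\phi}$, and the second term cannot be discarded, so showing $\partial_\phi\mathcal{M}>0$ alone does not establish monotonicity of $\mathcal{M}^*$. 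The paper avoids this by simply substituting $\Psi_\star(\phi)$ and $\Theta_\star(\phi)$ into the standardized AMSE and checking directly that the resulting expression in $\phi$ is increasing on $(1,\infty)$ (with the $\phi=1$ branch as its limit), after which Proposition \ref{prop:space} pins $\phi_\star=2$ under the $O(1)$-space constraint; you should do the same, i.e.\ a direct monotonicity check of the profiled objective rather than an envelope shortcut. Also note that your fallback of verifying the inequality only for $q=1,2,3$ would not prove the corollary as stated, since it claims optimality of $\phi_\star$ for every $q\in\Z^+$ (the paper's numerical comparisons use $q=1,3$, but the corollary itself is general).
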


Corollary \ref{coro:mse} allows us to compare the optimal mean squared errors of \hyperref[eq:LASER]{$\LASER(q,\phi)$} for different $q$ and $\phi$ with existing offline and online estimators in Table \ref{tab:properties}.
It confirms that the structural problem of existing online estimators can be solved by our framework.

\begin{remark} \label{rmk:mse}
	When $1/(1+2q) = \psi < \theta$, the order of $\MSE(\hat{\sigma}_n^2)$ is also optimized.
	However, the window in Definition \ref{def:winLASER} becomes rectangular asymptotically.
	Consequently, there is no known expression for the leading-order term of $\MSE(\hat{\sigma}_n^2)$ and the $\mathcal{L}^2$-optimal parameters cannot be found \citep{hoc}.
\end{remark}

\subsection{Practical Enhancements} \label{sec:practical-enhancements}

\begin{algo}[$\LASER(1,1)$] \label{algo:las-cdn}
	Initialize $\mathcal{C}_1=\{1, 0, 1, X_1^2, X_1, \{0, \ldots, 0\}_{b=0,1}, \{X_1\}\}$.
	At time $n\in\Z^+$, store $\mathcal{C}_n = \{n, s_n, t_n, Q_n, \bar{X}_n, \{K_{n,b}, R_{n,b}, k_{n,b}, r_{n,b}, U_{n,b}, V_{n,b}\}_{b=0,1}, \{X_{n-b}\}_{b=0,\ldots,s_n} \}$.
	At time $n+1$, update $\mathcal{C}_n$ to $\mathcal{C}_{n+1}$ by:
	
	\noindent
	\begin{minipage}[t]{19pc}
		\begin{enumerate}[label=(\alph*)]
			\item $s_{n+1} = \min\{ \lfloor \Psi (n+1)^\psi \rfloor, n\}$;
			\item $t_{n+1} = \min\{ \lceil \Theta (n+1)^\theta \rceil, n+1\}$;
			\item $Q_{n+1} = Q_n +X_{n+1}^2$;
			\item $\bar{X}_{n+1} = (n\bar{X}_n +X_{n+1})/(n+1)$;
			\item $K_{n+1,0} = K_{n,0} +X_n -X_{n-s_n} \I_{s_{n+1}=s_n}$;
			\item $K_{n+1,1} = K_{n,1} +K_{n+1,0} -s_n X_{n-s_n} \I_{s_{n+1}=s_n}$;
		\end{enumerate}
	\end{minipage}
	\begin{minipage}[t]{16pc}
		\vspace{0pt}
		\begin{enumerate}
			\item[(g)] for each $b = 0,1$,
			\begin{enumerate}[label=(\roman*), leftmargin=24pt]
				\item $R_{n+1,b} = R_{n,b} +X_{n+1} K_{n+1,b}$;
				\item $k_{n+1,b} = k_{n,b} +s_{n+1}^b \I_{s_{n+1}=1+s_n}$;
				\item $r_{n+1,b} = r_{n,b} +k_{n+1,b}$;
				\item $U_{n+1,b} = U_{n,b} +k_{n+1,b} X_{n+1}$;
				\item $V_{n+1,b} = V_{n,b} +K_{n+1,b}$.
			\end{enumerate}
		\end{enumerate}
	\end{minipage}
	
	\[
	\text{Output:} \quad
	\hat{\sigma}_{n+1,\LASER(1,1)}^2
	= \frac{Q_{n+1} +2R_{n+1}^* +(2r_{n+1}^*-n-1)\bar{X}_{n+1}^2 -2\bar{X}_{n+1}(U_{n+1}^*+V_{n+1}^*)}{n+1},
	\]
	where $D_{n+1}^* = D_{n+1,0}-D_{n+1,1}/t_{n+1}$ for $D \in \{R,r,U,V\}$, e.g., $r_{n+1}^* = r_{n+1,0} -r_{n+1,1}/t_{n+1}$.
\end{algo}

Algorithm \ref{algo:las-cdn} shows that our recommended estimator can be updated in $O(1)$ time.
To further address users' needs, we enhance our proposal with mini-batch and automatic updates.

In some online problems, users are not interested in every estimate but estimates at time $n_1, n_2, \ldots$ instead.
However, none of the existing works address this practical need.
In light of it, we provide mini-batch update, which consists of two algorithmic modifications:

\begin{enumerate}[label=(\alph*)]
	\item removing redundant operations: we take Algorithm \ref{algo:las-cdn} as an example.
	When \hyperref[eq:LASER]{$\LASER(1,1)$} is updated at time $n_{j+1}$ from $n_j$, some intermediate statistics need to be computed at all $n_j+1, n_j+2, \ldots, n_{j+1}$.
	For example, $K_{i,0}$ should be updated by
	\[
	K_{i,0} = K_{i-1,0} +X_{i-1} -X_{i-s_i} \I_{s_i=s_{i-1}} \qquad \text{for} \qquad i=n_j+1, n_j+2, \ldots, n_{j+1}
	\]
	as they are used to compute other statistics, e.g., $R_{n_{j+1},0}$.
	However, other statistics that are only related to the output can be computed at time $n_{j+1}$ directly.
	\item applying vectorized operations: computation time can be further reduced through vectorization; see, e.g., \citet{wickham2019r}.
	While offline estimators can also be vectorized, online estimators have an edge that each update is computed from $n_j$ to $n_{j+1}$ only.
	In contrast, offline estimators may need to compute an update from $1$ to $n_{j+1}$.
\end{enumerate}

Another enhancement that we provide is automatic update.
By Corollary \ref{coro:mse}, the $\mathcal{L}^2$-optimal parameters depend on $\kappa_q = |v_q|/\sigma^2$ for some $q \in \Z^+$.
To choose $q$ accounting for the empirical serial dependence, we recommend $q=1$
($q=3$) for strongly (weakly) auto-correlated data.
Since an online estimate of $\sigma^2$ is available from the last iteration, it remains to handle $v_q = \sum_{k \in \Z} |k|^q \gamma_k$.
Classical methods such as parametric plug-in \citep{andrews1991kernel} are computationally inefficient.
While we can apply them on a pilot sample \citep{wu2009recursive}, it may be shortsighted in some cases.
To fully utilize the increasing sample size in online problems, a natural solution is to also estimate $v_q$ online.
Consider
\[
\hat{v}_{q,n} = \hat{v}_{q,n}(W)
= \frac{1}{n} \sum_{i=1}^n \sum_{j=1}^n W_n(i, j) |i-j|^q (X_i -\bar{X}_n) (X_j -\bar{X}_n).
\]
Since $\hat{v}_{q,n}$ has a similar form as \hyperref[eq:genClass]{$\hat{\sigma}^2_n$}, our framework suggests
\[
\hat{v}_{q,n,\LASER(p,\phi)}
= \hat{v}_{q,n}(W_{\LASER(p,\phi)}),
\]
where $W_{\LASER(p,\phi)}$ takes the form in \cref{eq:LASER}, $q$ and $\phi$ are inherited from \hyperref[eq:LASER]{$\hat{\sigma}_{n,\LASER(q,\phi)}^2$}, and $p$ is recommended to be 1.
By Proposition \ref{prop:on}, $\hat{v}_{q,n,\LASER(p,\phi)}$ can be updated in the same time and space complexities as \hyperref[eq:LASER]{$\hat{\sigma}_{n,\LASER(q,\phi)}^2$}.
Updating \hyperref[eq:LASER]{$\hat{\sigma}_{n,\LASER(q,\phi)}^2$} becomes automatic in the sense that users only need to supply the incoming observations,
but the smoothing parameters are optimally selected and the computational properties are preserved.
We implement this idea in our R-package and compare it with existing methods below.

\begin{example}[Smoothing parameter selection] \label{eg:auto}
	Consider the best existing work $\PSR$, and \hyperref[eq:LASER]{$\LASER(1,1)$} with different smoothing parameter selectors.
	
	\begin{enumerate}[label=(\alph*)]
		\item Best existing: $\PSR$ with the automatic update in \citet{rtacm} is included.
		\item Pilot: $\kappa_1$ is estimated with the first $500$ observations.
		Bartlett kernel estimators with asymptotically rate-optimal bandwidths $\lceil n^{1/5} \rceil$ (for $v_1$) and $\lceil n^{1/3} \rceil$ (for $\sigma^2$) are used.
		\item Auto: $\kappa_1$ is updated with the automatic procedure in this paper.
		\item Oracle: the theoretical value of $\kappa_1$, which is unknown in practice, is used.
	\end{enumerate}
	
	Under the bilinear model in \cref{eq:bilinear}, we obtain the trajectories of estimates of a typical realization with $10^5$ observations.
	We also compute the efficiency gained (in terms of $\sigma^2/\MSE(\cdot)$) for $m=500$ (mini-batch size) based on $1000$ replications.
	Figure \ref{fig:update} shows that the auto trajectory is very close to that of the oracle.
	Meanwhile, the pilot trajectory is obviously off from the true value.
	This is because realizations of \cref{eq:bilinear} are strongly autocorrelated, so $500$ observations and asymptotically rate-optimal bandwidths may be insufficient to obtain a good estimate of $\kappa_1$.
	In practice, a sufficient pilot sample size is unknown so our automatic update is advantageous compared with existing works.
\end{example}

\begin{figure}[!t]
	\centering
	\includegraphics[width=\linewidth]{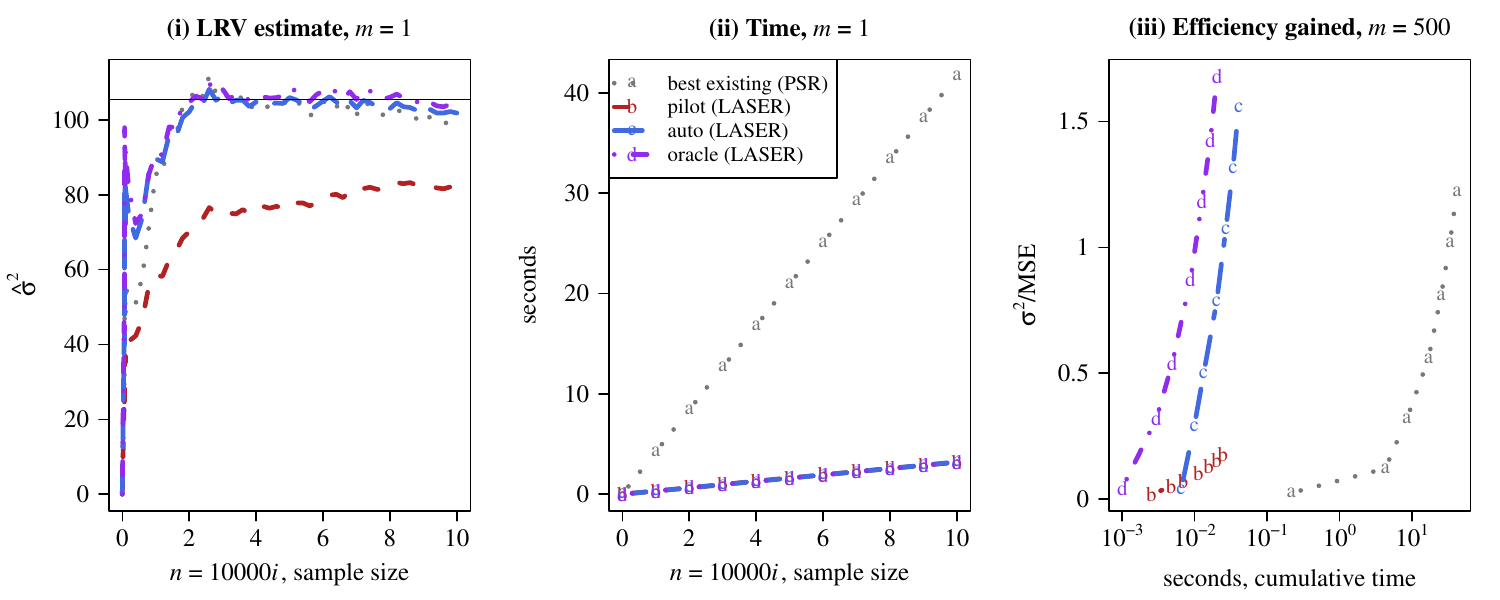}
	\caption{Estimation of $\sigma^2$ using different smoothing parameter selectors: (a) best existing (dotted gray); (b) pilot (dashed red); (c) auto (longdash blue); (d) oracle (dotdash purple).
		The horizontal line in plot (i) is the true value of $\sigma^2$ of \cref{eq:bilinear}.
	}
	\label{fig:update}
\end{figure}

\section{Beyond Variance Estimation} \label{sec:beyond-variance-estimation}

On top of efficient variance estimation, we can use the framework in \cref{sec:general-framework} to directly study online inference procedures.
In \cref{sec:online-quantile-regression}, we demonstrate how to improve an online quantile regression method in a principle-driven way.
In \cref{sec:online-change-point-detection}, we complement existing online change point detectors in time series by changing the standard approach in the field.
In \cref{sec:online-convergence-diagnosis,sec:online-optimization}, we illustrate the strength and flexibility of our proposals in Markov chain Monte Carlo convergence diagnosis and
stochastic approximation.

\subsection{Online Quantile Regression} \label{sec:online-quantile-regression}

In quantile regression, a classical estimator is $\tilde\beta_{\tau,n} = \argmin_{\beta \in \R^{d+1}} \sum_{i=1}^n \rho_\tau(y_i -x_i^\T \beta)$, where $\rho_\tau(x) = x(\tau -\I_{x \le 0})$ is the check function, $y_i$ is the response, $x_i^\T = (1, x_{i1}, \ldots, x_{id})$ is the covariate, and $\tau$ is the quantile level.
To handle a memory constraint of size $m$ in a distributed or online setting, \citet{chen2019rindicator} proposed a linear estimator that achieves the same asymptotic efficiency as $\tilde\beta_{\tau,n}$.
We review their online setting, where observations are divided into intervals by their indices.
Define $b_0 = 0$, $c_0 = -\infty$, and $c_{2k-1} = 2^{k-1} +1/2$ and $c_{2k} = 2^{k-1} +3/4$ for $k \ge 1$.
Let the first index in the $l$-th interval be $b_l = \lfloor m^{c_{l-1}} \rfloor +1$,
where $\{c_l\}$ was chosen such that $b_{l+1}-b_l$ is approximately $(b_{l-1}-b_{l-2})^2$ \citep{chen2019rindicator}.
When the $n$-th observation is in the $l$-th interval, their proposal is
\begin{equation} \label{eq:quant-original}
	\check\beta_{\tau,n} = \left( \sum_{i=b_{l-1}}^{b_l-1} \check{V}_{i,l-1} +\sum_{i=b_l}^n \check{V}_{i,l} \right)^{-1} \left( \sum_{i=b_{l-1}}^{b_l-1} \check{U}_{i,l-1} +\sum_{i=b_l}^n \check{U}_{i,l} \right),
\end{equation}
where
\begin{align*}
	\check{U}_{i,j} &= x_i \left\{ H\left(\frac{y_i -x_i^\T \check\beta_{\tau,b_j-1}}{h_j}\right) +\tau -1 +\frac{y_i}{h_j} H'\left(\frac{y_i -x_i^\T \check\beta_{\tau,b_j-1}}{h_j}\right) \right\}, \\
	\check{V}_{i,j} &= \frac{x_i x_i^\intercal}{h_j} H'\left(\frac{y_i -x_i^\T \check\beta_{\tau,b_j-1}}{h_j}\right),
\end{align*}
$H(\cdot)$ is a smooth approximation of $\I_{\cdot > 0}$,
$H'(\cdot)$ is the gradient of $H(\cdot)$,
$h_1 = \sqrt{d/m}$ and $h_j = \sqrt{d/m^{c_{j-1}}}$ for $j \ge 2$ are bandwidths, and
$\check\beta_{\tau,0}$ is the classical estimator computed with an initial sample of size $m$.

The linear estimator \hyperref[eq:quant-original]{$\check\beta_{\tau,n}$} is innovative but we can refine it following the ideas in this paper.
First, \hyperref[eq:quant-original]{$\check\beta_{\tau,n}$} mainly uses observations in the $(l-1)$-th and $l$-th intervals.
Earlier observations are not treated alike, which suggests that we can improve the finite-sample statistical properties by following Principle \hyperref[prin:LASER]{A}.
Second, \hyperref[eq:quant-original]{$\check\beta_{\tau,n}$} only utilizes the memory constraint of size $m$ during initialization.
We can reduce computational time further by considering mini-batch updates.
Therefore, we modify \cref{eq:quant-original} into
\begin{equation} \label{eq:quant-modified}
	\hat\beta_{\tau,n} = \left( \sum_{j=1}^{k-1} \sum_{i=a_j}^{a_{j+1}-1} \hat{V}_{i,j} +\sum_{i=a_k}^n \hat{V}_{i,k} \right)^{-1} \left( \sum_{j=1}^{k-1} \sum_{i=a_j}^{a_{j+1}-1} \hat{U}_{i,j} +\sum_{i=a_k}^n \hat{U}_{i,k} \right),
\end{equation}
where
\begin{align*}
	\hat{U}_{i,j} &= x_i \left\{ H\left(\frac{y_i -x_i^\T \hat\beta_{\tau,a_j-1}}{h_j}\right) +\tau -1 +\frac{y_i}{h_j} H'\left(\frac{y_i -x_i^\T \hat\beta_{\tau,a_j-1}}{h_j}\right) \right\}, \\
	\hat{V}_{i,j} &= \frac{x_i x_i^\intercal}{h_j} H'\left(\frac{y_i -x_i^\T \hat\beta_{\tau,a_j-1}}{h_j}\right),
\end{align*}
$k = \lfloor n/m \rfloor$ is the current number of intervals or mini-batches,
$a_0 = 0$, $a_j = \lfloor (j-1)m \rfloor +1$ is the first index in the $j$-th interval,
$h_j = \sqrt{d/(jm)}$ for $j \ge 1$ are bandwidths, and
$\hat\beta_{\tau,0} = \check\beta_{\tau,0}$ is the same initial estimator.
To compare the performance, we follow \citet{chen2019rindicator} and generate data from $y_i = x_i^\T \beta +\varepsilon_i$, where $\beta$ is a vector of ones, $d=10$,
\begin{itemize}
	\item independent case: $(x_{i1}, \ldots, x_{id})^\T$ follows a multivariate uniform distribution on $[0,1]^d$ with $\Corr(x_{ij}, x_{ik}) = 0.5^{|j-k|}$, and $\varepsilon_i \sim \Normal(0,1)$; or
	\item dependent case: $(x_{i1}, \ldots, x_{id}, \varepsilon_i)^\T$ follows a first-order vector autoregressive model with mean $(0, \ldots, 0)^\T$, coefficient matrix $\diag(0.5, \ldots, 0.5)$ and Gaussian noise covariance matrix $\Sigma_{jk} = 0.5^{|j-k|} \I_{j \ne d+1} \I_{k \ne d+1} +\I_{j = d+1} \I_{k = d+1}$ for $1 \le j, k \le d+1$.
\end{itemize}
The independent case exactly replicates a simulation in \citet{chen2019rindicator}, while the dependent case allows further comparison of different variance estimators.
Thus, we use the same $H(x) = \{1/2 +(15/16)(x -2x^3/3 +x^5/5)\} \I_{|x| < 1} +\I_{x \ge 1}$, $m=500$ and $\tau=0.1$.
Consider

\begin{enumerate}[label=(\alph*)]
	\item original and \hyperref[eq:welford]{$\WFD$}: the original method in \citet{chen2019rindicator} is used (see \cref{eq:quant-original}) with Welford's algorithm.
	\item modified and $\PSR$: (a) is modified by replacing $\check\beta_{\tau,n}$ in \cref{eq:quant-original} with $\hat\beta_{\tau,n}$ in \cref{eq:quant-modified}.
	$\PSR$ is used to account for possible serial dependence.
	\item modified and \hyperref[eq:LASER]{$\LASER(1,1)$}: same as (b) but \hyperref[eq:LASER]{$\LASER(1,1)$} is used instead.
\end{enumerate}

We compare point estimates and $95\%$ confidence intervals for the summed coefficient $J^\T \beta_\tau$ based on $1000$ replications,
where $J$ is a vector of ones and $\beta_\tau$ is computed by shifting $\varepsilon_i$ such that $\pr(\varepsilon_i \le 0 \mid x_i) = \tau$;
see the simulations in \citet{chen2019rindicator}.
For $\tau = 0.1$, the intercept $\beta_{\tau 1}$ in the independent and dependent cases are approximately $-0.28$ and $-0.48$, respectively.
Figure \ref{fig:quant} reports the results in the dependent case.
Despite \hyperref[eq:quant-original]{$\check\beta_{\tau,n}$} is asymptotically efficient, our modified estimator \hyperref[eq:quant-modified]{$\hat\beta_{\tau,n}$} outperforms it in a moderately large sample ($n \le 10^5$).
Our long-run variance estimator also allows practical autocorrelation-robust inference, whereas the best existing method $\PSR$ does not adapt to mini-batch inputs.
The results in the independent case are similar and thus deferred to the \Supp.

\begin{figure}[!t]
	\centering
	\includegraphics[width=\linewidth]{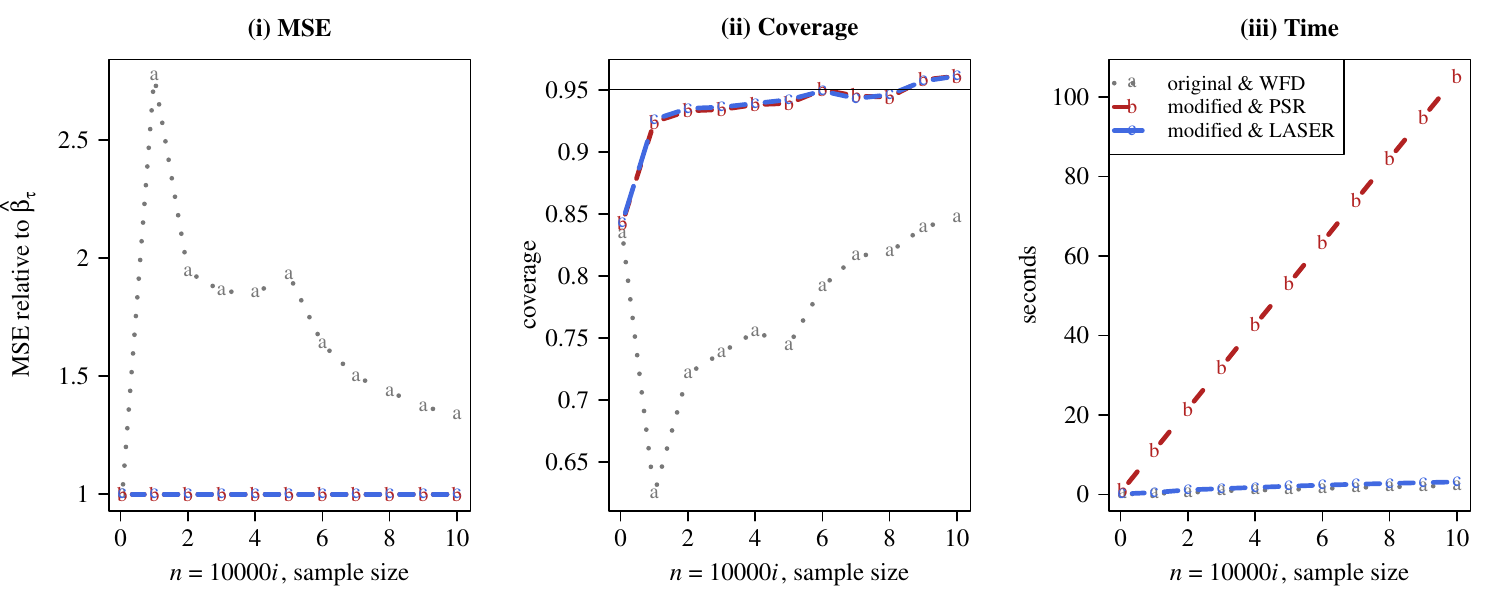}
	\caption{Online quantile regression under dependence using original and modified inference procedures at $5\%$ nominal size: 
		(a) original and \hyperref[eq:welford]{$\WFD$} (dotted gray); 
		(b) modified and $\PSR$ (dashed red); 
		(c) modified and \hyperref[eq:LASER]{$\LASER(1,1)$} (longdash blue).
		The plot in (i) shows the value of $\MSE(\cdot)/\MSE(\hat\beta_{\tau,n})$.}
	\label{fig:quant}
\end{figure}

\subsection{Online Change Point Detection} \label{sec:online-change-point-detection}

Recently, \citet{gosmann2021ocp} developed a new approach for online change point detection in an open-end scenario.
In their outlook, they mentioned that the standard approach in the field estimated the long-run variance with the initial data only,
but updating the estimate was logical particularly for stronger dependent models \citep{gosmann2021ocp}.
We echo their view here.
Consider the bilinear model in \cref{eq:bilinear}.
Let $\mu_i = \E(X_i)$; $k^* \in \Z^+$; and $\nu$ be the initial sample size.
We are interested in testing
\[
H_0: \mu_1 = \cdots = \mu_\nu = \mu_{\nu+1} = \cdots \quad \text{against} \quad
H_1: \mu_1 = \cdots = \mu_{\nu+k^*-1} \neq \mu_{\nu+k^*} = \mu_{\nu+k^*+1} = \cdots.
\]
In this case, the online change monitoring scheme in \citet{gosmann2021ocp} is
\[
\hat{E}_\nu(k) = \nu^{-1/2} \max_{0 \le j \le k-1} (k-j) \left| \bar{X}_{1,\nu+j} -\bar{X}_{\nu+j+1,\nu+k} \right|/\hat{\sigma},
\]
where $\bar{X}_{a,b} = (b-a+1)^{-1} \sum_{i=a}^b X_i$ and $\hat{\sigma}^2$ is a long-run variance estimator.
Following examples in \citet{gosmann2021ocp}, we use the same threshold function $w(t) = (1+t)^{-1}$, nominal size $\alpha=0.05$ and stopping point $n^*=4000$ to monitor $w(k/\nu) \hat{E}_\nu(k)$.
We also simulate $H_1$ by $X_i^{(\delta)} = X_i +\delta \I_{i \ge \nu+k^*}$ with $100$ burn-in and $\nu=400$ initial observations.
However, consider

\begin{enumerate}[label=(\alph*)]
	\item fix: the long-run variance is estimated with the initial data only, which is the standard approach.
	The \hyperref[eq:bart]{$\hat{\sigma}^2_{\nu,\bart}$} implemented in the R-package \texttt{sandwich} \citep{sandwich} without prewhitening and adjustment is used.
	\item offline: the long-run variance estimate is computed as in (a).
	However, the estimate is updated as data arrive.
	For robustness, the first-order difference statistics in \citet{dlrv} instead of the raw data are used.
	In favor of offline update, the smoothing parameter is first selected based on the initial data and then scaled according to \citet{dlrv}; see Remark \ref{rmk:dlrv}.
	\item online: the long-run variance estimate is updated as in (b) but using \hyperref[eq:LASER]{$\LASER(1,1)$}.
\end{enumerate}

We conduct the simulation for $\nu+k^* = 601, 1001, 1401$ and $\delta = 0, 0.5, \ldots, 5$ each with $1000$ replications.
Figure \ref{fig:cp} reports the results for $\nu+k^* = 1401$.
When the long-run variance estimate is updated, the type I error is considerably closer to $5\%$.
Furthermore, the positive predictive value improves substantially and is the best using online with negligible time cost.
The results for $\nu+k^* = 601, 1001$ are similar and so deferred to the \Supp.

\begin{figure}[!t]
	\centering
	\includegraphics[width=\linewidth]{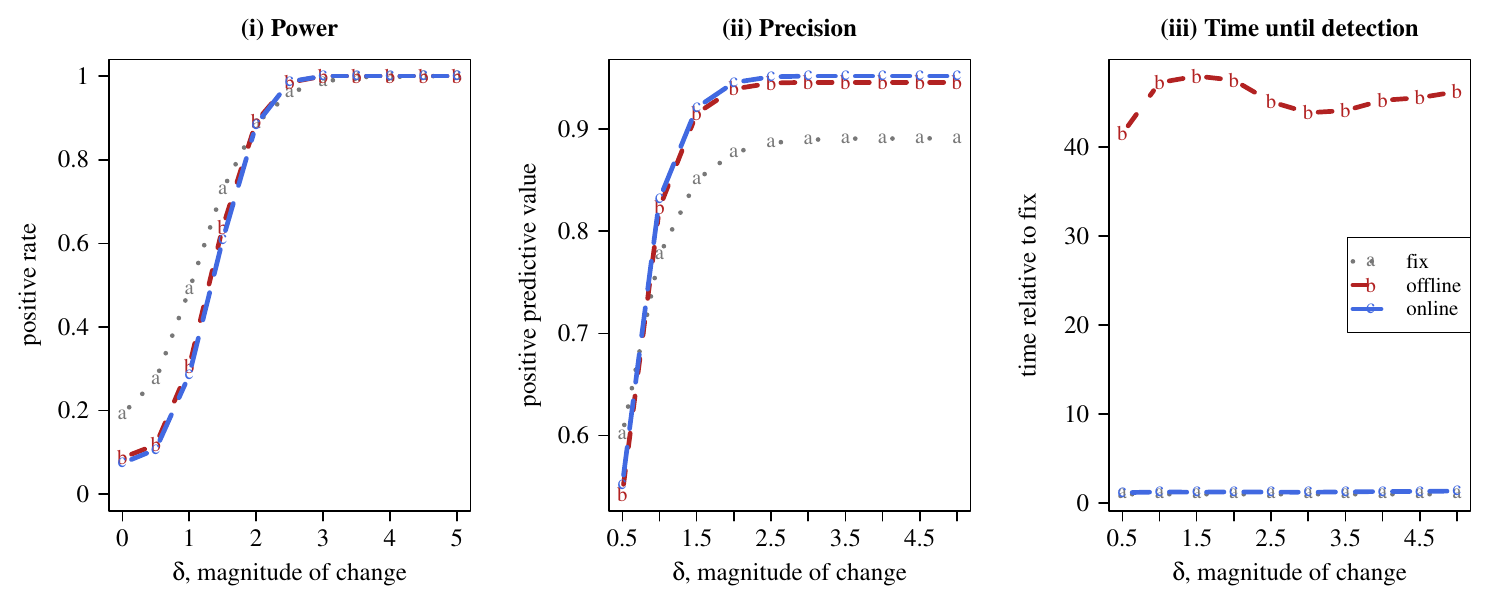}
	\caption{Online change point detection at $5\%$ nominal size using different long-run variance estimation methods: (a) fix (dotted gray); (b) offline (dashed red); (c) online (longdash blue).}
	\label{fig:cp}
\end{figure}

\begin{remark} \label{rmk:dlrv}
	The smoothing parameter selection here is favorable to offline because it reduces the time cost of offline much more than online.
	The scaling factor in \citet{dlrv} due to the use of difference statistics is also optimal to offline only.
	Investigating the scaling factor for online is of interest but beyond the scope here.
\end{remark}

\subsection{Online Convergence Diagnosis} \label{sec:online-convergence-diagnosis}

In Markov chain Monte Carlo methods, a common goal is to estimate $\E_{\zeta \sim \pi}\{h(\zeta)\}$ using $\bar{h}_n = n^{-1} \sum_{i=1}^n h(X_i)$ by generating a Markov chain $\{X_i\}$ that satisfies certain conditions for some distribution $\pi$ and target function $h$ \citep{flegal2010obm}.
However, the terminal sample size $n^*$ for a reasonably precise $\bar{h}_{n^*}$ is unknown a priori.
In light of it, \citet{jones2006fw} proposed the half-width test to terminate a simulation at
\[
n^* = \inf\left\{ n \in \Z^+: z_{1-\alpha/2} \hat{\sigma}_n/\sqrt{n} +p(n) < \epsilon \right\},
\]
where $\alpha \in (0,1)$ is the significance level,
$z_{1-\alpha/2}$ is the $100(1-\alpha/2)\%$ lower quantile of $\Normal(0,1)$,
$p(n)$ is a penalty function for $n$ that is too small, and
$\epsilon > 0$ is the maximum tolerable error.
Since $\bar{h}_n$ can be updated in $O(1)$ space, it is sensible to update its variance estimate also in $O(1)$ space, which we demonstrate with a classical example in \citet{hastings1970mcmc}.
To sample from $\Normal(0,1)$, \citet{hastings1970mcmc} used a Metropolis--Hastings algorithm with a random walk on $[-\delta, \delta]$ as the proposal.
Given a sample $\{ X_i \}_{i=1}^n$ generated with $\delta = 1$, $h(x) = x^2$, $\alpha = 0.05$, and $p(n) = \epsilon \I_{n < 1000}$,
we conduct the half-width test every $m=100$ more observations for $\epsilon=0.12, 0.11, \ldots, 0.02$ and $1000$ replications using different estimators (``R-packages''):

\begin{enumerate}[label=(\alph*)]
	\item \hyperref[eq:obm]{$\obm$} (``\texttt{mcmcse}''): the long-run variance is estimated by the \texttt{mcse} function \citep{mcmcse}.
	$\obm$ with lugsail parameter \texttt{r=1} is used.
	\item $\TSR$ (``\texttt{rTACM}''): the long-run variance is estimated by the \texttt{rTACM} function \citep{rtacm}. The best existing $O(1)$-space (and $O(1)$-time) estimator $\TSR$ is used.
	\item \hyperref[eq:LASER]{$\LASER(1,2)$} (``\texttt{rlaser}''): the long-run variance is estimated by the \texttt{lrv} function.
\end{enumerate}

Unmentioned arguments, such as the smoothing parameter selector, are left as default in all R-functions.
Figure \ref{fig:hastings} shows that the coverage rates using \hyperref[eq:LASER]{$\LASER(1,2)$} are comparable to \hyperref[eq:obm]{$\obm$} and generally better than $\TSR$, which are in line with Table \ref{tab:properties}.
While both $\TSR$ and \hyperref[eq:LASER]{$\LASER(1,2)$} use a constant amount of memory, $\TSR$ is only faster than \hyperref[eq:obm]{$\obm$} for large $n$ because it lacks mini-batch updates.
In contrast, \hyperref[eq:LASER]{$\LASER(1,2)$} is always faster than \hyperref[eq:obm]{$\obm$}.

\begin{figure}[!t]
	\centering
	\includegraphics[width=\linewidth]{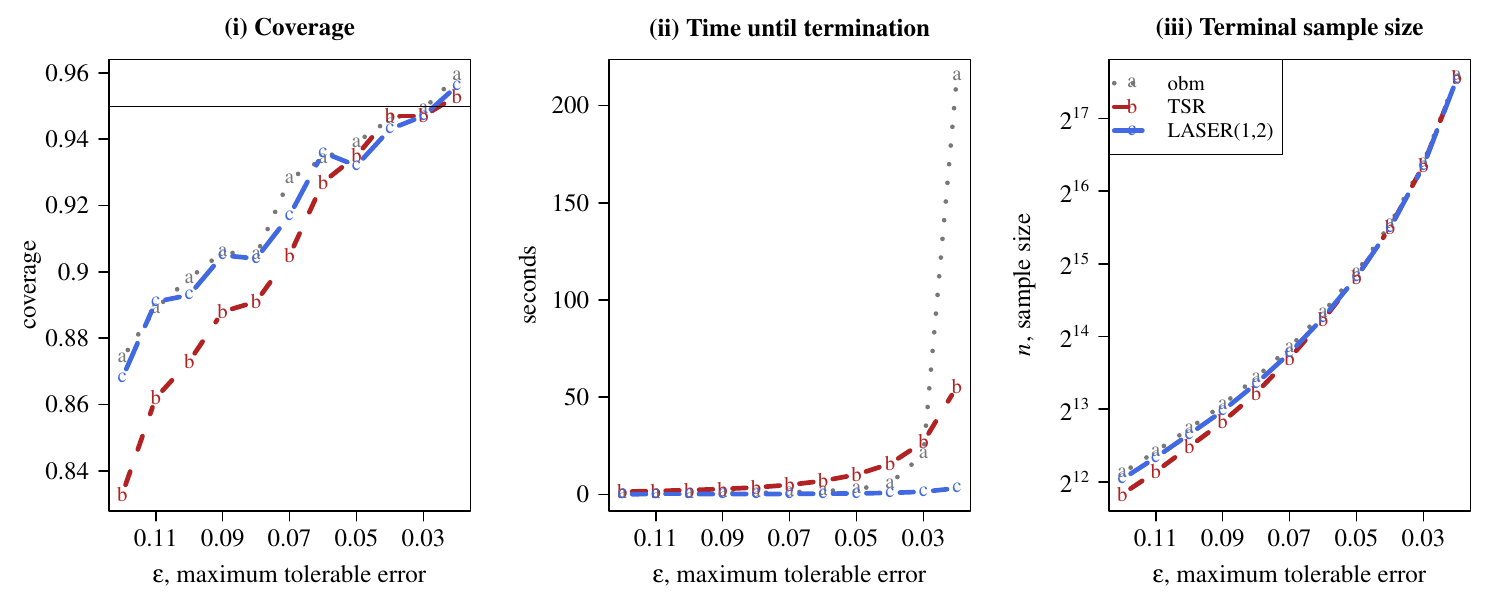}
	\caption{Online $95\%$-half-width test using different long-run variance estimators: (a) \hyperref[eq:obm]{$\obm$} (dotted gray); (b) $\TSR$ (dashed red); (c) \hyperref[eq:LASER]{$\LASER(1,2)$} (longdash blue).}
	\label{fig:hastings}
\end{figure}

\subsection{Online Optimization} \label{sec:online-optimization}

In machine learning, a classical problem is to estimate the model parameter $w$ based on \linebreak
$\min_{w \in \R^d} \E_{\zeta \sim \Pi} \{ h(w, \zeta)\}$,
where $h(w, \zeta)$ is a noisy measurement of the loss \citep{zhu2023sgd}.
The stochastic gradient descent algorithm proceeds with $w_i = w_{i-1} -\eta_i g_{i-1}$, where $w_0$ is the initial point,
$\eta_i$ is the learning rate at the $i$-th step, and
$g_i$ is the gradient of $h(w, \zeta_i)$ with respect to $w$ \citep{zhu2023sgd}.
To quantify the uncertainty after averaging, \citet{zhu2023sgd} modified \citet{wu2009recursive}.
To be specific, let $\eta_i = \eta_0 i^{-\alpha}$ for some $\eta_0 > 0$ and $\alpha \in (1/2, 1)$.
Then, the $k$-th block in Wu's estimator starts at $a_k = \lfloor A k^{2/(1-\alpha)} \rfloor$ for some $A > 0$.

On the other hand, it is certain that $\TSR$ and $\PSR$ dominate Wu's estimator under stationarity \citep{rtacm}.
Our Example \ref{eg:on} further reveals the structural problem of subsample selection rules.
Wondering whether similar results hold for averaged stochastic gradient descent, we consider a logistic regression example.
Let $\{\zeta_i = (x_i, y_i)\}_{i \in \Z^+}$ and $w^*$ denote a sequence of data and the true model parameter,
where $x_i \sim \Normal(0, I_d)$,
$y_i \mid x_i \sim \Ber(\{1+\exp(-x_i^{\T} w^*)\}^{-1})$,
$w^*$ is a $d$-dimensional vector linearly spaced between $-1$ and $1$, and $d=5$.
The loss function is $h(w, x_i, y_i) = (1-y_i) x_i^{\T} w +\ln\{1+\exp(-x_i^{\T} w)\}$.
The mini-batch size $m$ determines the update frequency of the variance estimate.
Following examples in \citet{zhu2023sgd}, we choose $\eta_0 = 0.5$, $\alpha = 0.505$ and $A = 1$ for:

\begin{enumerate}[label=(\alph*)]
	\item Wu's estimator (``\texttt{rTACM}''): $\TSR$ with $\xi=0$ and manual update is used, which is essentially \citeauthor{wu2009recursive}'s \citeyearpar{wu2009recursive} estimator in \citet{zhu2023sgd} as $a_k$ is set to be the same.
	\item $\TSR$ (``\texttt{rTACM}''): $\TSR$ with $\xi=1$ and manual update is used, which is proven to be more efficient than Wu's estimator under stationarity.
	\item online $\LASER$ (``\texttt{rlaser}''): ``$\LASER$'' with $q=1$ and $m=1$ is used.
	While \cref{eq:rampSub} does not apply, blocking with $a_k$ is compatible with the form in \cref{eq:winDecom}.
	In the $k$-th block, the subsampling parameter increases by 1 per iteration until it reaches the block maximum $\lceil 2A (1-\alpha)^{-1} k^{(1+\alpha)/(1-\alpha)} \rceil$, where it remains unchanged until the next block, and the tapering parameter is chosen to be the block maximum scaled according to Corollary \ref{coro:mse}.
	\item mini-batch $\LASER$ (``\texttt{rlaser}''): the estimator in (b) is used except that $m = 10^4$.
\end{enumerate}

\begin{figure}[!t]
	\centering
	\includegraphics[width=\linewidth]{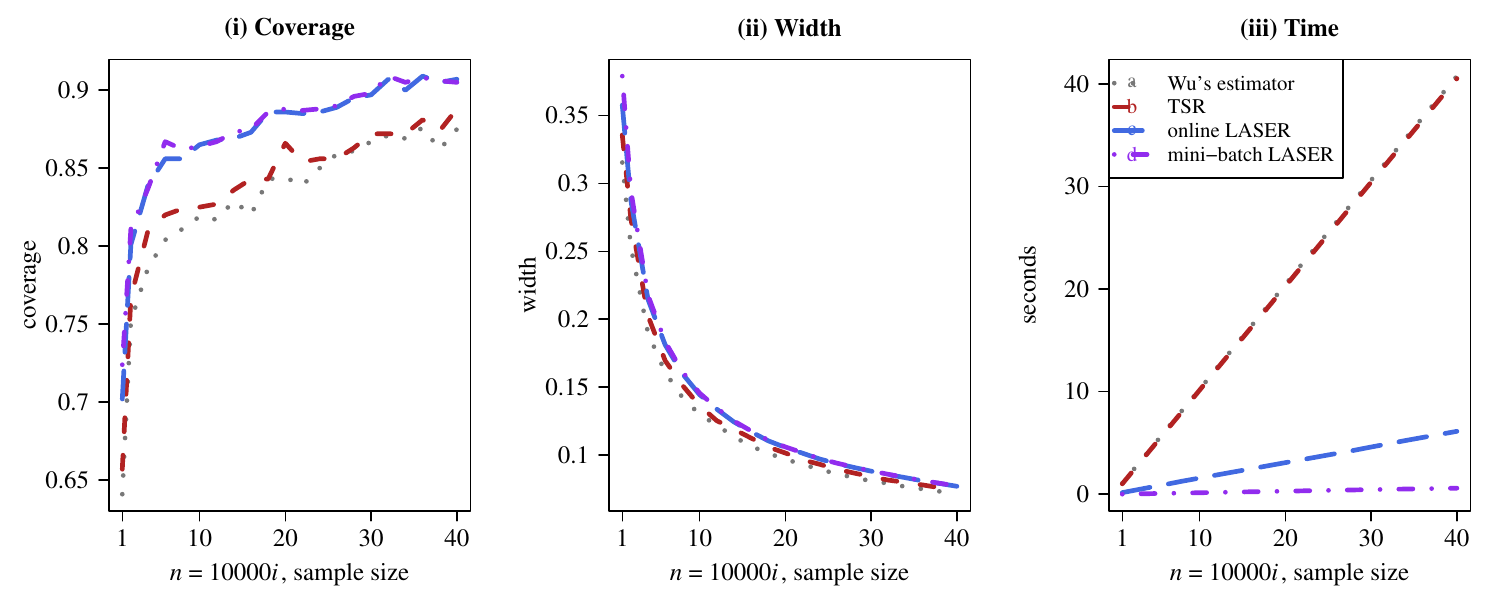}
	\caption{Confidence intervals for the summed stochastic gradient descent estimands using different variance estimators: (a) Wu's estimator (dotted gray); (b) $\TSR$ (dashed red); (c) online $\LASER$ (longdash blue); (d) mini-batch $\LASER$ (dotdash purple).}
	\label{fig:sgd_ci}
\end{figure}

We compare the above in constructing $95\%$ confidence intervals for the summed coefficient $J^{\T} w^*$ with a burn-in stage \citep{chen2020sgd} of size $500$ based on $1000$ replications,
where $J$ is a vector of ones.
In Figure \ref{fig:sgd_ci}, the performance of Wu's estimator is consistent with the studies reported in \citet{zhu2023sgd}.
Furthermore, there are notable improvements according to the theory under stationarity.
Real-time inference is possible using mini-batch updates without loss of statistical efficiency.
These results indicate that our framework can be used as a starting point for new research.
Finally, we remark that these intervals can be used as convergence diagnostics; see \cref{sec:online-convergence-diagnosis}.

\section{Discussion} \label{sec:discussion}

\citet{ljung1983recursive} mentioned that ``many people regard the area (of recursive identification/estimation) to be a `bag of tricks' rather than a theory''.
Indeed, existing online estimators are often found by sacrificing some statistical properties and studied on an algorithm-by-algorithm basis.
Contrary to the first leading approach studied in \citet{wu2009recursive},
we presented a novel framework to study long-run variance estimators in a principle-driven way.
To the best of our knowledge, this is the first attempt to characterize the form of online long-run variance estimators through sufficient conditions.
The study of necessary conditions would be a theoretically important future direction.
On the other hand, the proposed estimator achieved high performance in a wide range of applications.
The proposed principles and ideas also applied to problems beyond variance estimation.
Two examples are the estimation of $v_q$ and online quantile regression in \cref{sec:practical-enhancements,sec:online-quantile-regression}, respectively.
Other problems such as spectral density estimation \citep{xiao2011recursive} and nonparametric regression \citep{huang2014recursive} would be interesting for further investigation.

\section*{Acknowledgments}

The authors would like to thank Zhuohua Shen for his kind review and suggestions.

\section*{Funding}

The second author was supported in part by grants
GRF-14304420, GRF-14306421, and GRF-14307922
provided by the Research Grants Council of HKSAR.

\bibliographystyle{rss}
\bibliography{bib/learning,bib/lrv,bib/mcmc,bib/quant,bib/sgd,bib/ts}
\addcontentsline{toc}{section}{\refname}

\clearpage

\appendix

\section{Algorithms} \label{sec:algorithms}
\subsection{Overview of Notation} \label{sec:algo-overview}

In this subsection, we state and explain the notation that we are going to use throughout \cref{sec:algorithms}.
Most of them will be recalled later when they are used.
We begin with some simple components:
\[
\begin{aligned}
	K_{n,b} &= \sum_{k=1}^{s_n} k^b X_{n-k}, \\
	R_{n,b} &= \sum_{i=1}^n \sum_{k=1}^{s_i} k^b X_i X_{i-k}, \\
	U_{n,b} &= \sum_{i=1}^n \sum_{k=1}^{s_i} k^b X_i, \\
	Q_n &= \sum_{i=1}^n X_i^2,
\end{aligned}
\quad
\begin{aligned}
	k_{n,b} &= \sum_{k=1}^{s_n} k^b, \\
	r_{n,b} &= \sum_{i=1}^n \sum_{k=1}^{s_i} k^b, \\
	V_{n,b} &= \sum_{i=1}^n \sum_{k=1}^{s_i} k^b X_{i-k}, \\
	\bar{X}_n &= \frac{1}{n} \sum_{i=1}^n X_i,
\end{aligned}
\]
where $b \in \N = \{0, 1, \ldots\}$.
Here, the first subscript $n$ indicates the sample size and the second subscript $b$ indicates the exponent in the summation, which is usually related to the characteristic exponent $q$.
We capitalize the variables when they depends on the observations $X_1, \ldots, X_n$.
To lighten the notation in the algorithms, we define a shorthand
\[
R_n^{(a,b,c)} = R_{n,a} -\frac{R_{n,b}}{t_n^c}, \quad
R_n^{(a,b)} = R_n^{(a,b,b)} = R_{n,a} -\frac{R_{n,b}}{t_n^b},
\]
and similarly for $r_n^{(a,b,c)}, U_n^{(a,b,c)}, V_n^{(a,b,c)}$, e.g.,
$U_n^{(a,b,c)} = U_{n,a} -U_{n,b}/t_n^c$.
After we introduce the characteristic exponent $q$, we need the backward finite difference operator $\nabla^{(b)} \cdot$, which is defined by
\[
\nabla^{(1)} f(k) = f(k) -f(k-1) \quad \text{and} \quad
\nabla^{(b)} f(k) = \nabla^{(b-1)} f(k) -\nabla^{(b-1)} f(k-1),
\]
where $b \in \Z^+$ and $f$ is a function that takes an integer input $k$.
Then, we have
\[
d_{k,q}^{(b)} = \nabla^{(b)} k^q \quad \text{and} \quad
D_{n,q}^{(b)} = \sum_{k=1}^{s_n} d_{k,q}^{(b)} X_{n-k},
\]
where $b = 1, \ldots, q$.
Here, the superscript $(b)$ indicates the number of backward finite differences, which is used in $d_{k,q}^{(b)}$ and $D_{k,q}^{(b)}$ only.
When the memory parameter $\phi > 1$, the intended subsampling parameter $s_i$ is ramped up as $s'_i$ defined in \cref*{eq:rampSub}.
The components become
\[
\begin{aligned}
	K'_{n,b} &= \sum_{k=1}^{s'_n} k^b X_{n-k}, \\
	R'_{n,b} &= \sum_{i=1}^n \sum_{k=1}^{s'_i} k^b X_i X_{i-k}, \\
	U'_{n,b} &= \sum_{i=1}^n \sum_{k=1}^{s'_i} k^b X_i,
\end{aligned}
\quad
\begin{aligned}
	k'_{n,b} &= \sum_{k=1}^{s'_n} k^b, \\
	r'_{n,b} &= \sum_{i=1}^n \sum_{k=1}^{s'_i} k^b, \\
	V'_{n,b} &= \sum_{i=1}^n \sum_{k=1}^{s'_i} k^b X_{i-k},
\end{aligned}
\]
where $b \in \N$.
Since the prime symbol is used in the ramped subsampling parameter $s'_i$, we reserve it for the corresponding components.
The shorthand becomes
\[
R_n^{\prime(a,b,c)} = R'_{n,a} -\frac{R'_{n,b}}{t_n^c}, \quad
R_n^{\prime(a,b)} = R_n^{\prime(a,b,b)} = R'_{n,a} -\frac{R'_{n,b}}{t_n^b},
\]
and similarly for $r_n^{\prime(a,b,c)}, U_n^{\prime(a,b,c)}, V_n^{\prime(a,b,c)}$,
e.g., $U_n^{\prime(a,b,c)} = U'_{n,a} -U'_{n,b}/t_n^c$.
To perform $O(1)$-space update when $\phi \ge 2$, we need some ``precalculated'' components (denoted by the double prime symbol):
\begin{align*}
	a_n &= \lceil \phi s_n \rceil, \\
	K''_{n,b} &= \left\{
	\begin{array}{ll}
		K''_{n-1,b} +(a_{n-1} -s'_n)^b X_n, &s'_n \ge a_{n-1} -s_{n-1}\ \text{and}\ s'_n > 0; \\
		0, &s'_n < a_{n-1} -s_{n-1}\ \text{or}\ s'_n = 0, \\
	\end{array}
	\right. \\
	k''_{n,b} &= \left\{
	\begin{array}{ll}
		k''_{n-1,b} +(a_{n-1} -s'_n)^b, &s'_n \ge a_{n-1} -s_{n-1}\ \text{and}\ s'_n > 0; \\
		0, &s'_n < a_{n-1} -s_{n-1}\ \text{or}\ s'_n = 0.
	\end{array}
	\right.
\end{align*}

\subsection{\texorpdfstring{$\LASER(1,1)$: Known Zero-mean}{
		LASER(1,1): Known Zero-mean}} \label{sec:algo-las-zm}

Since the constructed estimator $\LASER(q,\phi)$ is based on a new approach, we shall show the derivation of the algorithm of $\LASER(1,1)$ when $\mu=0$ is known for illustrative purpose first.
The other algorithms can be derived using a similar procedure and will be discussed later.
To begin with, note that the estimator can be written as
\begin{align*}
	\bar{\sigma}^2_{n, \LASER(1,1)}
	&= \frac{1}{n} \sum_{i=1}^n X_i^2 + \frac{2}{n} \sum_{i=1}^n \sum_{k=1}^{s_i} \left( 1 -\frac{k}{t_n} \right) X_i X_{i-k} \\
	&= \frac{1}{n} \left( Q_n +2 R_{n,0} -\frac{2}{t_n} R_{n,1} \right),
\end{align*}
where
\[
Q_n = \sum_{i=1}^n X_i^2, \quad
R_{n,0} = \sum_{i=1}^n \sum_{k=1}^{s_i} X_i X_{i-k} \quad \text{and} \quad
R_{n,1} = \sum_{i=1}^n \sum_{k=1}^{s_i} k X_i X_{i-k}.
\]
The recursive formula for $Q_n$ is trivial so we shall focus on $R_{n,0}$ and $R_{n,1}$.
To lighten notation, write the subsampling parameter at the last iteration as $s = s_{n-1}$.
For $R_{n,0}$, we have
\begin{align*}
	R_{n,0}
	&= \sum_{i=1}^{n-1} \sum_{k=1}^{s_i} X_i X_{i-k} + X_n \sum_{k=1}^{s_n} X_{n-k} \\
	&= R_{n-1,0} + X_n K_{n,0},
\end{align*}
where $K_{n,0} = \sum_{k=1}^{s_n} X_{n-k}$.
Note that $K_{n,0}$ can be updated easily since
\[
K_{n,0} = \left\{
\begin{array}{ll}
	0, &s_n = s = 0; \\
	K_{n-1,0} +X_{n-1} -X_{n-s-1}, &s_n = s > 0; \\
	K_{n-1,0} +X_{n-1}, &s_n = s+1. \\
\end{array}
\right.
\]
Similarly, for $R_{n,1}$, observe that
\begin{align*}
	R_{n,1}
	&= \sum_{i=1}^{n-1} \sum_{k=1}^{s_i} k X_i X_{i-k} + X_n \sum_{k=1}^{s_n} k X_{n-k} \\
	&= R_{n-1,1} + X_n K_{n,1},
\end{align*}
where $K_{n,1} = \sum_{k=1}^{s_n} k X_{n-k}$.
We can update $K_{n,1}$ recursively by
\[
K_{n,1} = \left\{
\begin{array}{ll}
	0, &s_n = s = 0; \\
	K_{n-1,1} +K_{n,0} -s X_{n-s-1}, &s_n = s > 0; \\
	K_{n-1,1} +K_{n,0}, &s_n = s+1. \\
\end{array}
\right.
\]
We summarize the recursive formulas for different components in the order of their updates:

\begin{proposition} \label{prop:las-zm}
	Suppose the subsampling parameter $\{s_i \in \N\}_{i \in \Z^+}$ is a monotonically increasing sequence such that $\sup_{i \in \Z^+} |s_{i+1}-s_i| \le 1$.
	Write its value at the last iteration as $s = s_{n-1}$.
	Then, the following components can be updated in $O(1)$ time:
	\begin{align*}
		K_{n,0} &= \sum_{k=1}^{s_n} X_{n-k}
		= \left\{
		\begin{array}{ll}
			0, &s_n = s = 0; \\
			K_{n-1,0} +X_{n-1} -X_{n-s-1}, &s_n = s > 0; \\
			K_{n-1,0} +X_{n-1}, &s_n = s+1, \\
		\end{array}
		\right. \\
		K_{n,1} &= \sum_{k=1}^{s_n} k X_{n-k}
		= \left\{
		\begin{array}{ll}
			0, &s_n = s = 0; \\
			K_{n-1,1} +K_{n,0} -s X_{n-s-1}, &s_n = s > 0; \\
			K_{n-1,1} +K_{n,0}, &s_n = s+1, \\
		\end{array}
		\right. \\
		R_{n,b} &= \sum_{i=1}^n \sum_{k=1}^{s_i} k^b X_i X_{i-k}
		= R_{n-1,b} + X_n K_{n,b} \quad \text{for} \quad b=0,1, \\
		Q_n &= \sum_{i=1}^n X_i^2 = Q_{n-1} +X_n^2.
	\end{align*}
\end{proposition}

When the data arrive sequentially, we need to maintain a vector $\vec{x}$ in order to store the last $s_n$ observations to update the moving sum $K_{n,0}$.
This can be implemented efficiently using a queue data structure as it is first-in-first-out; see Algorithm \ref{algo:las-zm}.

\begin{algorithm}[!t]
	\caption{$\LASER(1,1)$, known zero-mean} \label{algo:las-zm}
	\SetAlgoVlined
	\DontPrintSemicolon
	\SetNlSty{texttt}{[}{]}
	\small
	\textbf{initialization}: \;
	Set $n = 1, s = 0, Q_n = X_1^2, K_{n,0} = K_{n,1} = R_{n,0} = R_{n,1} = 0$ \;
	Push $X_1$ into $\vec{x}$ \;
	\Begin{
		Receive $X_{n+1}$ \;
		Set $n = n + 1$ \;
		Compute $s_n, t_n$ \;
		Retrieve $X_{n-1}, X_{n-s-1}$ from $\vec{x}$ \tcc*[f]{last and first elements in $\vec{x}$} \;
		\eIf{$s_n == s$}{
			Update $K_{n,0}, K_{n,1}$ with Prop. \ref{prop:las-zm} \;
			Pop $X_{n-s-1}$ from $\vec{x}$ \;
		}{
			Update $K_{n,0}, K_{n,1}$ with Prop. \ref{prop:las-zm} \;
		}
		Push $X_n$ into $\vec{x}$ \;
		Update $R_{n,0}, R_{n,1}, Q_n$ with Prop. \ref{prop:las-zm} \;
		Set $s = s_n$ \;
		Output $\bar{\sigma}^2_{n, \LASER(1,1)} = (Q_n +2 R_{n,0} -2R_{n,1}/t_n) /n$ \;
	}
\end{algorithm}

\subsection{\texorpdfstring{$\LASER(1,1)$: Unknown General Mean}{
		LASER(1,1): Unknown General Mean}} \label{sec:algo-las}

When $\mu \in \R$ is unknown, we can subtract the sample mean $\bar{X}_n = n^{-1} \sum_{i=1}^n X_i$ from each observation.
Therefore, the estimator can be written as
\begin{align*}
	\hat{\sigma}^2_{n, \LASER(1,1)}
	={}& \frac{1}{n} \left\{ \sum_{i=1}^n (X_i -\bar{X}_n)^2 + 2 \sum_{i=1}^n \sum_{k=1}^{s_i} \left( 1 -\frac{k}{t_n} \right) (X_i -\bar{X}_n) (X_{i-k} -\bar{X}_n) \right\} \\
	={}& \frac{1}{n} \left\{ \sum_{i=1}^n X_i^2 -n\bar{X}_n^2 + 2 \sum_{i=1}^n \sum_{k=1}^{s_i} \left( 1 -\frac{k}{t_n} \right) (X_i X_{i-k} -X_i\bar{X}_n -X_{i-k}\bar{X}_n +\bar{X}_n^2) \right\} \\
	={}& \frac{1}{n} \left[ Q_n -n\bar{X_n^2} +2R_{n,0} -\frac{2}{t_n}R_{n,1}
	-2\bar{X}_n \left\{ \sum_{i=1}^n \sum_{k=1}^{s_i} \left( 1 -\frac{k}{t_n} \right) (X_i +X_{i-k} -\bar{X}_n) \right\} \right] \\
	={}& \frac{1}{n} \bigg\{ Q_n +2 R_{n,0} -\frac{2}{t_n} R_{n,1} +\left( 2r_{n,0} -\frac{2}{t_n} r_{n,1} -n \right) \bar{X}_n^2 \\
	& -2\bar{X}_n \left(U_{n,0} -\frac{1}{t_n} U_{n,1} +V_{n,0} -\frac{1}{t_n} V_{n,1} \right) \bigg\},
\end{align*}
where the additional components are
\[
r_{n,b} = \sum_{i=1}^n \sum_{k=1}^{s_i} k^b, \quad
U_{n,b} = \sum_{i=1}^n \sum_{k=1}^{s_i} k^b X_i \quad \text{and} \quad
V_{n,b} = \sum_{i=1}^n \sum_{k=1}^{s_i} k^b X_{i-k} \quad \text{for} \quad b=0,1.
\]
To lighten notation, denote $R_n^{(a,b)} = R_{n,a} -R_{n,b}/t_n^b$
and similarly for $r_n^{(a,b)}, U_n^{(a,b)}, V_n^{(a,b)}$, e.g.,
$U_n^{(a,b)} = U_{n,a} -U_{n,b}/t_n^b$.
Then, the estimator can be expressed as
\[
\hat{\sigma}^2_{n, \LASER(1,1)}
= \frac{1}{n} \left\{ Q_n +2R_n^{(0,1)} +(2r_n^{(0,1)}-n) \bar{X}_n^2 -2\bar{X}_n (U_n^{(0,1)} +V_n^{(0,1)}) \right\}.
\]
Using the same procedure as in \cref{sec:algo-las-zm}, we can derive the recursive formulas for the additional components:

\begin{proposition} \label{prop:las}
	Suppose the subsampling parameter $\{s_i \in \N\}_{i \in \Z^+}$ is a monotonically increasing sequence such that $\sup_{i \in \Z^+} |s_{i+1}-s_i| \le 1$.
	Write its value at the last iteration as $s = s_{n-1}$.
	For each $b=0,1$, the following components can be updated in $O(1)$ time:
	\begin{align*}
		k_{n,b} &= \sum_{k=1}^{s_n} k^b
		= \left\{
		\begin{array}{ll}
			0, &s_n = s = 0; \\
			k_{n-1,b}, &s_n = s > 0; \\
			k_{n-1,b} +s_n^b, &s_n = s+1, \\
		\end{array}
		\right. \\
		r_{n,b} &= \sum_{i=1}^n \sum_{k=1}^{s_i} k^b
		= r_{n-1,b} +k_{n,b}, \\
		U_{n,b} &= \sum_{i=1}^n \sum_{k=1}^{s_i} k^b X_i
		= U_{n-1,b} +k_{n,b} X_n, \\
		V_{n,b} &= \sum_{i=1}^n \sum_{k=1}^{s_i} k^b X_{i-k}
		= V_{n-1,b} +K_{n,b}, \\
		\bar{X}_n &= \frac{1}{n} \sum_{i=1}^n X_i
		= \frac{(n-1)\bar{X}_{n-1}+X_n}{n}.
	\end{align*}
\end{proposition}

We notice that it may not be necessary to maintain the components separately, e.g., $U_{n,0}$ and $U_{n,1}$.
Instead, we can maintain the components like
$\mathcal{U}_n = \sum_{i=1}^n \sum_{k=1}^{s_i} (1-k/t_n) X_i$ directly.
However, it may violate Principle E.
Consequently, the recursive formulas for these components will depend on the tapering parameter $t_n$ and cannot be generalized to the settings in  \cref{sec:algo-lase} or \ref{sec:algo-mb} without restricting changes in $t_n$.
Therefore, it is better to maintain the components separately by factoring out $1/t_n$.
We state the recursive algorithm for $\hat{\sigma}^2_{n, \LASER(1,1)}$ in Algorithm \ref{algo:las}.

\begin{algorithm}[!t]
	\caption{$\LASER(1,1)$, unknown general mean} \label{algo:las}
	\SetAlgoVlined
	\DontPrintSemicolon
	\SetNlSty{texttt}{[}{]}
	\small
	\textbf{initialization}: \;
	Set $n = 1, s = 0, Q_n = X_1^2, \bar{X}_n = X_1$ \;
	Set $K_{n,b} = R_{n,b} = k_{n,b} = r_{n,b} = U_{n,b} = V_{n,b} = 0$ for $b=0,1$ \;
	Push $X_1$ into $\vec{x}$ \;
	\Begin{
		Receive $X_{n+1}$ \;
		Set $n = n + 1$ \;
		Compute $s_n, t_n$ \;
		Retrieve $X_{n-1}, X_{n-s-1}$ from $\vec{x}$ \tcc*[f]{last and first elements in $\vec{x}$} \;
		\eIf{$s_n == s$}{
			Update $K_{n,b}$ with Prop. \ref{prop:las-zm} for $b=0,1$ \;
			Pop $X_{n-s-1}$ from $\vec{x}$ \;
		}{
			Update $K_{n,b}$ with Prop. \ref{prop:las-zm} and $k_{n,b}$ with Prop. \ref{prop:las} for $b=0,1$ \;
		}
		Push $X_n$ into $\vec{x}$ \;
		Update $R_{n,b}, Q_n$ with Prop. \ref{prop:las-zm} and $r_{n,b}, U_{n,b}, V_{n,b}, \bar{X}_n$ with Prop. \ref{prop:las} for $b=0,1$ \;
		Set $s=s_n$ \;
		Output $\hat{\sigma}^2_{n, \LASER(1,1)} = \{ Q_n +2R_n^{(0,1)} +(2r_n^{(0,1)}-n) \bar{X}_n^2 -2\bar{X}_n (U_n^{(0,1)} +V_n^{(0,1)}) \} /n$ \;
	}
\end{algorithm}

\subsection{\texorpdfstring{$\LASER(q,1)$: Characteristic Exponent}{
		LASER(q,1): Characteristic Exponent}} \label{sec:algo-lase}

In this subsection, we illustrate how to derive recursive formulas for $\hat{\sigma}^2_{n,\LASER(q,1)}$.
We are going to focus on the $q$-th order term as similar formulas apply to lower order terms except that a separate set of components is needed for each term.

To begin with, we follow the same procedure as in \cref{sec:algo-las-zm,sec:algo-las} to write the estimator as
\begin{align*}
	\hat{\sigma}^2_{n, \LASER(q,1)}
	&= \frac{1}{n} \sum_{i=1}^n (X_i -\bar{X}_n)^2
	+\frac{2}{n} \sum_{i=1}^n \sum_{k=1}^{s_i} \left( 1 -\frac{k^q}{t_n^q} \right) (X_i -\bar{X}_n) (X_{i-k} -\bar{X}_n) \\
	&= \frac{1}{n} \left\{ Q_n +2R_n^{(0,q)} +(2r_n^{(0,q)}-n) \bar{X}_n^2 -2\bar{X}_n (U_n^{(0,q)} +V_n^{(0,q)}) \right\},
\end{align*}
where the shorthand, e.g., $R_n^{(0,q)} = R_{n,0} -R_{n,q}/t_n^q$, is introduced in \cref{sec:algo-las}; see \cref{sec:algo-overview} for an overview of notation.
Now, we need to work on a recursive formula for
$K_{n,q} = \sum_{k=1}^{s_n} k^q X_{n-k}$ in order to update $R_{n,q}$.
To this end, we can use the backward finite difference to be discussed in \ref{sec:proof-on}.
Recall the backward finite difference operator $\nabla^{(b)} \cdot$ is defined by \[
\nabla^{(1)} f(k) = f(k) -f(k-1) \quad \text{and} \quad
\nabla^{(b)} f(k) = \nabla^{(b-1)} f(k) -\nabla^{(b-1)} f(k-1),
\]
where $b \in \Z^+$ and $f$ is a function that takes an integer input $k$.
When $f(k) = k^q$, we have the recursive relation
\[
\nabla^{(1)} k^q = k^q -(k-1)^q \quad \text{and} \quad
\nabla^{(b)} k^q = \nabla^{(b-1)} k^q -\nabla^{(b-1)} (k-1)^q.
\]
To lighten notation, denote $d_{k,q}^{(b)} = \nabla^{(b)} k^q$.
From \cref{sec:proof-on}, we know that $d_{k,q}^{(q)}$ is a non-zero constant and well-defined when $k > q$.
Therefore, we can expand the recursive relation and put $k=q+1$ to obtain 
\begin{align*}
	d_{q+1,q}^{(q)}
	&= (q+1)^q -\binom{q}{1} q^q +\binom{q}{2} (q-1)^q - \cdots +(-1)^q 1^q \\
	&= \sum_{r=0}^q (-1)^r \binom{q}{r} (q-r+1)^q.
\end{align*}
Similarly, the initial values $d_{b+1,q}^{(b)}$ for $b = 1, \ldots, q-1$ are given by
\[
d_{b+1,q}^{(b)} = \sum_{r=0}^b (-1)^r \binom{b}{r} (b-r+1)^q.
\]
It remains to find $d_{k,q}^{(b)}$ when $k>b+1$ for $b=1, \ldots, q-1$.
By rearranging $\nabla^{(b+1)} k^q = \nabla^{(b)} k^q -\nabla^{(b)} (k-1)^q$, we have
\begin{align*}
	d_{k,q}^{(b)} &= \nabla^{(b)} k^q
	= \nabla^{(b)} (k-1)^q +\nabla^{(b+1)} k^q \\
	&= d_{k-1,q}^{(b)} +d_{k,q}^{(b+1)}.
\end{align*}
When $k$ increases, we can update $d_{k,q}^{(b)}$ recursively from $b=q-1$ to $b=1$ since $d_{k-1,q}^{(b)}$ comes from the last iteration and
$d_{k,q}^{(b+1)}$ is a constant or has been updated.
We summarize the recursive formulas for $d_{k,q}^{(b)}$ below:
\[
d_{k,q}^{(b)} = \left\{
\begin{array}{ll}
	0, &k<b+1 \ \text{and} \ b=1,2,\ldots,q; \\
	\sum_{r=0}^b (-1)^r \binom{b}{r} (b-r+1)^q, &k=b+1 \ \text{and} \ b=1,2,\ldots,q; \\
	d_{q+1,q}^{(q)}, &k>b+1 \ \text{and} \ b=q; \\
	d_{k-1,q}^{(b)} +d_{k,q}^{(b+1)}, &k>b+1 \ \text{and} \ b=1,2,\ldots,q-1. \\
\end{array}
\right.
\]
It is also possible to compute $d_{k,q}^{(b)}$ directly using $d_{k,q}^{(b)} = \sum_{r=0}^b (-1)^r \binom{b}{r} (k-r)^q$.
Nonetheless, this approach requires more arithmetic operations when $k$ changes.
Therefore, it is more desirable to use the recursive formulas for $d_{k,q}^{(b)}$.
Analogous to Proposition \ref{prop:las-zm}, the recursive formulas for the zero-mean components are summarized below.

\begin{proposition} \label{prop:lase-zm}
	Suppose the subsampling parameter $\{s_i \in \N\}_{i \in \Z^+}$ is a monotonically increasing sequence such that $\sup_{i \in \Z^+} |s_{i+1}-s_i| \le 1$.
	Write its value at the last iteration as $s = s_{n-1}$.
	Then, the following components can be updated in $O(1)$ time:
	\begin{align*}
		c_q^{(b)} &= \sum_{r=0}^b (-1)^r \binom{b}{r} (b-r+1)^q
		\quad \text{for} \quad b = 1,2,\ldots,q, \\
		d_{s_n,q}^{(b)} &= \left\{
		\begin{array}{ll}
			0, &s_n < b+1 \ \text{and} \ b=1,2,\ldots,q; \\
			c_q^{(b)}, &s_n = b+1 \ \text{and} \ b=1,2,\ldots,q; \\
			d_{s,q}^{(b)}, &s_n = s > b+1 \ \text{and} \ b=1,2,\ldots,q; \\
			d_{s,q}^{(b)}, &s_n = s+1 > b+1 \ \text{and} \ b=q; \\
			d_{s,q}^{(b)} +d_{s+1,q}^{(b+1)}, &s_n = s+1 > b+1 \ \text{and} \ b=1,2,\ldots,q-1, \\
		\end{array}
		\right. \\
		D_{n,q}^{(q)} &= \sum_{k=1}^{s_n} d_{k,q}^{(q)} X_{n-k} \\
		&= \left\{
		\begin{array}{ll}
			0, &s_n < q+1; \\
			D_{n-1,q}^{(q)} +c_q^{(q)} X_{n-q-1} -d_{s,q}^{(q)} X_{n-s-1}, &s_n = s \ge q+1; \\
			D_{n-1,q}^{(q)} +c_q^{(q)} X_{n-q-1}, &s_n = s+1 \ge q+1, \\
		\end{array}
		\right. \\
		D_{n,q}^{(b)} &= \sum_{k=1}^{s_n} d_{k,q}^{(b)} X_{n-k}
		\quad \text{for} \quad b = q-1,q-2,\ldots,1, \\
		&= \left\{
		\begin{array}{ll}
			0, &s_n < b+1; \\
			D_{n-1,q}^{(b)} +c_q^{(b)} X_{n-b-1} +D_{n,q}^{(b+1)} -d_{s,q}^{(b)} X_{n-s-1}, &s_n = s \ge b+1; \\
			D_{n-1,q}^{(b)} +c_q^{(b)} X_{n-b-1} +D_{n,q}^{(b+1)}, &s_n = s+1 \ge b+1, \\
		\end{array}
		\right. \\
		K_{n,q} &= \sum_{k=1}^{s_n} k^q X_{n-k}
		= \left\{
		\begin{array}{ll}
			0, &s_n = s = 0; \\
			K_{n-1,q} +X_{n-1} +D_{n,q}^{(1)} -s^q X_{n-s-1}, &s_n = s > 0;\\
			K_{n-1,q} +X_{n-1} +D_{n,q}^{(1)}, &s_n = s+1, \\
		\end{array}
		\right. \\
		K_{n,0} &= \sum_{k=1}^{s_n} X_{n-k}
		= \left\{
		\begin{array}{ll}
			0, &s_n = s = 0; \\
			K_{n-1,0} +X_{n-1} -X_{n-s-1}, &s_n = s > 0; \\
			K_{n-1,0} +X_{n-1}, &s_n = s+1, \\
		\end{array}
		\right. \\
		R_{n,b} &= \sum_{i=1}^n \sum_{k=1}^{s_i} k^b X_i X_{i-k}
		= R_{n-1,b} + X_n K_{n,b} \quad \text{for} \quad b=0,q, \\
		Q_n &= \sum_{i=1}^n X_i^2 = Q_{n-1} +X_n^2.
	\end{align*}
\end{proposition}

Analogous to Proposition \ref{prop:las}, the recursive formulas for the additional components are as follows.

\begin{proposition} \label{prop:lase}
	Suppose the subsampling parameter $\{s_i \in \N\}_{i \in \Z^+}$ is a monotonically increasing sequence such that $\sup_{i \in \Z^+} |s_{i+1}-s_i| \le 1$.
	Write its value at the last iteration as $s = s_{n-1}$.
	For each $b=0,q$, the following components can be updated in $O(1)$ time:
	\begin{align*}
		k_{n,b} &= \sum_{k=1}^{s_n} k^b
		= \left\{
		\begin{array}{ll}
			0, &s_n = s = 0; \\
			k_{n-1,b}, &s_n = s > 0; \\
			k_{n-1,b} +s_n^b, &s_n = s+1, \\
		\end{array}
		\right. \\
		r_{n,b} &= \sum_{i=1}^n \sum_{k=1}^{s_i} k^b
		= r_{n-1,b} +k_{n,b}, \\
		U_{n,b} &= \sum_{i=1}^n \sum_{k=1}^{s_i} k^b X_i
		= U_{n-1,b} +k_{n,b} X_n, \\
		V_{n,b} &= \sum_{i=1}^n \sum_{k=1}^{s_i} k^b X_{i-k}
		= V_{n-1,b} +K_{n,b}, \\
		\bar{X}_n &= \frac{1}{n} \sum_{i=1}^n X_i
		= \frac{(n-1)\bar{X}_{n-1}+X_n}{n}.
	\end{align*}
\end{proposition}

We state the recursive algorithm for $\hat{\sigma}^2_{n, \LASER(q,1)}$ in Algorithm \ref{algo:lase}.

\begin{algorithm}[!t]
	\caption{$\LASER(q,1)$} \label{algo:lase}
	\SetAlgoVlined
	\DontPrintSemicolon
	\SetNlSty{texttt}{[}{]}
	\small
	\textbf{initialization}: \;
	Set $n = 1, s = 0, Q_n = X_1^2, \bar{X}_n = X_1$ \;
	Set $K_{n,b} = R_{n,b} = k_{n,b} = r_{n,b} = U_{n,b} = V_{n,b} = 0$ for $b=0,q$ \;
	Compute $c_q^{(b)}$, and set $D_{n,q}^{(b)} = d_{s_n,q}^{(b)}=0$ for $b=1,2,\ldots,q$ \;
	Push $X_1$ into $\vec{x}$ \;
	\Begin{
		Receive $X_{n+1}$ \;
		Set $n = n + 1$ \;
		Compute $s_n, t_n$ \;
		Retrieve $X_{n-1}, \ldots, X_{n-1-\max\{\min(s_n-1,q),0\}}$ and $X_{n-s-1}$ from $\vec{x}$ \;
		\eIf{$s_n == s$}{
			Update $D_{n,q}^{(b)}$ with Prop. \ref{prop:lase-zm} for $b=q,q-1,\ldots,1$ \;
			Update $K_{n,b}$ with Prop. \ref{prop:lase-zm} for $b=0,q$ \;
			Pop $X_{n-s}$ from $\vec{x}$ \;
		}{
			Update $d_{s_n,q}^{(b)}, D_{n,q}^{(b)}$ with Prop. \ref{prop:lase-zm} for $b=q,q-1,\ldots,1$ \;
			Update $K_{n,b}$ with Prop. \ref{prop:lase-zm} and $k_{n,b}$ with Prop. \ref{prop:lase} for $b=0,q$ \;
		}
		Push $X_n$ into $\vec{x}$ \;
		Update $R_{n,b}, Q_n$ with Prop. \ref{prop:lase-zm} and $r_{n,b}, U_{n,b}, V_{n,b}, \bar{X}_n$ with Prop. \ref{prop:lase} for $b=0,q$ \;
		Set $s=s_n$ \;
		Output $\hat{\sigma}^2_{n, \LASER(q,1)} = \{ Q_n +2R_n^{(0,q)} +(2r_n^{(0,q)}-n) \bar{X}_n^2 -2\bar{X}_n (U_n^{(0,q)} +V_n^{(0,q)}) \} /n$ \;
	}
\end{algorithm}

\subsection{\texorpdfstring{$\LASER(q,\phi)$: Memory Parameter}{
		LASER(q,phi): Memory Parameter}} \label{sec:algo-laser}

In this subsection, we illustrate how to derive recursive formulas for
$\hat{\sigma}^2_{n, \LASER(q,\phi)}$ with $q=1$ and $\phi \ge 2$.
We take these values to focus on how to perform $O(1)$-space update.
The general algorithm for $q \in \Z^+$ and $\phi \ge 2$, which is implemented in our R package, can be obtained by combining the ideas in \cref{sec:algo-lase} and here.

To begin with, we follow the same procedure as in \cref{sec:algo-las-zm,sec:algo-las} to write the estimator as
\begin{align*}
	\hat{\sigma}^2_{n, \LASER(1,\phi)} &= \frac{1}{n} \sum_{i=1}^n (X_i -\bar{X}_n)^2
	+\frac{2}{n} \sum_{i=1}^n \sum_{k=1}^{s'_i} \left( 1 -\frac{k}{t_n} \right) (X_i -\bar{X}_n) (X_{i-k} -\bar{X}_n) \\
	&= \frac{1}{n} \left\{ Q_n +2R_n^{\prime(0,1)} +(2r_n^{\prime(0,1)}-n) \bar{X}_n^2 -2\bar{X}_n (U_n^{\prime(0,1)} +V_n^{\prime(0,1)}) \right\},
\end{align*}
where the shorthand, e.g., $R_n^{\prime(0,1)} = R'_{n,0} -R'_{n,1}/t_n$, is similarly introduced \cref{sec:algo-las}; see \cref{sec:algo-overview} for an overview of notation.
Since $q=1$, the components are analogous to those in Propositions \ref{prop:las-zm} and \ref{prop:las} with the only difference in the use of the ramped subsampling parameter $s'_i$.
When $s'_i$ is increasing, there is no difference in the recursive formulas.
However, when $s'_i$ is too large that it resets to the value of $s_i$, we need another way to update the components recursively.
For $\phi \ge 2$, this can be done by ``precalculating'' some components.
We summarize the recursive formulas for the zero-mean components below.

\begin{proposition} \label{prop:laser-zm}
	Suppose the intended subsampling parameter $\{s_i \in \N\}_{i \in \Z^+}$ is a monotonically increasing sequence such that $\sup_{i \in \Z^+} |s_{i+1}-s_i| \le 1$.
	Furthermore, suppose the ramped subsampling parameter defined in \cref*{eq:rampSub}, i.e., $\{s'_i\}_{i \in \Z^+}$, is ramped with $\phi \ge 2$.
	Write its value at the last iteration as $s' = s'_{n-1}$.
	Denote the ramping upper bound by $a_n = \lceil \phi s_n \rceil$.
	To keep the branching simple, assume $s_n$ and $a_n$ only increase when $s'_n$ resets.
	Then, the following components can be updated in $O(1)$ time and $O(1)$ space:	
	\begin{align*}
		K'_{n,0} &= \sum_{k=1}^{s'_n} X_{n-k}
		= \left\{
		\begin{array}{ll}
			0, &0 = s'_n = s'; \\
			K'_{n-1,0} +X'_{n-1}, &0 < s'_n = s'+1; \\
			K''_{n-1,0}, &0 < s'_n = s_{n-1} < s'; \\
			K''_{n-1,0} +X''_{n-1}, &0 < s'_n = s_{n-1}+1 < s', \\
		\end{array}
		\right. \\
		K'_{n,1} &= \sum_{k=1}^{s'_n} k X_{n-k}
		= \left\{
		\begin{array}{ll}
			0, &0 = s'_n = s'; \\
			K'_{n-1,1} +K'_{n,0}, &0 < s'_n = s'+1; \\
			K''_{n-1,1}, &0 < s'_n = s_{n-1} < s'; \\
			K''_{n-1,1} +s_n X''_{n-1}, &0 < s'_n = s_{n-1}+1 < s', \\
		\end{array}
		\right. \\
		K''_{n,0} &= \left\{
		\begin{array}{ll}
			K''_{n-1,0} +X_n, &s'_n \ge a_{n-1} -s_{n-1}\ \text{and}\ s'_n > 0; \\
			0, &s'_n < a_{n-1} -s_{n-1}\ \text{or}\ s'_n=0, \\
		\end{array}
		\right. \\
		K''_{n,1} &= \left\{
		\begin{array}{ll}
			K''_{n-1,1} +(a_{n-1} -s'_n) X_n, &s'_n \ge a_{n-1} -s_{n-1}\ \text{and}\ s'_n > 0; \\
			0, &s'_n < a_{n-1} -s_{n-1}\ \text{or}\ s'_n=0, \\
		\end{array}
		\right. \\
		R'_{n,b} &= \sum_{i=1}^n \sum_{k=1}^{s'_i} k^b X_i X_{i-k}
		= R'_{n-1,b} + X_n K'_{n,b} \quad \text{for} \quad b=0,1, \\
		Q_n &= \sum_{i=1}^n X_i^2 = Q_{n-1} +X_n^2, \\
		X''_n &= \left\{
		\begin{array}{ll}
			X'_{n-1}, &s'_n = a_n -s_n; \\
			X''_{n-1}, &s'_n \ne a_n -s_n, \\
		\end{array}
		\right. \\
		X'_n &= X_n.
	\end{align*}
\end{proposition}

Note that $K''_{n-1,0}, K''_{n-1,1}, X'_{n-1}, X''_{n-1}$ are the ``precalculated'' versions of $K'_{n-1,0}, K'_{n-1,1}, X_{n-1}$, $X_{n-s_n}$, respectively, when $s'_n$ resets to $s_n$.
Using these components, we do not need to store the past observations explicitly, e.g., by maintaining the vector $\vec{x}$ in Algorithm \ref{algo:las-zm}.
As a result, only a constant amount of memory is involved.
The restriction that $s_n$ and $a_n$ only increase when $s'_n$ resets also does not affect the asymptotic efficiency because it just slightly delays the increment of $s_n$.
For the additional components, their recursive formulas are as follows:

\begin{proposition} \label{prop:laser}
	Suppose the intended subsampling parameter $\{s_i \in \N\}_{i \in \Z^+}$ is a monotonically increasing sequence such that $\sup_{i \in \Z^+} |s_{i+1}-s_i| \le 1$.
	Furthermore, suppose the ramped subsampling parameter defined in \cref*{eq:rampSub}, i.e., $\{s'_i\}_{i \in \Z^+}$, is ramped with $\phi \ge 2$.
	Write its value at the last iteration as $s' = s'_{n-1}$.
	Denote the ramping upper bound by $a_n = \lceil \phi s_n \rceil$. To keep the branching simple, assume $s_n$ and $a_n$ only increase when $s'_n$ resets.
	For each $b=0,1$, the following components can be updated in $O(1)$ time and $O(1)$ space:
	\begin{align*}
		k'_{n,b} &= \sum_{k=1}^{s'_n} k^b
		= \left\{
		\begin{array}{ll}
			0, &0 = s'_n = s'; \\
			k'_{n-1,b} +(s'_n)^b, &0 < s'_n = s'+1; \\
			k''_{n-1,b}, &0 < s'_n = s_{n-1} < s'; \\
			k''_{n-1,b} +s_n^b, &0 < s'_n = s_{n-1}+1 < s', \\
		\end{array}
		\right. \\
		k''_{n,b} &= \left\{
		\begin{array}{ll}
			k''_{n-1,b} +(a_{n-1} -s'_n)^b, &s'_n \ge a_{n-1} -s_{n-1}\ \text{and}\ s'_n > 0; \\
			0, &s'_n < a_{n-1} -s_{n-1}\ \text{or}\ s'_n = 0, \\
		\end{array}
		\right. \\
		r'_{n,b} &= \sum_{i=1}^n \sum_{k=1}^{s'_i} k^b
		= r'_{n-1,b} +k'_{n,b}, \\
		U'_{n,b} &= \sum_{i=1}^n \sum_{k=1}^{s'_i} k^b X_i
		= U'_{n-1,b} +k'_{n,b} X_n, \\
		V'_{n,b} &= \sum_{i=1}^n \sum_{k=1}^{s'_i} k^b X_{i-k}
		= V'_{n-1,b} +K'_{n,b}, \\
		\bar{X}_n &= \frac{1}{n} \sum_{i=1}^n X_i
		= \frac{(n-1)\bar{X}_{n-1}+X_n}{n}.
	\end{align*}
\end{proposition}

We state the recursive algorithm for $\hat{\sigma}^2_{n, \LASER(1,\phi)}$ in Algorithm \ref{algo:laser}.

\begin{algorithm}[!t]
	\caption{$\LASER(1,\phi)$} \label{algo:laser}
	\SetAlgoVlined
	\DontPrintSemicolon
	\SetNlSty{texttt}{[}{]}
	\small
	\textbf{initialization}: \;
	Set $n = 1, s = 0, Q_n = X_1^2, \bar{X}_n = X_1$ \;
	Set $K'_{n,b} = K''_{n,b} = R'_{n,b} = k'_{n,b} = k''_{n,b} = r'_{n,b} = U'_{n,b} = V'_{n,b} = 0$ for $b=0,1$ \;
	Set $s' = a = 0, X'_n = X''_n = X_1$ \;
	\Begin{
		Receive $X_{n+1}$ \;
		Set $n = n + 1$ \;
		Compute $s_n, t_n, a_n$ \;
		\eIf{$s'+1 < a$}{
			Set $s'_n = s' + 1, s_n = s, a_n = a$ \tcc*[f]{restrict $s_n$ when $s'_n>s'$} \;
			Update $K'_{n,b}$ with Prop. \ref{prop:laser-zm} and $k'_{n,b}$ with Prop. \ref{prop:laser} for $b=0,1$ \;
			\If{$s'_n \ge a - s$}{
				Update $K''_{n,b}$ with Prop. \ref{prop:laser-zm} and $k''_{n,b}$ with Prop. \ref{prop:laser} for $b=0,1$ \;
			}
		}{
			Set $s'_n = s_n, a = a_n$ \;
			\eIf{$s'_n == s$}{
				Update $K'_{n,b}$ with Prop. \ref{prop:laser-zm} and $k'_{n,b}$ with Prop. \ref{prop:laser} for $b=0,1$ \;
			}{
				Update $K'_{n,b}$ with Prop. \ref{prop:laser-zm} and $k'_{n,b}$ with Prop. \ref{prop:laser} for $b=0,1$ \;
			}
			Set $K''_{n,b} = k''_{n,b} = 0$ for $b=0,1$ \;
		}
		Update $R'_{n,b}, Q_n$ with Prop. \ref{prop:laser-zm} and $r'_{n,b}, U'_{n,b}, V'_{n,b}, \bar{X}_n$ with Prop. \ref{prop:laser} for $b=0,1$ \;
		\If{$s'_n == a_n - s_n$}{
			Set $X''_n = X'_n$ \;
		}
		Set $s'=s'_n, s=s_n, X'_n = X_n$ \;
		Output $\scriptstyle \hat{\sigma}^2_{n,\LASER(1,\phi)} = \{ Q_n +2R_n^{\prime(0,1)} +(2r_n^{\prime(0,1)}-n) \bar{X}_n^2 -2\bar{X}_n (U_n^{\prime(0,1)} +V_n^{\prime(0,1)}) \} /n$ \;
	}
\end{algorithm}

\subsection{Mini-batch Update} \label{sec:algo-mb}

In this subsection, we present and discuss the mini-batch algorithm for $\hat{\sigma}^2_{n_j, \LASER(1,1)}$.
The general algorithm for $q \in \Z^+$ or $\phi \ge 2$, which is implemented in our R package, can be obtained by combining the ideas in \cref{sec:algo-lase,sec:algo-laser} and here.

Following the discussion in \cref*{sec:practical-enhancements}, we modify Algorithm \ref{algo:las} in two ways to obtain Algorithm \ref{algo:mb}. 
First, we may retrieve $X_{i-1}$ or $X_{i-s-1}$ from the input $X_{n+1}, X_{n+2}, \ldots, X_{n_{j+1}}$.
If we simply apply Algorithm \ref{algo:las} for $i=n+1,\ldots,n_{j+1}$, there will be a considerable number of redundant operations associated with the input and output of $\vec{x}$.
Second, we update $Q_i$ and $\bar{X}_i$ from $i=n_j$ to $i=n_{j+1}$ directly, which allows vectorized operations to be employed.
It is also possible to update $R_{i,b}, r_{i,b}, U_{i,b}, V_{i,b}$ from $i=n_j$ to $i=n_{j+1}$ directly by storing $\{K_{i,b}, k_{i,b}\}$.
Although we do not use this design in our implementation, it can be efficient if the mini-batch size $m=n_{j+1}-n_j$ that determines the storage is prespecified.

\begin{algorithm}[!t]
	\caption{Mini-batch $\LASER(1,1)$} \label{algo:mb}
	\SetAlgoVlined
	\DontPrintSemicolon
	\SetNlSty{texttt}{[}{]}
	\small
	\textbf{initialization}: \;
	Set $n = 1, s = j = 0, Q_n = X_1^2, \bar{X}_n = X_1$ \;
	Set $K_{n,b} = R_{n,b} = k_{n,b} = r_{n,b} = U_{n,b} = V_{n,b} = 0$ for $b=0,1$ \;
	Push $X_1$ into $\vec{x}$ \;
	\Begin{
		Receive $X_{n+1}, X_{n+2}, \ldots, X_{n_{j+1}}$ \;
		\For{$i = n+1$ \KwTo $n_{j+1}$}{		
			Compute $s_i$ \;
			Retrieve $X_{i-1}, X_{i-s-1}$ from $\vec{x}$ or the latest input \;
			\eIf{$s_i == s$}{
				Update $K_{i,b}$ with Prop. \ref{prop:las-zm} for $b=0,1$ \;
				Pop $X_{i-s-1}$ from $\vec{x}$ \;
			}{
				Update $K_{i,b}$ with Prop. \ref{prop:las-zm} and $k_{i,b}$ with Prop. \ref{prop:las} for $b=0,1$ \;
			}
			Update $R_{i,b}$ with Prop. \ref{prop:las-zm} and $r_{i,b}, U_{i,b}, V_{i,b}$ with Prop. \ref{prop:las} for $b=0,1$ \;
			Set $s=s_n$ \;
		}
		Set $n = n_{j+1}$ and $j = j +1$ \;
		Update $Q_n$ with Prop. \ref{prop:las-zm} and $\bar{X}_n$ with Prop. \ref{prop:las} \;
		Set $\vec{x} = (X_{n-s},\ldots,X_n)^\T$ \;
		Compute $t_n$ \;
		Output $\hat{\sigma}^2_{n, \LASER(1,1)} = \{ Q_n +2R_n^{(0,1)} +(2r_n^{(0,1)}-n) \bar{X}_n^2 -2\bar{X}_n (U_n^{(0,1)} +V_n^{(0,1)}) \} /n$ \;
	}
\end{algorithm}

\subsection{Nuisance Parameter Estimation} \label{sec:algo-vq}

In this subsection, we illustrate how to derive recursive formulas for
$\hat{v}_{q,n,\LASER(p,\phi)}$ with $p,q \in \Z^+$ and $\phi=1$.
The general algorithms under ramping or a mini-batch setting, which are implemented in our R package, can be obtained by combining the ideas in \cref{sec:algo-laser,sec:algo-mb} and here.

To begin with, we follow the same procedure as in \cref{sec:algo-las-zm,sec:algo-las} to write the estimator as
\begin{align*}
	\hat{v}_{q,n,\LASER(p,1)}
	&= \frac{2}{n} \sum_{i=1}^n \sum_{k=1}^{s_i} \left( 1 -\frac{k^p}{t_n^p} \right) k^q (X_i -\bar{X}_n) (X_{i-k} -\bar{X}_n) \\
	&= \frac{2}{n} \left\{ R_n^{(q,p+q,p)} +r_n^{(q,p+q,p)} \bar{X}_n^2 -\bar{X}_n (U_n^{(q,p+q,p)} +V_n^{(q,p+q,p)}) \right\},
\end{align*}
where the shorthand, e.g., $R_n^{(q,p+q,p)} = R_{n,q} -R_{n,p+q}/t_n^p$,
is the full version of the one introduced in \cref{sec:algo-las}; see \cref{sec:algo-overview} for an overview of notation.
The components are analogous to those in Propositions \ref{prop:lase-zm} and \ref{prop:lase} with the only difference in the new subscript $p+q$.
We summarize the recursive formulas for the zero-mean components below.

\begin{proposition} \label{prop:vq-zm}
	Suppose the subsampling parameter $\{s_i \in \N\}_{i \in \Z^+}$ is a monotonically increasing sequence such that $\sup_{i \in \Z^+} |s_{i+1}-s_i| \le 1$.
	Write its value at the last iteration as $s = s_{n-1}$.
	For each $\rho=q,p+q$, the following components can be updated in $O(1)$ time:	
	\begin{align*}
		c_\rho^{(b)} &= \sum_{r=0}^b (-1)^r \binom{b}{r} (b-r+1)^\rho \quad \text{for} \quad b = 1,2,\ldots,\rho, \\
		d_{s_n,\rho}^{(b)} &= \left\{
		\begin{array}{ll}
			0, &s_n < b+1 \ \text{and} \ b=1,2,\ldots,\rho; \\
			c_\rho^{(b)}, &s_n = b+1 \ \text{and} \ b=1,2,\ldots,\rho; \\
			d_{s,\rho}^{(b)}, &s_n = s > b+1 \ \text{and} \ b=1,2,\ldots,\rho; \\
			d_{s,\rho}^{(b)}, &s_n = s+1 > b+1 \ \text{and} \ b=\rho; \\
			d_{s,\rho}^{(b)} +d_{s+1,\rho}^{(b+1)}, &s_n = s+1 > b+1 \ \text{and} \ b=1,2,\ldots,\rho-1, \\
		\end{array}
		\right. \\
		D_{n,\rho}^{(\rho)} &= \sum_{k=1}^{s_n} d_{k,\rho}^{(\rho)} X_{n-k} \\
		&= \left\{
		\begin{array}{ll}
			0, &s_n < \rho+1; \\
			D_{n-1,\rho}^{(\rho)} +c_\rho^{(\rho)} X_{n-\rho-1} -d_{s,\rho}^{(\rho)} X_{n-s-1}, &s_n = s \ge \rho+1; \\
			D_{n-1,\rho}^{(\rho)} +c_\rho^{(\rho)} X_{n-\rho-1}, &s_n = s+1 \ge \rho+1, \\
		\end{array}
		\right. \\
		D_{n,\rho}^{(b)} &= \sum_{k=1}^{s_n} d_{k,\rho}^{(b)} X_{n-k} \quad \text{for} \quad b = \rho-1,\rho-2,\ldots,1, \\
		&= \left\{
		\begin{array}{ll}
			0, &s_n < b+1; \\
			D_{n-1,\rho}^{(b)} +c_\rho^{(b)} X_{n-b-1} +D_{n,\rho}^{(b+1)} -d_{s,\rho}^{(b)} X_{n-s-1}, &s_n = s \ge b+1; \\
			D_{n-1,\rho}^{(b)} +c_\rho^{(b)} X_{n-b-1} +D_{n,\rho}^{(b+1)}, &s_n = s+1 \ge b+1, \\
		\end{array}
		\right. \\
		K_{n,\rho} &= \sum_{k=1}^{s_n} k^\rho X_{n-k}
		= \left\{
		\begin{array}{ll}
			0, &s_n = s = 0; \\
			K_{n-1,\rho} +X_{n-1} +D_{n,\rho}^{(1)} -s^\rho X_{n-s-1}, &s_n = s > 0;\\
			K_{n-1,\rho} +X_{n-1} +D_{n,\rho}^{(1)}, &s_n = s+1, \\
		\end{array}
		\right. \\
		R_{n,\rho} &= \sum_{i=1}^n \sum_{k=1}^{s_i} k^\rho X_i X_{i-k}
		= R_{n-1,\rho} + X_n K_{n,\rho}.
	\end{align*}
\end{proposition}

For the additional components, their recursive formulas are as follows:

\begin{proposition} \label{prop:vq}
	Suppose the subsampling parameter $\{s_i \in \N\}_{i \in \Z^+}$ is a monotonically increasing sequence such that $\sup_{i \in \Z^+} |s_{i+1}-s_i| \le 1$.
	Write its value at the last iteration as $s = s_{n-1}$.
	For each $\rho=q,p+q$, the following components can be updated in $O(1)$ time:
	\begin{align*}
		k_{n,\rho} &= \sum_{k=1}^{s_n} k^\rho
		= \left\{
		\begin{array}{ll}
			0, &s_n = s = 0; \\
			k_{n-1,\rho}, &s_n = s > 0; \\
			k_{n-1,\rho} +s_n^\rho, &s_n = s+1, \\
		\end{array}
		\right. \\
		r_{n,\rho} &= \sum_{i=1}^n \sum_{k=1}^{s_i} k^\rho
		= r_{n-1,\rho} +k_{n,\rho}, \\
		U_{n,\rho} &= \sum_{i=1}^n \sum_{k=1}^{s_i} k^\rho X_i
		= U_{n-1,\rho} +k_{n,\rho} X_n, \\
		V_{n,\rho} &= \sum_{i=1}^n \sum_{k=1}^{s_i} k^\rho X_{i-k}
		= V_{n-1,\rho} +K_{n,\rho}, \\
		\bar{X}_n &= \frac{1}{n} \sum_{i=1}^n X_i
		= \frac{(n-1)\bar{X}_{n-1}+X_n}{n}.
	\end{align*}
\end{proposition}

The recursive algorithm for $\hat{v}_{q,n,\LASER(p,1)}$ is similar to Algorithm \ref{algo:lase} once we have Propositions \ref{prop:vq-zm} and \ref{prop:vq}.
Therefore, we omit its pseudocode here.

\subsection{Automatic Update} \label{sec:algo-auto}

In this subsection, we discuss how to select the smoothing parameters $s_n$ and $t_n$ automatically when we update $\hat{\sigma}^2_{n,\LASER(q,\phi)}$ with $q \in \Z^+$ and $\phi=1$.
The general algorithms under ramping or a mini-batch setting are similar and thus omitted.

To begin with, denote the set of components used to update $\hat{\sigma}_n^2$ and $\hat{v}_{q,n}$ by $\mathcal{C}(\hat{\sigma}^2_n)$ and $\mathcal{C}(\hat{v}_{q,n})$.
For instance,
\[
\mathcal{C}(\hat{\sigma}^2_{n,\LASER(1,1)})
= \left\{ n, s_n, t_n, Q_n, \bar{X}_n, \{K_{n,b}, R_{n,b}, k_{n,b}, r_{n,b}, U_{n,b}, V_{n,b}\}_{b=0,1}, \{X_{n-b}\}_{b=0,\ldots,s_n} \right\};
\]
see Algorithm \ref{algo:las}. 
We also define $\Psi_\diamond = \Psi_\star/\kappa^{2/(1+2q)}$ and
$\Theta_\diamond = \Theta_\star/\kappa^{2/(1+2q)}$, which denote the optimal coefficients for computing $s_n$ and $t_n$ excluding the unknown $\kappa$; see Corollary \ref{coro:mse}.
Then, the automatic optimal parameters selector is given in Algorithm \ref{algo:auto}.

\begin{algorithm}[!t]
	\caption{Automatic optimal parameters selector for $\LASER(q,1)$} \label{algo:auto}
	\SetAlgoVlined
	\DontPrintSemicolon
	\SetNlSty{texttt}{[}{]}
	\SetNoFillComment
	\small
	\textbf{input for the selector}: \;
	(i) $n, s_n, t_n$ -- the sample size, subsampling and tapering parameters from the last iteration \;
	(ii) $\Psi_\diamond, \psi_\star$ --  the optimal coefficient (excluding $\kappa$) and exponent for computing $s_n$ \;
	(iii) $\Theta_\diamond, \theta_\star$ --  the optimal coefficient (excluding $\kappa$) and exponent for computing $t_n$ \;
	(iv) $s_0, t_0$ -- the minimum values of subsampling and tapering parameters \;
	\textbf{input for the estimators}: \;
	(v) $X_{n+1}$ -- the new observation \;
	(vi) $\mathcal{C}(\hat{\sigma}^2_{n,\LASER(q,1)}), \mathcal{C}(\hat{v}_{q,n,\LASER(p,1)})$ -- the components for recursive estimation \;
	\Begin{
		Set $\hat{\kappa} = \left| \hat{v}_{q,n,\LASER(p,1)} \right|/\hat{\sigma}^2_{n,\LASER(q,1)}$ \;
		Set $s_{n+1} = \lfloor \Psi_\diamond \hat{\kappa}^{2/(1+2q)} (n+1)^{\psi_\star} \rfloor$ and $t_{n+1} = \lfloor \Theta_\diamond \hat{\kappa}^{2/(1+2q)} (n+1)^{\theta_\star} \rfloor$ \;
		\eIf{$s_n < \max(s_{n+1}, s_0)$}{
			Set $s_{n+1} = s_n +1$ \;
		}{
			Set $s_{n+1} = s_n$ \;
		}
		\eIf{$t_n < \max(t_{n+1}, t_0)$}{
			Set $t_{n+1} = t_n +1$ \;
		}{
			Set $t_{n+1} = t_n$ \;
		}
		Update $\hat{\sigma}^2_{n,\LASER(q,1)}$ with $X_{n+1}, \mathcal{C}(\hat{\sigma}^2_{n,\LASER(q,1)}), s_{n+1}, t_{n+1}$ \;
		Update $\hat{v}_{q,n,\LASER(p,1)}$ with $X_{n+1}, \mathcal{C}(\hat{v}_{q,n,\LASER(p,1)})$ \;
	}
\end{algorithm}

There are several remarks about Algorithm \ref{algo:auto}.
First, the nuisance parameter estimate $\hat{\kappa}$ seems to use fewer observations than the updated long-run variance estimate $\hat{\sigma}^2_{n+1}$.
This is natural because the selector uses the previous long-run variance estimate $\hat{\sigma}^2_n$ as the denominator for $\hat{\kappa}$.
If the numerator $\hat{v}_{q,n}$ was updated first, the nuisance parameter estimate could be unstable in a mini-batch setting where $\hat{\kappa} = \left| \hat{v}_{q,n_{j+1}} \right|/\hat{\sigma}^2_{n_j}$.
Alternatively, we can update $\hat{\kappa}$ as follows:

\begin{enumerate}[label=(\arabic*)]
	\item Set $\hat{\kappa}_n = \left| \hat{v}_{q,n} \right|/\hat{\sigma}^2_n$.
	\item Update $\hat{\sigma}^2_n$ with $X_{n+1}, \hat{\kappa}_n$.
	\item Update $\hat{v}_{q,n}$.
	\item Set $\hat{\kappa}_{n+1} = \left| \hat{v}_{q,n+1} \right|/\hat{\sigma}^2_{n+1}$.
	\item Reupdate $\hat{\sigma}^2_n$ with $X_{n+1}, \hat{\kappa}_{n+1}$.
\end{enumerate}

Using a similar procedure, it is further possible to update $\hat{\kappa}$ iteratively.
However, we do not recommend doing so because the computation time will increase significantly with little improvement in statistical efficiency.

Second, updating $\hat{v}_{q,n}$ also requires smoothing parameters selection.
Denote these ancillary smoothing parameters by $s_n^\diamond$ and $t_n^\diamond$.
We notice that it is not possible to select the coefficients of $s_n^\diamond$ and $t_n^\diamond$ based on asymptotic theory as another nuisance parameter $\kappa_{p+q}$ will arise.
This circular problem is common and well-known in nonparametric estimation.
If we just consider the exponents, choosing $s_n^\diamond = \lfloor n^{1/(1+2p)} \rfloor$ may be too small and lead to poor empirical performance.
Therefore, we suggest
\[
s_n^\diamond
= \left\{
\begin{array}{ll}
	\lfloor n^{1/2} \rfloor, &n \le n_0; \\
	\left\lfloor \max\big[ n_0^{1/2}, (p+q) n^{1/(1+2p)} \big] \right\rfloor, &n > n_0, \\
\end{array}
\right.
\]
where $n_0 = 1000$ is a threshold for small sample size and the empirical adjustment $\lfloor n^{1/2} \rfloor$ is recommended by \citet{jones2006fw}.
The ancillary tapering parameter $t_n^\diamond$ may admit the same form except that the floor functions are replaced by the ceiling functions as $t_n^\diamond$ cannot be zero.
Although it is possible to select the coefficients according to some benchmark models such as an autoregressive model instead of using $p+q$, this kind of benchmarking is still optimal to a particular stochastic process only.
Hence we recommend the simple choice $p+q$ to solve the problem of small exponents when $p+q$ is large.

Third, the main smoothing parameters $s_n$ and $t_n$ can be lower bounded by $s_0$ and $t_0$ in finite sample.
This is because if users wrongly choose a large characteristic exponent $q$ without accounting for the empirical serial dependence, the computed $s_n$ and $t_n$ may be too small in finite sample.
In light of a similar problem, \citet{jones2006fw} suggested using $\lfloor n^{1/2} \rfloor$.
However, this is not adaptive and Theorem \ref{thm:mse} states that this is not $\mathcal{L}^2$-optimal.
To balance theoretical and practical considerations, we provide the possibility to lower bound $s_n$ and $t_n$ by $s_0$ and $t_0$ in our R package, where the default is $s_0 = t_0 = 5$.
The empirical adjustment $\lfloor n^{1/2} \rfloor$ is implemented for the aforementioned ancillary smoothing parameters $s_n^\diamond$ and $t_n^\diamond$ in small sample only.
Example \ref{eg:auto} shows that recursive estimation with Algorithm \ref{algo:auto} performs very close to that with the oracle and much better than that with a pilot study.

\begin{remark} \label{rmk:vq}
	By Theorem \ref{thm:mse}, a larger $q$ improves the convergence rate of $\hat{\sigma}_{n,\LASER(q,\phi)}^2$ subject to the empirical serial dependence.
	A larger $p$ improves $\hat{v}_{q,n,\LASER(p,\phi)}$ similarly but this improvement has little effect on the long-run variance estimate.
\end{remark}

\subsection{Multivariate Extension} \label{sec:algo-multi}

In the main text, we focus on long-run variance estimation to introduce new ideas.
Nevertheless, our framework is also applicable to long-run covariance matrix estimation.
Consider a $d$-dimensional time series $\{ X_i=(X_{i}^{(1)}, \ldots, X_{i}^{(d)})^\T \}_{i\in\Z}$, where $d\in\Z^+$.
The long-run covariance matrix is defined as $\Sigma = \lim_{n\to\infty} n\Var(\bar{X}_n)$.
We can estimate $\Sigma$ by
\begin{equation} \label{eq:genClassMulti}
	\hat{\Sigma}_n = \hat{\Sigma}_n(W)
	= \frac{1}{n} \sum_{i=1}^n \sum_{j=1}^n W_n(i, j) (X_i -\bar{X}_n)
	(X_j -\bar{X}_n)^\T.
\end{equation}
Using the symmetry of \cref{eq:genClassMulti}, the $(h,k)$-th entry of $\hat{\Sigma}_n$ can be expressed as
\begin{align}
	\hat{\Sigma}_n^{(h,k)}
	&= \frac{1}{2n} \sum_{i=1}^n \sum_{j=1}^n W_n(i, j)
	\left( \hat{X}_i^{(h)} \hat{X}_j^{(k)}
	+\hat{X}_i^{(k)} \hat{X}_j^{(h)} \right) \nonumber \\
	&= \frac{1}{2n} \sum_{i=1}^n \sum_{j=1}^n W_n(i, j)
	\left\{ \left( \hat{X}_i^{(h)}+\hat{X}_i^{(k)}\right)
	\left( \hat{X}_j^{(h)}+\hat{X}_j^{(k)} \right)
	-\hat{X}_i^{(h)}\hat{X}_j^{(h)}
	-\hat{X}_i^{(k)}\hat{X}_j^{(k)} \right\} \nonumber \\
	&= \frac{1}{2} \left\{ \hat{\sigma}^2_n(W;X^{(h)}+X^{(k)})
	-\hat{\sigma}^2_n(W;X^{(h)})
	-\hat{\sigma}^2_n(W;X^{(k)}) \right\}, \label{eq:genClassMultiEntry}
\end{align}
where $\hat{X}_i^{(k)} = X_i^{(k)} - \bar{X}_n^{(k)}$ is the demeaned version of $X_i^{(k)}$ for $i=1, \ldots,n$ and $k=1,\ldots,d$; and
$\hat{\sigma}^2_{n}(W;Y)$ denotes the long-run variance estimator $\hat{\sigma}_n^2(W)$ if (univariate) data $Y_{1:n}$ are used.
In view of \cref{eq:genClassMultiEntry}, \hyperref[eq:genClassMulti]{$\hat{\Sigma}_n$} can be expressed as a function of $d(d+1)/2$ long-run variance estimators $\hat{\sigma}_n^2$ with different input samples.
Hence, all computational and statistical properties discussed before are also enjoyed by
\hyperref[eq:genClassMulti]{$\hat{\Sigma}_n$} without the need of deriving new formulas or theories.

In practice, users are usually interested in a scalar statistic $c^\T \bar{X}_n$, instead of the whole vector $\bar{X}_n$, for some $c\in\R^d$.
Then, the smoothing parameters in $\hat{\Sigma}_n$ can be selected so that the long-run variance estimator of the reference statistic, i.e., $\hat{\sigma}^2_{n}(W; c^\T X)$, is optimized.
The vector $c$ can be interpreted as the relative importance of the $d$ entries in $X$.
The selected smoothing parameters can be used for all entries, which enable efficient matrix operations and enhance computational efficiency.
An additional advantage of this approach is that users only need to specify the relative importance of the $d$ entries in $X$.
In contrast, \citet{rtacm} required users to specify the relative importance of $d\times d$ entries in $\Sigma$.
It is not only difficult to interpret but also hard to determine the relative importance.

\section{\texorpdfstring{Proof of Theorem \ref{thm:consistency}}{
		Proof of Theorem 3.1}} \label{sec:proof-consistency} 

The proof is divided into 5 steps, which are stated in \crefrange{sec:proof-consistency-mean}{sec:proof-consistency-special}.
Each step requires some technical lemmas, whose proofs are deferred to \cref{sec:proof-of-lemmas}.

\subsection{Replacement of Sample Mean} \label{sec:proof-consistency-mean}

We prove Theorem \ref{thm:consistency}(a) first.
By Assumption \ref{asum:winGen}(b) and the symmetry of $\gamma_k$, we can focus on the following rewritten form:
\begin{equation} \label{eq:sigmaClass}
	\hat{\sigma}^2_n
	= \frac{1}{n} \sum_{i=1}^n (X_i -\bar{X}_n)^2 + \frac{2}{n} \sum_{i=2}^n \sum_{j=1}^{i-1} W_n(i,j) (X_i -\bar{X}_n) (X_j -\bar{X}_n).
\end{equation}
Define the known-mean version of \hyperref[eq:sigmaClass]{$\hat{\sigma}^2_n$} as
\begin{equation} \label{eq:sigmaMean}
	\bar{\sigma}^2_n
	= \frac{1}{n} \sum_{i=1}^n (X_i -\mu)^2 + \frac{2}{n} \sum_{i=2}^n \sum_{j=1}^{i-1} W_n(i,j) (X_i -\mu) (X_j -\mu).
\end{equation}
By the Minkowski inequality, we have
\[
\norm{ \hat{\sigma}^2_n - \sigma^2 }_{\frac{\alpha}{2}}
\le \norm{ \hat{\sigma}^2_n -\bar{\sigma}^2_n }_{\frac{\alpha}{2}}
+ \norm{ \bar{\sigma}^2_n -\sigma^2 }_{\frac{\alpha}{2}}.
\]
Lemma \ref{lem:eqvMean} states that
$\norm{ \hat{\sigma}^2_n -\bar{\sigma}^2_n }_{\alpha/2} = o(1)$.
Hence it suffices to show that the known-mean version satisfies
$\norm{ \bar{\sigma}^2_n -\sigma^2 }_{\alpha/2} = o(1)$.

Without loss of generality, assume $\mu = 0$ for the remaining of this proof.
Otherwise, consider the demeaned series $\{X_i -\mu\}$.

\subsection{\texorpdfstring{Approximation by $m$-dependent Process}{
		Approximation by m-dependent Process}} \label{sec:proof-consistency-mDepend}

Define the $m$-dependent process and $m$-dependent process approximated version of \hyperref[eq:sigmaMean]{$\bar{\sigma}^2_n$} as
\begin{align}
	\tilde{X}_i
	&= \E(X_i \mid \epsilon_{i-m}, \ldots, \epsilon_i), \label{eq:xMDepend} \\
	\tilde{\sigma}^2_n
	&= \frac{1}{n} \sum_{i=1}^n \tilde{X}_i^2 + \frac{2}{n} \sum_{i=2}^n \sum_{j=1}^{i-1} W_n(i,j) \tilde{X}_i \tilde{X}_j, \label{eq:sigmaMDepend}
\end{align}
respectively.
By the Minkowski inequality, we have
\[
\norm{ \bar{\sigma}^2_n -\sigma^2 }_{\frac{\alpha}{2}}
\le \norm{ \bar{\sigma}^2_n -\E(\bar{\sigma}^2_n) -\tilde{\sigma}^2_n +\E(\tilde{\sigma}^2_n) }_{\frac{\alpha}{2}}
+ \norm{ \tilde{\sigma}^2_n -\E(\tilde{\sigma}^2_n) +\E(\bar{\sigma}^2_n) -\sigma^2 }_{\frac{\alpha}{2}}.
\]
Lemma \ref{lem:eqvMDepend} states that
$\norm{ \bar{\sigma}^2_n -\E(\bar{\sigma}^2_n) -\tilde{\sigma}^2_n +\E(\tilde{\sigma}^2_n) }_{\alpha/2} = o(1)$.
Thus, it remains to show that \linebreak
$\norm{ \tilde{\sigma}^2_n -\E(\tilde{\sigma}^2_n) +\E(\bar{\sigma}^2_n) -\sigma^2 }_{\alpha/2} = o(1)$.

\subsection{Approximation by Martingale Difference} \label{sec:proof-consistency-mart}

For $i=1,2,\ldots,n$, define
\begin{align}
	M_i &= \sum_{h=0}^\infty \E(\tilde{X}_{i+h} \mid \mathcal{F}_i), \label{eq:aMart} \\
	D_i &= M_i -\E(M_i \mid \mathcal{F}_{i-1}). \label{eq:dMart}
\end{align}
Note that $\{D_i\}$ is a martingale difference sequence with respect to the filtration $\mathcal{F}_i$.
Therefore, the martingale difference approximated version of \hyperref[eq:sigmaMDepend]{$\tilde{\sigma}^2_n$} is
\begin{equation} \label{eq:sigmaMart}
	\breve{\sigma}^2_n = \frac{1}{n} \sum_{i=1}^n D_i^2 + \frac{2}{n} \sum_{i=2}^n \sum_{j=1}^{i-1} W_n(i,j) D_i D_j.
\end{equation}
Applying the Minkowski inequality again,
\[
\norm{ \tilde{\sigma}^2_n -\E(\tilde{\sigma}^2_n) +\E(\bar{\sigma}^2_n) -\sigma^2 }_{\frac{\alpha}{2}}
\le \norm{ \tilde{\sigma}^2_n -\E(\tilde{\sigma}^2_n) -\breve{\sigma}^2_n +\E(\breve{\sigma}^2_n) }_{\frac{\alpha}{2}}
+\norm{ \breve{\sigma}^2_n -\E(\breve{\sigma}^2_n) +\E(\bar{\sigma}^2_n) -\sigma^2 }_{\frac{\alpha}{2}}.
\]
By Lemma \ref{lem:eqvMart}, $\norm{ \tilde{\sigma}^2_n -\E(\tilde{\sigma}^2_n) -\breve{\sigma}^2_n +\E(\breve{\sigma}^2_n) }_{\alpha/2} = o(1)$.
Therefore, it suffices to prove that \linebreak
$\norm{ \breve{\sigma}^2_n -\E(\breve{\sigma}^2_n) +\E(\bar{\sigma}^2_n) -\sigma^2 }_{\alpha/2} = o(1)$.

\subsection{Application of Ergodic Theorem} \label{sec:proof-consistency-ergodic}

Since $D_i$'s are martingale differences, $\E(D_i D_j) = 0$ for $i \neq j$.
We can also choose $m$ such that $m \to \infty$ as $n \to \infty$ by Assumption \ref{asum:winSums}; see \cref{sec:proof-eqvMart}.
Therefore, by Lemma \ref{lem:varDMart},
$\E(\breve{\sigma}^2_n) \to \sigma^2$ as $n \to \infty$.
Using the Ergodic Theorem (see, e.g., \citealt{durrett2019probability}), we obtain
$\norm{ \breve{\sigma}^2_n -\sigma^2 }_{\alpha/2} = o(1)$.
On the other hand,
\begin{align*}
	\E(\bar{\sigma}^2_n)
	&= \gamma_0 +\frac{2}{n} \sum_{i=2}^n \sum_{j=1}^{i-1} W_n(i,j) \gamma_{i-j}
	= \gamma_0 +\frac{2}{n} \sum_{i=2}^n \sum_{k=1}^{i-1} W_n(i,i-k) \gamma_k \\
	&= \gamma_0 +2 \sum_{k=1}^{n-1} \frac{1}{n} \sum_{i=k+1}^n W_n(i,i-k) \gamma_k \\
	&= \gamma_0 +2 \sum_{1 \le k \le b_n} w_{n,k} \gamma_k +2 \sum_{b_n < k < n} w_{n,k} \gamma_k,
\end{align*}
where $w_{n,k} = n^{-1} \sum_{i=k+1}^n W_n(i,i-k)$ is defined in Assumption \ref{asum:winGen}.
By the definition of long-run variance $\sigma^2 = \gamma_0 +2\sum_{k \in \Z^+} \gamma_k$ and Minkowski inequality, we have
\begin{align*}
	\left| \E(\bar{\sigma}^2_n) -\sigma^2 \right|
	&= \left| 2 \sum_{1 \le k \le b_n} (w_{n,k}-1) \gamma_k +2 \sum_{b_n < k < n} (w_{n,k}-1) \gamma_k -2 \sum_{k=n}^\infty \gamma_k \right| \\
	&\le 2 \sum_{1 \le k \le b_n} |w_{n,k}-1| |\gamma_k|  +2 \sum_{b_n < k < n} (|w_{n,k}|+1) |\gamma_k| +2 \sum_{k=n}^\infty |\gamma_k|.
\end{align*}
Since $\sum_{k \in \Z} |\gamma_k| < \infty$ under Assumption \ref{asum:stability} \citep{wu2009recursive},
we have $\sum_{k=n}^\infty |\gamma_k| = o(1)$,
\begin{align*}
	\sum_{1 \le k \le b_n} |w_{n,k}-1| |\gamma_k|
	&\le \max_{1 \le k \le b_n} |w_{n,k}-1| \sum_{1 \le k \le b_n} |\gamma_k|
	= o(1), \\
	\sum_{b_n < k < n} (|w_{n,k}|+1) |\gamma_k|
	&= \left( \max_{b_n < k < n} |w_{n,k}|+1 \right) \sum_{b_n < k < n} |\gamma_k|
	= o(1).
\end{align*}
It follows that $\left| \E(\bar{\sigma}^2_n) -\sigma^2 \right| = o(1)$ and
\[
\left| \E(\bar{\sigma}^2_n) -\E(\breve{\sigma}^2_n) \right|
\le \left| \E(\bar{\sigma}^2_n) -\sigma^2 \right| +\left| \E(\breve{\sigma}^2_n) -\sigma^2 \right|
= o(1).
\]
Combining the results using the Minkowski inequality completes the proof of Theorem \ref{thm:consistency}(a):
\[
\norm{ \breve{\sigma}^2_n -\E(\breve{\sigma}^2_n) +\E(\bar{\sigma}^2_n) -\sigma^2 }_{\frac{\alpha}{2}}
\le \norm{ \breve{\sigma}^2_n -\sigma^2 }_{\frac{\alpha}{2}} + \left| \E(\bar{\sigma}^2_n) -\E(\breve{\sigma}^2_n) \right|
= o(1).
\]

\subsection{Verification of Special Cases} \label{sec:proof-consistency-special}

Theorem \ref{thm:consistency}(b) is a special case of Theorem \ref{thm:consistency}(a).
Therefore, it suffices to verify Assumptions \ref{asum:winSums} and \ref{asum:winGen} under Definition \ref{def:winLASER}.
We verify Assumption \ref{asum:winSums} first.
For $G_{1,n}$, we have
\begin{align}
	G_{1,n} &= \max_{1 \le i \le n} \sum_{j=1}^n \left\{ \left| 1 -\frac{|i-j|^q}{t_n^q} \right| \I_{|i-j|\le s'_{i \lor j}}
	+\left( 1 -\frac{|i-j|^q}{t_n^q} \right)^2 \I_{|i-j|\le s'_{i \lor j}} \right\} \nonumber \\
	&\le \max_{1 \le i \le n} \sum_{j=1}^n \left| 1 -\frac{|i-j|^q}{t_n^q} \right| \I_{|i-j|\le \phi s_n}
	+\max_{1 \le i \le n} \sum_{j=1}^n \left( 1 -\frac{|i-j|^q}{t_n^q} \right)^2 \I_{|i-j|\le \phi s_n} \label{eq:rampG1} \\
	&\le 2 \sum_{k=0}^{\phi s_n} \left| 1 -\frac{k^q}{t_n^q} \right| +2 \sum_{k=0}^{\phi s_n} \left( 1 -\frac{k^q}{t_n^q} \right)^2 \nonumber \\
	&\le O(1) \cdot \int_0^{\phi s_n} \left| 1 -\frac{k^q}{t_n^q} \right| \,\di k +O(1) \cdot \int_0^{\phi s_n} \left( 1 -\frac{k^q}{t_n^q} \right)^2 \,\di k \label{eq:riemannG1} \\
	&\le O(n^\psi +n^{q(\psi-\theta)+\psi}) +O(n^\psi +n^{2q(\psi-\theta)+\psi})
	= O(n^{\psi +\max\{ 2q(\psi-\theta), 0\}}), \label{eq:boundG1}
\end{align}
where \cref{eq:rampG1} follows from the fact that $s'_{i \lor j} \le \phi s_n$ for $1 \le i,j \le n$, and
\cref{eq:riemannG1} follows from approximating the sums by Riemann integrals.
When $\psi \le \theta$, check that $\psi +\max\{ 2q(\psi-\theta), 0 \} < 2-2/\alpha'$.
When $\psi > \theta$, we have
\[
\psi +\max\{ 2q(\psi-\theta), 0 \}
< \psi -2q \cdot \frac{\psi-2+2/\alpha'}{2q}
= 2 -\frac{2}{\alpha'}.
\]
For $G_{2,n}$, observe that
\begin{align}
	G_{2,n} &= \max_{1 \le i,j \le n} \big| W_n(i,j) \big|
	+\max_{1 \le i \le n} \left\{ \sum_{j=2}^n \big| W_n(i,j) -W_n(i,j-1) \big|^{\alpha'} \right\}^{\frac{1}{\alpha'}} \nonumber \\
	&\le \max\left( \left| 1-\frac{\phi^q s_n^q}{t_n^q} \right|, 1 \right) +2 \sum_{k=0}^{\phi s_n} \left| \frac{(k+1)^q -k^q}{t_n^q} \right| \nonumber \\
	&\le O( n^{\max\{ q(\psi -\theta), 0 \}} ) + O(1) \cdot \int_0^{\phi s_n} \frac{k^{q-1}}{t_n^q} \,\di k \label{eq:riemannG2} \\
	&= O( n^{\max\{ q(\psi -\theta), 0 \}} +n^{q(\psi -\theta)} )
	= O( n^{\max\{ q(\psi -\theta), 0 \}} ), \label{eq:boundG2}
\end{align}
where \cref{eq:riemannG2} follows from approximating the sum by a Riemann integral.
When $\psi \le \theta$, check that $c \in (0, 1-1/\alpha')$ satisfies
$\max\{ q(\psi -\theta), 0 \} = 0 < c$.
When $\psi > \theta$, there exists $c \in (1-1/\alpha'-\psi/2, 1-1/\alpha')$ such that
\[
\max\{ q(\psi -\theta), 0 \}
< -q \cdot \frac{\psi-2+2/\alpha'}{2q} < c.
\]
Next, we verify Assumption \ref{asum:winGen}.
It is clear that $W_n(i,j) = W_n(j,i)$ and $W_n(i,i)=1$ for all $n\in\Z^+$ and $i,j \in \{1, \ldots, n\}$ under Definition \ref{def:winLASER}.
In addition, note that
\begin{align*}
	\max_{1 \le k \le b_n} |w_{n,k}-1|
	&= \max_{1 \le k \le b_n} \left| \frac{1}{n} \sum_{i=k+1}^n \left( 1 -\frac{k^q}{t_n^q} \right) \I_{k \le s'_i} -1 \right| \\
	&= \max_{1 \le k \le b_n} \left| -\frac{1}{n} \sum_{i=k+1}^n \frac{k^q}{t_n^q} \I_{k \le s'_i} -\frac{1}{n} \sum_{i=k+1}^n \I_{k > s'_i} -\frac{k}{n} \right| \\
	&\le \max_{1 \le k \le b_n} \left| \frac{1}{n} \sum_{i=k+1}^n \frac{k^q}{t_n^q} \I_{k \le s'_i} \right| +\frac{1}{n} \sum_{i=1}^n \I_{b_n > s'_i} +\frac{b_n}{n}, \\
	\max_{b_n < k < n} |w_{n,k}|
	&= \max_{b_n < k < n} \left| \frac{1}{n} \sum_{i=k+1}^n \left( 1 -\frac{k^q}{t_n^q} \right) \I_{k \le s'_i} \right| \\
	&\le \max_{b_n < k < n} \left| \frac{1}{n} \sum_{i=k+1}^n \frac{k^q}{t_n^q} \I_{k \le s'_i} \right| +\frac{1}{n} \sum_{i=1}^n \I_{b_n \le s'_i}.
\end{align*}
When $\psi \le \theta$, note that $s'_i \ge C n^{2\psi/3}$ for some $C \in \R^+$ and $i > n^{2/3}$.
We can choose $b_n = O(n^{\psi/2})$ such that
\[
\frac{1}{n} \sum_{i=1}^n \I_{b_n > s'_i}
\le \frac{n^{2/3}}{n} +\frac{1}{n} \sum_{n^{2/3} < i \le n} \I_{b_n > C n^{2\psi/3}}
= o(1).
\]
Therefore,
\begin{align*}
	\max_{1 \le k \le b_n} \left| \frac{1}{n} \sum_{i=k+1}^n \frac{k^q}{t_n^q} \I_{k \le s'_i} \right| +\frac{1}{n} \sum_{i=1}^n \I_{b_n > s'_i} +\frac{b_n}{n}
	&\le \frac{1}{n} \sum_{i=1}^n \frac{b_n^q}{t_n^q} +o(1) +o(1)
	= o(1), \\
	\max_{b_n < k < n} \left| \frac{1}{n} \sum_{i=k+1}^n \frac{k^q}{t_n^q} \I_{k \le s'_i} \right| +\frac{1}{n} \sum_{i=1}^n \I_{b_n \le s'_i}
	&\le \frac{\phi^q}{n} \sum_{i=1}^n \frac{s_n^q}{t_n^q} +O(1)
	= O(1).
\end{align*}
When $\psi > \theta$, Assumption \ref{asum:winGen}(c) does not hold.
However, Assumption \ref{asum:winGen}(c) is only used to show that $\left| \E(\bar{\sigma}^2_n) -\sigma^2 \right| = o(1)$ in \cref{sec:proof-consistency-ergodic}.
We prove the same thing under Definition \ref{def:winLASER}(b) and Assumption \ref{asum:uq}.
Consider the case $\phi=1$, by \cref{eq:interBias} we have
\[
\E(\bar{\sigma}_n^2)
\sim \sigma^2 - 2 \sum_{h=1}^{s_n} \frac{h^q}{t_n^q} \gamma_h -2 \sum_{h=1}^{s_n} \left( \frac{h}{s_n} \right)^{\frac{1}{\psi}} \left( 1 -\frac{h^q}{t_n^q} \right) \gamma_h.
\]
Since $u_{\tilde{q}} < \infty$ under Assumption \ref{asum:uq}, Kronecker's lemma gives
\begin{align*}
	\sum_{h=1}^{s_n} \frac{h^q}{t_n^q} \gamma_h
	&= \frac{s_n^{q-\tilde{q}}}{t_n^q} \sum_{h=1}^{s_n} \frac{h^{q-\tilde{q}}}{s_n^{q-\tilde{q}}} h^{\tilde{q}} \gamma_h
	= O( n^{q(\psi-\theta)-\tilde{q} \psi} ) o(1)
	= o( n^{q(\psi-\theta)-\tilde{q} \psi} ), \\
	\sum_{h=1}^{s_n} \left( \frac{h}{s_n} \right)^{\frac{1}{\psi}} \gamma_h
	&= \frac{1}{s_n^{\tilde{q}}} \sum_{h=1}^{s_n} \frac{h^{1/\psi}}{s_n^{1/\psi-\tilde{q}}} \gamma_h \\
	&= \frac{1}{s_n^{\tilde{q}}} \sum_{h=1}^{\lceil s_n^{1-\tilde{q}\psi} \rceil} \frac{h^{1/\psi}}{\big(s_n^{1-\tilde{q}\psi} \big)^{1/\psi}} \gamma_h
	+\frac{1}{s_n^{\tilde{q}}} \sum_{h=\lceil s_n^{1-\tilde{q}\psi} \rceil +1}^{s_n} \frac{\big(h^{1-\tilde{q}\psi} \big)^{1/\psi}}{\big(s_n^{1-\tilde{q}\psi} \big)^{1/\psi}} h^{\tilde{q}} \gamma_h \\
	&= O(n^{-\tilde{q} \psi}) o(1) +O(n^{-\tilde{q} \psi}) o(1)
	= o(n^{-\tilde{q} \psi}), \\
	\sum_{h=1}^{s_n} \left( \frac{h}{s_n} \right)^{\frac{1}{\psi}} \frac{h^q}{t_n^q} \gamma_h
	&= \frac{s_n^{q-\tilde{q}}}{t_n^q} \sum_{h=1}^{s_n} \frac{h^{q-\tilde{q}+1/\psi}}{s_n^{q-\tilde{q}+1/\psi}} h^{\tilde{q}} \gamma_h
	= o( n^{q(\psi-\theta)-\tilde{q} \psi} ).
\end{align*}
Check that
$
q(\psi-\theta)-\tilde{q}\psi < 0
\Leftrightarrow (q-\tilde{q})/q \psi < \theta,
$
which always holds under Definition \ref{def:winLASER}(b).
The case $\phi>1$ can be similarly proved using \cref{eq:interBiasRamp}.
\hfill$\qedsymbol$

\section{\texorpdfstring{Proof of Theorem \ref{thm:mse}}{
		Proof of Theorem 3.2}} \label{sec:proof-mse} 

The proof is divided into 6 steps, which are stated in \crefrange{sec:proof-var-mean}{sec:proof-bias}.
Each step requires some technical lemmas, whose proofs are deferred to \cref{sec:proof-of-lemmas}.
Before beginning the proof, we explain some blocking variables that will be used later:

\begin{align}
	\nu_k &= \left\lceil \frac{1}{(\phi-1) \psi} \left( \frac{1}{\Psi} \right)^{\frac{1}{\psi}} k^{\frac{1}{\psi}-2} \right\rceil, \label{eq:blockNo} \\
	\varpi_k &= \lceil (\phi-1)k \rceil, \label{eq:blockWidth} \\
	\eta_k &= \left\{
	\begin{array}{ll}
		1 +\sum_{h=1}^{k-1} \varpi_h \cdot \nu_h, &\quad \phi \in (1, \infty); \\
		\left\lceil \left( \frac{k}{\Psi} \right)^{\frac{1}{\psi}} \right\rceil, &\quad \phi = 1. \\
	\end{array}
	\right. \label{eq:blockStart}
\end{align}
The first block includes both $\{i \in \Z^+: s_i=0\}$ and $\{i \in \Z^+: s_i=1\}$.
For $\phi > 1$, we can partition $\{1,\ldots,n\}$ into $s_n$ blocks of data with ramps as the subsample size will be ramped:
\[
\{1, \ldots, n\}
= \underbrace{\Bigg\{ \bigcup_{h=1}^{s_n-1}
	\overbrace{\bigcup_{i=1}^{\nu_h}
		\overbrace{\bigcup_{j=\eta_h+(i-1)\varpi_h}^{\eta_h+i\varpi_h-1} \{j\}}^{i \text{-th ramp}}
	}^{h \text{-th block}} \Bigg\}}_{s_n-1 \text{ complete block}}
\bigcup \underbrace{\Bigg\{ \bigcup_{j=\eta_{s_n}}^n \{j\} \Bigg\}}_\text{last incomplete block},
\]
where $\cup_{i=a}^{b} E_i = \emptyset$ if $a > b$, for any sets $E_i$'s.
We visualize an example in Figure \ref{fig:blocking}.
Note that the physical meaning of $\eta_k$ is the minimum sample size such that the intended subsample size is $k$.
For $\phi=1$, this can be seen from
\[
\Psi\eta_k^\psi \sim k
\Leftrightarrow \eta_k \sim \left( \frac{k}{\Psi} \right)^{\frac{1}{\psi}}.
\]
For $\phi>1$, this interpretation also holds but we define $\eta_k$ based on $\nu_k$ and $\varpi_k$ instead, where
$\nu_k$ is the number of ramps and
$\varpi_k$ is the width of a ramp in the $k$-th block:
\[
(\phi-1)k \cdot \nu_k \sim \eta_{k+1} -\eta_k
\Leftrightarrow \nu_k \sim \frac{1}{(\phi-1) \psi} \left( \frac{1}{\Psi} \right)^{\frac{1}{\psi}} k^{\frac{1}{\psi}-2}.
\]

\begin{figure}[!t]
	\centering
	\includegraphics[width=0.6\linewidth]{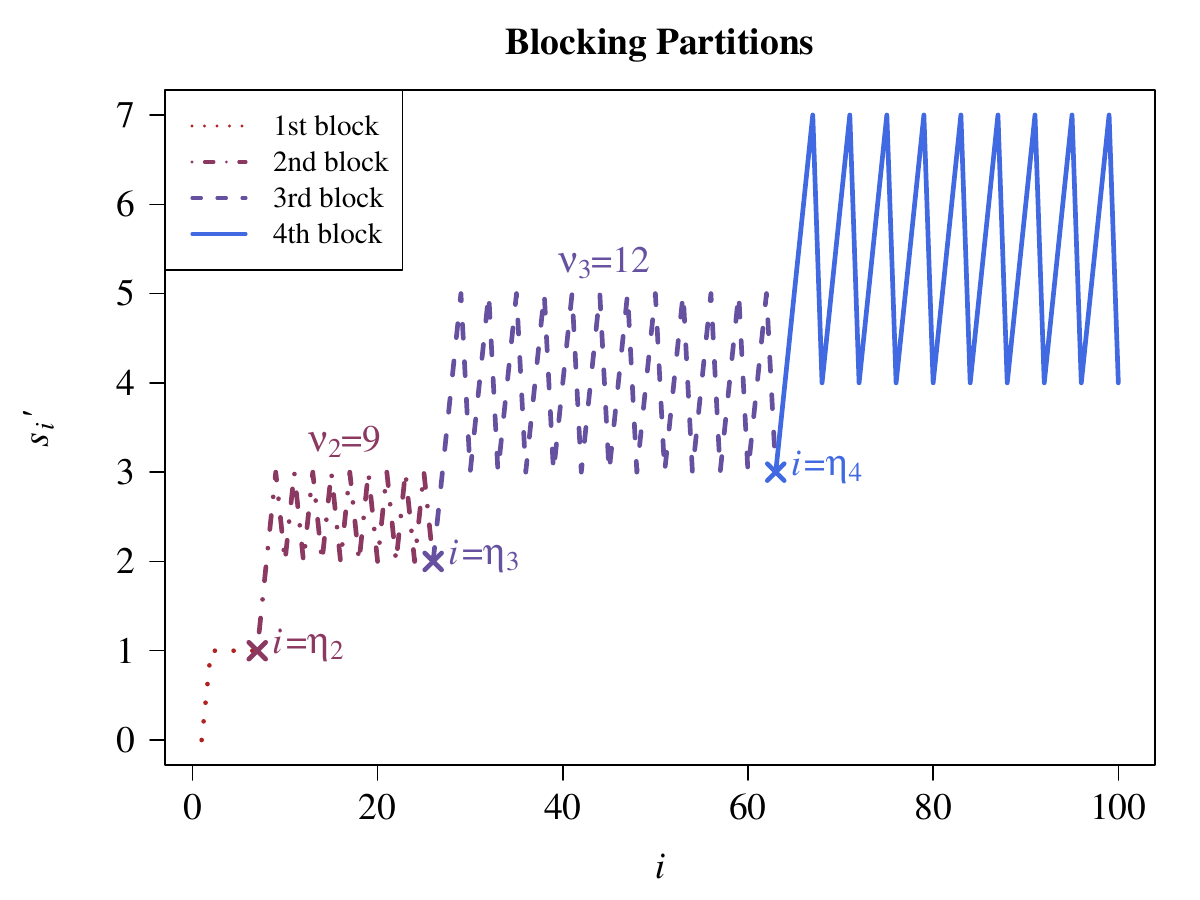}
	\caption{A blocking example with $n=100$, $\Psi=1$, $\psi=1/3$ and $\phi=2$.
		The 4th block (solid blue) is incomplete so we do not need to find $\nu_4$.}
	\label{fig:blocking}
\end{figure}

\subsection{Replacement of Sample Mean} \label{sec:proof-var-mean}

We find the exact convergence rate of variance first.
Recall the definitions of unknown-mean version $\hat{\sigma}_n^2$ and known-mean version $\bar{\sigma}_n^2$ in \cref{eq:sigmaClass,eq:sigmaMean}, respectively.
For simplicity, write $\norm{ \cdot }_2 = \norm{ \cdot }$.
We are going to approximate $\norm{\hat{\sigma}_n^2-\E(\hat{\sigma}_n^2)}$ by
$\norm{\bar{\sigma}_n^2-\E(\bar{\sigma}_n^2)}$.
To this end, we can use the following procedure:

First, by the Minkowski inequality,
\begin{align}
	\norm{ \hat{\sigma}^2_n -\E(\hat{\sigma}^2_n) }
	&\le \norm{ \bar{\sigma}^2_n -\E(\bar{\sigma}^2_n) } + \norm{ \hat{\sigma}^2_n -\bar{\sigma}^2_n } + \left| \E(\hat{\sigma}^2_n) - \E(\bar{\sigma}^2_n) \right| \nonumber \\
	&= \norm{ \bar{\sigma}^2_n -\E(\bar{\sigma}^2_n) } +H_1 +H_2. \label{eq:interVarMean}
\end{align}
By similar steps as in \cref{eq:boundG1},
\[
G_{3,n} = O(n^{2\psi +\max\{ 2q(\psi-\theta), 0\} +1}) \quad \text{and} \quad
G_{4,n} = O(n^{\psi +\max\{ q(\psi-\theta), 0\} +1}),
\]
where $G_{3,n}$ and $G_{4,n}$ are defined in Lemma \ref{lem:eqvMean}.
By Lemma \ref{lem:eqvMean} and the Jensen's inequality,
\begin{align*}
	H_1 &= O(n^{-1}) +O\left( n^{-\frac{3}{2}} G_{3,n}^{\frac{1}{2}} \right) +O\left( n^{-2} G_{4,n} \right)
	= O\left( n^{\psi +\max\{ q(\psi-\theta), 0\} -1} \right), \\
	H_2 &\le \E \left| \hat{\sigma}^2_n - \bar{\sigma}^2_n \right|
	\le H_1 = O\left( n^{\psi +\max\{ q(\psi-\theta), 0\} -1} \right).
\end{align*}
\cref{eq:interVarMean} now becomes
\begin{equation} \label{eq:upperVarMean}
	\norm{ \hat{\sigma}^2_n -\E(\hat{\sigma}^2_n) }
	\le \norm{ \bar{\sigma}^2_n -\E(\bar{\sigma}^2_n) } + O\left( n^{\psi +\max\{ q(\psi-\theta), 0\} -1} \right).
\end{equation}
On the other hand, applying the Minkowski inequality to
$\norm{ \bar{\sigma}^2_n -\E(\bar{\sigma}^2_n) }$ yields
\begin{align*}
	\norm{ \bar{\sigma}^2_n -\E(\bar{\sigma}^2_n) }
	&\le \norm{ \hat{\sigma}^2_n -\E(\hat{\sigma}^2_n) } + H_1 + H_2 \\
	&= \norm{ \hat{\sigma}^2_n -\E(\hat{\sigma}^2_n) } +O\left( n^{\psi +\max\{ q(\psi-\theta), 0\} -1} \right),
\end{align*}
which implies
\begin{equation} \label{eq:lowerVarMean}
	\norm{ \hat{\sigma}^2_n -\E(\hat{\sigma}^2_n) }
	\ge \norm{ \bar{\sigma}^2_n -\E(\bar{\sigma}^2_n) } + O\left( n^{\psi +\max\{ q(\psi-\theta), 0\} -1} \right).
\end{equation}
Check that
$2\psi +\max\{ 2q(\psi-\theta), 0\} -2 < \psi +\max\{ 2q(\psi-\theta), 0\} -1$
always holds under Definition \ref{def:winLASER} because $\psi < 1/(1+q) < 1$.
Hence, it suffices to show that
\begin{equation} \label{eq:boundVarMean}
	\norm{ \bar{\sigma}^2_n -\E(\bar{\sigma}^2_n) }^2
	\sim \mathcal{V}_{\psi,\Psi,\theta,\Theta,q,\phi} \sigma^4 n^{\psi +\max\{ 2q(\psi-\theta), 0\} -1}.
\end{equation}

Without loss of generality, assume $\mu = 0$ for the remaining of this proof.
Otherwise, consider the demeaned series $\{X_i -\mu\}$.

\subsection{Approximation by Blocking} \label{sec:proof-var-block}

When $\phi > 1$, we can partition the subsamples into $s_n$ blocks using \crefrange{eq:blockNo}{eq:blockStart}:
\begin{align}
	\bar{\sigma}^2_n
	={}& \frac{1}{n} \sum_{i=1}^n X_i^2 + \frac{2}{n} \sum_{i=2}^n \sum_{j=1}^{i-1} W_n(i,j) X_i X_j \nonumber \\
	={}& \frac{1}{n} \sum_{i=1}^n X_i^2 + \frac{2}{n} \sum_{i=2}^n \sum_{k=1}^{s'_i} \left( 1 -\frac{k^q}{t_n^q} \right) X_i X_{i-k} \nonumber \\
	={}& \frac{2}{n} \sum_{h=1}^{s_n-1} \sum_{i=1}^{\nu_h} \sum_{j=\eta_h+(i-1)\varpi_h}^{\eta_h+i\varpi_h-1} \sum_{k=1}^{j-\eta_h-(i-1)\varpi_h+h} \left( 1 -\frac{k^q}{t_n^q} \right) X_j X_{j-k} \nonumber \\
	& +\frac{1}{n} \left\{ \sum_{i=1}^n X_i^2 + 2 \sum_{j=\eta_{s_n}}^n \sum_{k=1}^{s'_j} \left( 1 -\frac{k^q}{t_n^q} \right) X_j X_{j-k} \right\} \nonumber \\
	={}& \bar{\sigma}^2_{1,n} +\bar{\sigma}^2_{2,n}, \label{eq:interBlockRamp}
\end{align}
where $\bar{\sigma}^2_{2,n}$ is constructed from the sum of squared terms and the last incomplete block.
Then, we can use the same procedure from \crefrange{eq:interVarMean}{eq:boundVarMean} to argue that $\bar{\sigma}^2_{2,n}$ is asymptotically negligible in finding the variance.
First, by the Minkowski inequality,
\[
\norm{ \bar{\sigma}^2_n -\E(\bar{\sigma}^2_n) }
\le \norm{ \bar{\sigma}^2_{1,n} -\E(\bar{\sigma}^2_{1,n}) } +\norm{ \bar{\sigma}^2_{2,n} -\E(\bar{\sigma}^2_{2,n}) }.
\]
In view of Lemma \ref{lem:lastBlock},
\[
\norm{ \bar{\sigma}^2_{2,n} -\E(\bar{\sigma}^2_{2,n}) }^2
= o\big( n^{\psi +\max\{ 2q(\psi-\theta), 0\} -1} \big).
\]
Repeating the same arguments in \cref{eq:upperVarMean,eq:lowerVarMean}, it remains to prove
\[
\norm{ \bar{\sigma}^2_{1,n} -\E(\bar{\sigma}^2_{1,n}) }^2
\sim \mathcal{V}_{\psi,\Psi,\theta,\Theta,q,\phi} \sigma^4 n^{\psi +\max\{ 2q(\psi-\theta), 0\} -1},
\]
for the case $\phi > 1$.
If $\phi = 1$, we do not need to partition the subsamples as in \cref{eq:interBlockRamp} because the subsampling parameter is monotonically increasing instead of being ramped.
In this case, we consider
\begin{align}
	\bar{\sigma}^2_n &= \frac{1}{n} \sum_{i=1}^n X_i^2 + \frac{2}{n} \sum_{i=2}^n \sum_{k=1}^{s'_i} \left( 1 -\frac{k^q}{t_n^q} \right) X_i X_{i-k} \nonumber \\
	&= \frac{2}{n} \sum_{h=1}^{s_n} \sum_{i=\eta_h}^{n} \left( 1 -\frac{h^q}{t_n^q} \right) X_i X_{i-h} + \frac{1}{n} \sum_{i=1}^n X_i^2 \nonumber \\
	&= \bar{\sigma}^2_{3,n} +\bar{\sigma}^2_{4,n}, \label{eq:interBlock}
\end{align}
where $\bar{\sigma}^2_{4,n}$ is constructed from the sum of squared terms only.
Since $\norm{\bar{\sigma}^2_{4,n} -\E(\bar{\sigma}^2_{4,n})}^2 = O(n^{-1})$ by Lemma \ref{lem:lastBlock},
we can similarly argue that $\bar{\sigma}^2_{4,n}$ is asymptotically negligible in finding the variance.

With slight abuse of notation, we only write $\bar{\sigma}^2_n$ in \cref{sec:proof-var-mDepend,sec:proof-var-mart} for tidiness.
This is possible because the results on $\bar{\sigma}^2_n$ also apply to \hyperref[eq:interBlockRamp]{$\bar{\sigma}^2_{1,n}$} and \hyperref[eq:interBlock]{$\bar{\sigma}^2_{3,n}$}, which is equivalent to say that the last incomplete block or sum of squared terms does not exist.
We will separate the two cases when we try to find $\mathcal{V}_{\psi,\Psi,\theta,\Theta,q,\phi}$.

\subsection{\texorpdfstring{Approximation by $m$-dependent Process}{
		Approximation by m-dependent process}} \label{sec:proof-var-mDepend}

Recall the definition of $m$-dependent process approximated version $\tilde{\sigma}^2_n$ in \cref{eq:sigmaMDepend}.
We are going to approximate $\norm{\bar{\sigma}_n^2-\E(\bar{\sigma}_n^2)}$ by $\norm{\tilde{\sigma}_n^2-\E(\tilde{\sigma}_n^2)}$.
To this end, we can use the same procedure from \crefrange{eq:interVarMean}{eq:boundVarMean} again.
By the Minkowski inequality,
\[
\norm{ \bar{\sigma}^2_n -\E(\bar{\sigma}^2_n) }
\le \norm{ \tilde{\sigma}_n^2-\E(\tilde{\sigma}_n^2) } +\norm{ \bar{\sigma}^2_n -\E(\bar{\sigma}_n^2) -\tilde{\sigma}^2_n +\E(\tilde{\sigma}_n^2) }.
\]
Since $\psi > 0$, we can always choose $m = o(n^{\psi/6})$ such that $m \to \infty$ as $n \to \infty$.
Then, $d_{m, \alpha} = o(1)$ as $n \to \infty$ by Lemma \ref{lem:dMDepend}, and
$G_{5,n} = O(n^{\psi +\max\{2q(\psi -\theta),0\} })$ by \cref{eq:boundG1}, where
$d_{m,\alpha}$ and $G_{5,n}$ are defined in \cref{eq:dMDepend} and Lemma \ref{lem:eqvMDepend}, respectively.
For $\alpha \ge 4$, Lemma \ref{lem:eqvMDepend} states that
\[
\norm{ \bar{\sigma}^2_n -\E(\bar{\sigma}^2_n) -\tilde{\sigma}^2_n +\E(\tilde{\sigma}^2_n) }
= O\left( n^{-\frac{1}{2}} d_{m,\alpha} G_{5,n}^{\frac{1}{2}} \right)
= o\left( n^{\frac{\psi}{2} +\max\{q(\psi -\theta),0\} -\frac{1}{2}} \right).
\]
Repeating the same arguments as in \cref{eq:upperVarMean,eq:lowerVarMean}, it remains to show that
\[
\norm{ \tilde{\sigma}_n^2-\E(\tilde{\sigma}_n^2) }^2
\sim \mathcal{V}_{\psi,\Psi,\theta,\Theta,q,\phi} \sigma^4 n^{\psi +\max\{ 2q(\psi-\theta), 0\} -1}.
\]

\subsection{Approximation by Martingale Difference}\label{sec:proof-var-mart}

Recall the definition of martingale difference approximated version $\breve{\sigma}^2_n$ in \cref{eq:sigmaMart}.
We can use the same procedure from \crefrange{eq:interVarMean}{eq:boundVarMean} again
to approximate $\norm{\tilde{\sigma}_n^2-\E(\tilde{\sigma}_n^2)}$ by
$\norm{\breve{\sigma}_n^2-\E(\breve{\sigma}_n^2)}$.
By the Minkowski inequality,
\[
\norm{ \tilde{\sigma}_n^2-\E(\tilde{\sigma}_n^2) }
\le \norm{ \breve{\sigma}_n^2-\E(\breve{\sigma}_n^2) } +\norm{ \tilde{\sigma}^2_n -\E(\tilde{\sigma}_n^2) -\breve{\sigma}_n^2 +\E(\breve{\sigma}_n^2) }.
\]
Note that $G_{6,n} = O(n^{\max\{ q(\psi -\theta), 0 \}})$ by \cref{eq:boundG2},
where $G_{6,n}$ is defined in Lemma \ref{lem:eqvMart}.
In \cref{sec:proof-var-mDepend}, we have chosen $m = o(n^{\psi/6})$.
It follows from Lemma \ref{lem:eqvMart} that
\[
\norm{ \tilde{\sigma}^2_n -\E(\tilde{\sigma}^2_n) -\breve{\sigma}^2_n +\E(\breve{\sigma}^2_n) }
= O\left( m^{\frac{5}{2}} n^{-\frac{1}{2}} G_{6,n} \right)
= o\left( n^{\frac{\psi}{2} +\max\{ q(\psi -\theta), 0 \} -\frac{1}{2}} \right),
\]
as $\alpha \ge 4$ implies $\alpha'=2$.
Repeating the same arguments as in \cref{eq:upperVarMean,eq:lowerVarMean}, it suffices to prove
\[
\norm{ \breve{\sigma}^2_n -\E(\breve{\sigma}^2_n) }^2
\sim \mathcal{V}_{\psi,\Psi,\theta,\Theta,q,\phi} \sigma^4 n^{\psi +\max\{ 2q(\psi-\theta), 0\} -1}.
\]

\subsection{Exact Convergence Rate of Variance} \label{sec:proof-var}

Consider the case $\phi > 1$.
We similarly define $\breve{\sigma}^2_{1,n}$ as in \cref{eq:interBlockRamp} with $X_j$ replaced by martingale difference \hyperref[eq:dMart]{$D_j$}.
By \cref{sec:proof-var-block}, we can work on
$\norm{ \breve{\sigma}^2_{1,n} -\E(\breve{\sigma}^2_{1,n}) }$ instead
of $\norm{ \breve{\sigma}^2_n -\E(\breve{\sigma}^2_n) }$.
Then, by $\E(D_iD_j) = 0$ when $i \ne j$ (see \cref{sec:proof-varDMart}) and the Minkowski inequality,
\begin{align}
	& \norm{ \breve{\sigma}^2_{1,n} -\E(\breve{\sigma}^2_{1,n}) } \nonumber \\
	={}& \frac{2}{n} \norm{ \sum_{h=1}^{s_n-1} \sum_{i=1}^{\nu_h} \sum_{j=\eta_h+(i-1)\varpi_h}^{\eta_h+i\varpi_h-1} \sum_{k=1}^{j-\eta_h-(i-1)\varpi_h+h} \left( 1 -\frac{k^q}{t_n^q} \right) D_j D_{j-k} } \nonumber \\
	\le{}& \frac{2}{n} \norm{ \sum_{h=\lceil (m+1)/\phi \rceil}^{s_n-1} \sum_{i=1}^{\nu_h} \sum_{j=\eta_h+(i-1)\varpi_h+m+1}^{\eta_h+i\varpi_h-1} \sum_{k=m+1}^{j-\eta_h-(i-1)\varpi_h+h} \left( 1 -\frac{k^q}{t_n^q} \right) D_j D_{j-k} } \nonumber \\
	& +\frac{2}{n} \norm{ \sum_{h=1}^{s_n-1} \sum_{i=1}^{\nu_h} \sum_{j=\eta_h+(i-1)\varpi_h}^{\eta_h+i\varpi_h-1} \sum_{k=1}^{\min\{j-\eta_h-(i-1)\varpi_h+h,m\}} \left( 1 -\frac{k^q}{t_n^q} \right) D_j D_{j-k} } \nonumber \\
	={}& H_3 +H_4, \label{eq:interVarRamp}
\end{align}
where $m$ is the ancillary variable used in $m$-dependent process approximation.
In \cref{sec:proof-var-mDepend}, we have chosen $m = o(n^{\psi/6})$,
which means $m = o(s_n)$ under Definition \ref{def:winLASER}.
As the physical meaning of $j-\eta_h-(i-1)\varpi_h+h$ in the above is the ramped subsample size at the $j$-th observation,
$m = o(s_n)$ implies that $\min\{j-\eta_h-(i-1)\varpi_h+h-1,m\} = m$ most of the time.
Consequently, $H_4 = o(H_3)$.
Repeating the same arguments as in \cref{eq:upperVarMean,eq:lowerVarMean}, we only need to prove
\begin{equation} \label{eq:varExactRamp}
	H_3^2 \sim \mathcal{V}_{\psi,\Psi,\theta,\Theta,q,\phi} \sigma^4 n^{\psi +\max\{ 2q(\psi-\theta), 0\} -1}.
\end{equation}
For $\phi = 1$, we can similarly define $\breve{\sigma}^2_{3,n}$ as in \cref{eq:interBlock} so that
\begin{align}
	& \norm{ \breve{\sigma}^2_{3,n} -\E(\breve{\sigma}^2_{3,n}) } \nonumber \\
	={}& \frac{2}{n} \norm{ \sum_{h=1}^{s_n} \sum_{i=\eta_h}^{n} \left( 1 -\frac{h^q}{t_n^q} \right) D_i D_{i-h} } \nonumber \\
	\le{}& \frac{2}{n} \norm{ \sum_{h=m+1}^{s_n} \sum_{i=\eta_h}^{n} \left( 1 -\frac{h^q}{t_n^q} \right) D_i D_{i-h} } + \frac{2}{n} \norm{ \sum_{h=1}^m \sum_{i=\eta_h}^{n} \left( 1 -\frac{h^q}{t_n^q} \right) D_i D_{i-h} } \nonumber \\
	={}& H_5 +H_6. \label{eq:interVar}
\end{align}
Repeating the arguments from \crefrange{eq:interVarRamp}{eq:varExactRamp}, we only need to prove
\begin{equation}
	H_5^2 \sim \mathcal{V}_{\psi,\Psi,\theta,\Theta,q,\phi} \sigma^4 n^{\psi +\max\{ 2q(\psi-\theta), 0\} -1}. \label{eq:varExact}
\end{equation}
Lemma \ref{lem:varExact} confirms \cref{eq:varExactRamp,eq:varExact}.

\subsection{Exact Convergence Rate of Bias} \label{sec:proof-bias}

Now, we find the exact convergence rate of bias.
To approximate $\Bias(\hat{\sigma}_n^2)$ by $\Bias(\bar{\sigma}_n^2)$, we use the same procedure from \crefrange{eq:interVarMean}{eq:boundVarMean}.
Note that
\[
\Bias(\hat{\sigma}_n^2)
= \Bias(\bar{\sigma}_n^2) +\E(\hat{\sigma}_n^2) -\E(\bar{\sigma}_n^2).
\]
By \cref{eq:interVarMean},
\[
|\E(\hat{\sigma}_n^2) -\E(\bar{\sigma}_n^2)| = H_2 = O\left( n^{\psi +\max\{ q(\psi-\theta), 0\} -1} \right).
\]
Check that $\psi -1 < -\psi$ and $\psi +q(\psi-\theta) -1 < -q \theta$ always hold under Definition \ref{def:winLASER} because $\psi < 1/(1+q) \le 1/2$.
Repeating the same arguments as in \cref{eq:upperVarMean,eq:lowerVarMean}, we can assume $\mu = 0$, and it suffices to prove
\[
\Bias(\bar{\sigma}_n^2) \sim \left\{
\begin{array}{ll}
	o(n^{-q\psi}), & \psi < \theta; \\
	-\Theta^{-q} n^{-q \theta} v_q, & \psi \ge \theta. \\
\end{array}
\right.
\]
To better illustrate the idea, we consider the case $\phi = 1$ first.
Recall that we can rearrange the order of summations in \cref{eq:interBlock}.
Taking expectation yields
\begin{align}
	\E(\bar{\sigma}_n^2) &= \frac{1}{n} \sum_{i=1}^n \gamma_0 + \frac{2}{n} \sum_{h=1}^{s_n} \sum_{i=\eta_h}^{n} \left( 1 -\frac{h^q}{t_n^q} \right) \gamma_h \nonumber \\
	&= \gamma_0 + 2 \sum_{h=1}^{s_n} \frac{n-\eta_h+1}{n} \left( 1 -\frac{h^q}{t_n^q} \right) \gamma_h \nonumber \\
	&\sim \gamma_0 + 2 \sum_{h=1}^{s_n} \left\{ 1 -\frac{h^q}{t_n^q} -\left( \frac{h}{s_n} \right)^{\frac{1}{\psi}} \left( 1 -\frac{h^q}{t_n^q} \right) \right\} \gamma_h, \label{eq:interBias}
\end{align}
where we have used the order of $\eta_h$ in \cref{eq:blockStart} to arrive at \cref{eq:interBias}.
As $n \to \infty$, $\gamma_0 + 2 \sum_{h=1}^{s_n} \gamma_h \to \sigma^2$.
On the other hand, by Kronecker's lemma and Assumption \ref{asum:qWeakStable} that $u_q = \sum_{j \in \Z} |j|^q |\gamma_j| < \infty$,
\begin{align*}
	\sum_{h=1}^{s_n} \left( \frac{h}{s_n} \right)^{\frac{1}{\psi}} \gamma_h
	&= \frac{1}{s_n^q} \sum_{h=1}^{s_n} \frac{h^{1/\psi}}{s_n^{1/\psi-q}} \gamma_h \\
	&= \frac{1}{s_n^q} \sum_{h=1}^{\lceil s_n^{1-q\psi} \rceil} \frac{h^{1/\psi}}{\big(s_n^{1-q\psi} \big)^{1/\psi}} \gamma_h
	+\frac{1}{s_n^q} \sum_{h=\lceil s_n^{1-q\psi} \rceil +1}^{s_n} \frac{\big(h^{1-q\psi} \big)^{1/\psi}}{\big(s_n^{1-q\psi} \big)^{1/\psi}} h^q \gamma_h \\
	&= \frac{1}{s_n^q} o(1) +\frac{1}{s_n^q} o(1)
	= o(s_n^{-q}).
\end{align*}
Check that $1-q\psi > 0$ always holds under Definition \ref{def:winLASER}.
Similarly,
\begin{equation}
	\sum_{h=1}^{s_n} \left( \frac{h}{s_n} \right)^{\frac{1}{\psi}} \frac{h^q}{t_n^q} \gamma_h
	= \frac{1}{t_n^q} \sum_{h=1}^{s_n} \left( \frac{h}{s_n} \right)^{\frac{1}{\psi}} h^q \gamma_h
	= \frac{1}{t_n^q} o(1)
	= o(t_n^{-q}). \label{eq:boundBias}
\end{equation}
Consequently, \cref{eq:interBias} becomes
\begin{align*}
	\E(\bar{\sigma}_n^2) &\sim \left\{
	\begin{array}{ll}
		\sigma^2 -2 \sum_{h=1}^{s_n} \left( h/s_n \right)^{\frac{1}{\psi}} \gamma_h +O(t_n^{-q}), & \psi < \theta; \\
		\sigma^2 -2 t_n^{-q} \sum_{h=1}^{s_n} h^q \gamma_h +o(t_n^{-q}), & \psi \ge \theta \\
	\end{array}
	\right. \\
	&= \left\{
	\begin{array}{ll}
		\sigma^2 +o(n^{-q \psi}), & \psi < \theta; \\
		\sigma^2 -\Theta^{-q} n^{-q \theta} v_q +o(n^{-q \theta}), & \psi \ge \theta. \\
	\end{array}
	\right.
\end{align*}
For $\phi > 1$, note that the partitions in \cref{eq:interBlockRamp} cannot give a form like \cref{eq:interBias} after taking expectation.
By rearranging the order of summations, approximating the sums by Riemann integrals and using the order of $\nu_i$ in \cref{eq:blockNo},
\begin{align*}
	\E(\bar{\sigma}_n^2)
	={}& \gamma_0 + \frac{2}{n} \sum_{h=1}^{s_n-1} \sum_{i=1}^{\nu_h} \sum_{j=\eta_h+(i-1)\varpi_h}^{\eta_h+i\varpi_h-1} \sum_{k=1}^{j-\eta_h-(i-1)\varpi_h+h} \left( 1 -\frac{k^q}{t_n^q} \right) \gamma_k +\frac{2}{n} \sum_{j=\eta_{s_n}}^n \sum_{k=1}^{s'_j} \left( 1 -\frac{k^q}{t_n^q} \right) \gamma_k \\
	\sim{}& \gamma_0 + \frac{2}{n} \sum_{k=1}^{\lceil \phi s_n \rceil} \gamma_k \int_{k/\phi}^{s_n} \int_0^{\nu_h} \int_{\max(k-h,0)}^{(\phi-1)h} \left( 1 -\frac{k^q}{t_n^q} \right) \, \di j \ \di i\ \di h \\
	=& \gamma_0 + \frac{2}{n} \sum_{h=1}^{\lceil \phi s_n \rceil} \gamma_h \int_{h/\phi}^{s_n} \int_0^{\nu_i} \int_{\max(h-i,0)}^{(\phi-1)i} \left( 1 -\frac{h^q}{t_n^q} \right) \, \di k\ \di j\ \di i \\
	={}& \gamma_0 + \frac{2}{n} \sum_{h=1}^{\lceil \phi s_n \rceil} \left( 1 -\frac{h^q}{t_n^q} \right) \gamma_h \int_{h/\phi}^{s_n} \nu_i \{ (\phi-1) i -\max(h-i,0) \} \,\di i \\
	\sim{}& \gamma_0 + \frac{2}{(\phi-1)\psi} \left( \frac{1}{s_n} \right)^{\frac{1}{\psi}} \sum_{h=1}^{s_n} \left( 1 -\frac{h^q}{t_n^q} \right) \gamma_h \left\{ \int_{h/\phi}^h (\phi i-h) i^{\frac{1}{\psi}-2} \,\di i + \int_h^{s_n} (\phi-1) i^{\frac{1}{\psi}-1} \,\di i \right\} \\
	& +\frac{2}{(\phi-1)\psi} \left( \frac{1}{s_n} \right)^{\frac{1}{\psi}} \sum_{h=s_n+1}^{\lceil \phi s_n \rceil} \left( 1 -\frac{h^q}{t_n^q} \right) \gamma_h \int_{h/\phi}^{s_n} (\phi i-h) i^{\frac{1}{\psi}-2} \,\di i.
\end{align*}
Evaluating the integrals gives
\begin{align}
	\E(\bar{\sigma}_n^2)
	\sim{}& \gamma_0 + 2 \sum_{h=1}^{s_n} \left( 1 -\frac{h^q}{t_n^q} \right) \left\{ 1 - \frac{\psi \left(\phi^{1 - 1/\psi} - 1\right)}{\left(\phi - 1\right) \left(\psi - 1\right)} \left( \frac{h}{s_n} \right)^{\frac{1}{\psi}} \right\} \gamma_h \nonumber \\
	+{}& 2 \sum_{h=s_n+1}^{\lceil \phi s_n \rceil} \left( 1 -\frac{h^q}{t_n^q} \right) \left\{ \frac{\phi}{\phi-1} +\frac{1}{(\phi-1)(\psi-1)} \cdot \frac{h}{s_n} - \frac{\phi^{1-1/\psi} \psi}{\left(\phi - 1\right) \left(\psi - 1\right)} \left( \frac{h}{s_n} \right)^{\frac{1}{\psi}} \right\} \gamma_h. \label{eq:interBiasRamp}
\end{align}
Repeating the arguments from \crefrange{eq:interBias}{eq:boundBias}, we have
\[
\E(\bar{\sigma}_n^2) \sim \left\{
\begin{array}{ll}
	\sigma^2 +o(n^{-q\psi}), & \psi < \theta; \\
	\sigma^2 -\Theta^{-q} n^{-q \theta} v_q +o(n^{-q\theta}), & \psi \ge \theta, \\
\end{array}
\right.
\eqno\qedsymbol
\]

\section{\texorpdfstring{Proof of Corollary \ref{coro:mse}}{
		Proof of Corollary 3.3}} \label{sec:proof-optimPar} 

To begin with, we try to find $\Psi_\star$ assuming we do not separate the smoothing parameters, i.e., $\Psi = \Theta$.
If $\psi = \theta = 1/(1+2q)$, then as $n \to \infty$, the standardized mean squared error is given by
\[
n^{2q/(1+2q)} \MSE(\hat{\sigma}_n^2) \\
\to \left\{
\begin{array}{ll}
	\frac{\kappa_q^2}{\Psi^{2q}}
	+\frac{\Psi \left(\phi + 1\right) \left(2 q + 1\right)}{q + 1}
	-\frac{8 \Psi \left(\phi^{q + 2} -1\right) \left(2 q + 1\right)}{\left(\phi - 1\right) \left(q + 1\right) \left(q + 2\right) \left(3 q + 2\right)}
	+\frac{\Psi \left(\phi^{2 q + 2} - 1\right)}{\left(\phi - 1\right) \left(q + 1\right) \left(2 q + 1\right)}, & \phi > 1; \\
	\frac{\kappa_q^2}{\Psi^{2q}}
	+\frac{2 \Psi \left(2 q + 1\right)}{q + 1}
	-\frac{8 \Psi \left(2 q + 1\right)}{\left(q + 1\right) \left(3 q + 2\right)}
	+\frac{2 \Psi}{2 q + 1}, & \phi = 1. \\
\end{array}
\right.
\]
Differentiating the standardized mean squared error with respect to $\Psi$, we can derive the minimizer $\Psi_\star$ easily:
\[
\Psi_\star = \left\{
\begin{array}{ll}
	\left\{ \frac{(\phi + 1) (2 q + 1)}{2q (q + 1)}
	- \frac{4 (\phi^{q + 2} -1) (2 q + 1)}{(\phi - 1) q (q + 1) (q + 2) (3 q + 2)}
	+ \frac{(\phi^{2 q + 2} - 1)}{2 (\phi - 1) q (q + 1) (2 q + 1)} \right\}^{-\frac{1}{1+2q}} \kappa_q^{\frac{2}{1+2q}}, & \phi > 1; \\
	\left\{ \frac{2 q + 1}{q (q + 1)}
	- \frac{4 (2 q + 1)}{q (q + 1) (3 q + 2)}
	+ \frac{1}{q(2 q + 1)} \right\}^{-\frac{1}{1+2q}} \kappa_q^{\frac{2}{1+2q}}, & \phi = 1. \\
\end{array}
\right.
\]
When the smoothing parameters are separated, we can allow $\Theta = \rho \Psi_\star$, where $\rho \in \R^+$, to view $\Theta$ as an improvement to the optimal value without separation of parameters.
Consequently, we find $\rho$ that minimize
\[
n^{2q/(1+2q)} \MSE(\hat{\sigma}_n^2) \\
\to \left\{
\begin{array}{ll}
	\frac{\kappa_q^2}{\rho^{2q} \Psi_\star^{2q}}
	+\frac{\Psi_\star \left(\phi + 1\right) \left(2 q + 1\right)}{q + 1}
	-\frac{8 \rho^{-q} \Psi_\star \left(\phi^{q + 2} -1\right) \left(2 q + 1\right)}{\left(\phi - 1\right) \left(q + 1\right) \left(q + 2\right) \left(3 q + 2\right)}
	+\frac{\rho^{-2q} \Psi_\star \left(\phi^{2 q + 2} - 1\right)}{\left(\phi - 1\right) \left(q + 1\right) \left(2 q + 1\right)}, & \phi > 1; \\
	\frac{\kappa_q^2}{\rho^{2q} \Psi_\star^{2q}}
	+\frac{2 \Psi_\star \left(2 q + 1\right)}{q + 1}
	-\frac{8 \rho^{-q} \Psi_\star \left(2 q + 1\right)}{\left(q + 1\right) \left(3 q + 2\right)}
	+\frac{2 \rho^{-2q} \Psi_\star}{2 q + 1}, & \phi = 1. \\
\end{array}
\right.
\]
Differentiating the above with respect to $\rho$, the minimizer is
\[
\rho_\star = \left\{
\begin{array}{ll}
	\left\{ \frac{(q+2) (3q+2) (\phi^{2q+2}-1)}{4 (2q+1)^2 (\phi^{q+2}-1)}
	+ \frac{\Psi_\star^{-2q-1} \kappa_q^2 (\phi-1) (q+1) (q+2) (3q+2)}{4 (2q+1) (\phi^{q+2}-1)} \right\}^{\frac{1}{q}}, & \phi > 1; \\
	\left\{ \frac{(q+1) (3q+2)}{2 (2q+1)^2}
	+ \frac{\Psi_\star^{-2q-1} \kappa_q^2 (q+1) (3q+2)}{4 (2q+1)} \right\}^{\frac{1}{q}}, & \phi = 1. \\
\end{array}
\right.
\]
Note that the $\kappa_q^2$ from $\Psi_\star^{-2q-1}$ will cancel with the $\kappa_q^2$ in $\rho_\star$.
Therefore, the $\mathcal{L}^2$-optimal $\Theta$ is
\[
\Theta_\star = \left\{
\begin{array}{ll}
	\left\{ \frac{(q+2) (3q+2) (\phi^{2q+2}-1)}{4 (2q+1)^2 (\phi^{q+2}-1)}
	+ \frac{\Psi_\star^{-2q-1} \kappa_q^2 (\phi-1) (q+1) (q+2) (3q+2)}{4 (2q+1) (\phi^{q+2}-1)} \right\}^{\frac{1}{q}} \Psi_\star, & \phi > 1; \\
	\left\{ \frac{(q+1) (3q+2)}{2 (2q+1)^2}
	+ \frac{\Psi_\star^{-2q-1} \kappa_q^2 (q+1) (3q+2)}{4 (2q+1)} \right\}^{\frac{1}{q}} \Psi_\star, & \phi = 1. \\
\end{array}
\right.
\]
Using $\Psi_\star$ and $\Theta_\star$, we can check that the standardized mean squared error decreases when $\phi$ decreases, ceteris paribus.
Hence the $\mathcal{L}^2$-optimal $\phi$ is $\phi_\star=1$.
If $O(1)$-space update is required, the $\mathcal{L}^2$-optimal $\phi$ is $\phi_\star=2$ by Proposition \ref{prop:on}.
\hfill$\qedsymbol$

\section{\texorpdfstring{Proof of Proposition \ref{prop:on}}{
		Proof of Proposition 2.1}} \label{sec:proof-on} 

We separate the proof of time and space complexities.
For space complexity, we consider a more general setting that $\hat{\sigma}_n^2$ is any estimator that can be updated in $O(1)$ time.

\subsection{\texorpdfstring{Proof of $O(1)$ Time}{
		Proof of O(1) Time}}  \label{sec:proof-time}

From \cref*{eq:winDecom,eq:genClassOn}, we have
\[
\hat{\sigma}_n^2
= \frac{1}{n} \sum_{i=1}^n (X_i -\bar{X}_n) \left\{\sum_{j=1}^i (2 -\I_{i=j}) T\left( \frac{|i-j|}{t_n(i,j)} \right) S\left(  \frac{|i-j|}{s_n(i,j)} \right) (X_j -\bar{X}_n) \right\}.
\]
Under conditions (L), (A), (S) and (E) in Proposition \ref{prop:on}, $\hat{\sigma}_n^2$ can be further written as
\begin{align} \label{eq:genClassOn-expanded}
	\hat{\sigma}_n^2
	&= \frac{1}{n}\sum_{i=1}^n (X_i -\bar{X}_n) \left\{ \sum_{k=0}^{s_i}(2-\I_{k=0}) \sum_{r=0}^q a_r \left(\frac{k}{t_nt_i'}\right)^r (X_{i-k} -\bar{X}_n) \right\} \nonumber \\
	&= \frac{1}{n} \sum_{r=0}^q \frac{a_r}{t_n^r} \sum_{i=1}^n \frac{X_i}{(t'_i)^r} (K_{i,r} -k_{i,r} \bar{X}_n)
	-\frac{1}{n} \sum_{r=0}^q \frac{a_r}{t_n^r} \sum_{i=1}^n \frac{\bar{X}_n}{(t'_i)^r} (K_{i,r} -k_{i,r} \bar{X}_n),
\end{align}
where $K_{i,r} = \sum_{k=0}^{s_i}(2-\I_{k=0}) k^r X_{i-k}$ and $k_{i,r} = \sum_{k=0}^{s_i}(2-\I_{k=0}) k^r$.
To prove the $O(1)$-time property, it suffices to find a way such that $K_{i,r}$ in \cref{eq:genClassOn-expanded} can be computed in $O(1)$ time as $k_{i,r}$ is a special case.
To do so, note that
\begin{align*}
	K_{i,r} &= \sum_{k=0}^{s_i}(2-\I_{k=0}) k^r X_{i-k}
	= 0^r X_i +2 \sum_{j=i-s_i}^{i-1} (i-j)^r X_j \\
	&= 0^r X_i +2 \sum_{u=0}^r \binom{r}{u} i^{r-u} (-1)^u \sum_{j=i-s_i}^{i-1} j^u X_j \\
	&= 0^r X_i +2 \sum_{u=0}^r \binom{r}{u} i^{r-u} (-1)^u (P_{i-1,u} -P_{i-s_i-1,u}),
\end{align*}
where $P_{i,u} = \sum_{j=1}^i j^u X_j$.
If we store $P_{i,u}$ for $i=1,\ldots,n-1$ and $u=0,\ldots,q$, we can update $P_{n,u}$ by $P_{n-1,u} +n^u X_n$ for $u=0,\ldots,q$, and
consequently compute $K_{n,r}$ in $O(1)$ time, which completes the proof.

However, note that \cref{eq:genClassOn-expanded} requires $O(n)$ space as it has to store all $P_{i,u}$'s.
Furthermore, to support access of $P_{i,u}$ in $O(1)$ time, an array or hash table of unlimited size is necessary.
Otherwise, we may not be able to perform direct access, which is needed for computing $P_{n-1,u} -P_{n-s_n-1,u}$ with unrestricted $s_n$,
or we may need to resize the data structure, which typically costs $O(n)$ time under direct access; see Remark \ref{rmk:memory}.
To perform practical $O(1)$ time update, we recommend taking $t'_i=1$ for all $i \in \Z^+$, and restricting the growth of subsampling parameter such that $\sup_{i \in \Z^+} |s_{i+1}-s_i| \le 1$.
Denote the set of components used to update the $r$-th order term in the polynomial by $\mathcal{C}_{n,r}$. For instance,
\[
\mathcal{C}_{n,r} = \left\{ K_{n,r}, R_{n,r}, k_{n,r}, r_{n,r}, U_{n,r}, V_{n,r}, \left\{ d_{s_n,r}^{(b)}, D_{n,r}^{(b)} \right\}_{b=1,\ldots,r} \right\};
\]
see Algorithm \ref{algo:lase}.
Then, we have the Algorithm \ref{algo:time}.

\begin{algorithm}[!t]
	\caption{Practical $O(1)$-time update} \label{algo:time}
	\SetAlgoVlined
	\DontPrintSemicolon
	\SetNlSty{texttt}{[}{]}
	\small
	\textbf{input}: \;
	(i) $n, s$ -- the sample size and subsampling parameter from the last iteration\;
	(ii) $\mathcal{C}_{n,r}, r=0,1,\ldots,q$ -- the components for the $r$-th order term \;
	(iii) $\vec{x}$ -- the vector storing the last $s$ observations \;
	\Begin{
		Receive $X_{n+1}$ \;
		Set $n = n + 1$ \;
		Compute $s_n, t_n$ \;
		Retrieve $X_{n-1}, \ldots, X_{n-1-\max\{\min(s_n-1,q),0\}}$ and $X_{n-s-1}$ from $\vec{x}$ \;
		\eIf{$s_n == s$}{
			Update $\mathcal{C}_{n,r}$ with Prop. \ref{prop:lase-zm} and \ref{prop:lase} for $r=0,1,\ldots,q$ \;
			Pop $X_{n-s}$ from $\vec{x}$ \;
		}{
			Update $\mathcal{C}_{n,r}$ with Prop. \ref{prop:lase-zm} and \ref{prop:lase} for $r=0,1,\ldots,q$ \;
		}
		Push $X_n$ into $\vec{x}$ \;
		Set $s=s_n$ \;
		Output the online estimate computed with $\mathcal{C}_{n,r}$, where $r=0,1,\ldots,q$ \;
	}
\end{algorithm}

Now, we interpret the conditions in the first part of Proposition \ref{prop:on}.
Recall the backward finite difference operator $\nabla^{(b)} \cdot$ is defined by
\[
\nabla^{(1)} f(k) = f(k) -f(k-1) \quad \text{and} \quad
\nabla^{(b)} f(k) = \nabla^{(b-1)} f(k) -\nabla^{(b-1)} f(k-1),
\]
where $b \in \Z^+$ and $f$ is a function that takes an integer input $k$.
The implications of the polynomial $T(x)$ condition (E) are twofold.
First, the tapering parameter can be exteriorized, which has been briefly discussed in \cref*{sec:principle-driven-sufficient-conditions}.
This is important because the tapering parameter should depend on $n$ according to Principle A.
However, such tapered sum, e.g., $\sum_{k=1}^{s_i} \cos(\pi k/t_n) X_{i-k}$, may not be updated in $O(1)$ time when $t_n$ changes.
Polynomial $T(x)$ is free of this problem as we can work on $\sum_{k=1}^{s_i} T(k) X_{i-k}$ instead of $\sum_{k=1}^{s_i} T(k/t_n) X_{i-k}$.

Second, the backward finite difference method can be used to derive recursive formulas for polynomial $T(x)$.
This is because if $\nabla^q T(k)$ is a constant, we can recursively update $\sum_{k=1}^{s_i} T(k) X_{i-k}$ by rearranging the finite differences:
\begin{align*}
	\nabla^{q-1} T(k) &= \nabla^q T(k)+\nabla^{q-1} T(k-1), \\
	\nabla^{q-2} T(k) &= \nabla^{q-1} T(k)+\nabla^{q-2} T(k-1), \\
	&\hspace{0.4em} \vdots \\
	T(k) &= \nabla T(k) +T(k-1).
\end{align*}
Since the terms on right hand side are either constants, updated by their above formulas, or coming from the last iteration (for those with $T(k-1)$), we can update $\sum_{k=1}^{s_i} T(k) X_{i-k}$ in $O(1)$ time.
A working example is illustrated in \cref{sec:algo-lase}.
In contrast, if $\nabla^q T(k)$ is not a constant for any $q \in \Z^+$, the backward finite difference method will fail.
Consequently, it is uncertain whether $\sum_{k=1}^{s_i} \nabla^q T(k) X_{i-k}$ can be updated in $O(1)$ time.
A typical example is $T(k) = \exp(k)$ as its derivative, which is the continuous analogue of finite difference, is always $\exp(k)$.
Restricting $T(k)$ to be a $q$-th order polynomial avoids this problem as the constancy of $\nabla^q T(k)$ is well-known.

Next, we investigate the separable tapering parameter $t_n(i,j)=t_n t'_{i \lor j}$ condition (A).
Note that $i \lor j$ is taken in the smoothing parameter to make the window symmetric only.
If we do not assume $t'_i=1$ for all $i \in \Z^+$, $O(1)$-time update is still possible in Algorithm \ref{algo:time} by a similar decomposition as in \cref{eq:genClassOn-expanded}.
However, we do not consider it in Algorithm \ref{algo:time} because it may violate Principle A.
In contrast, if the tapering parameter is not separable, e.g., $t_n(i,j)=n+(i \lor j)$, the decomposition in \cref{eq:genClassOn-expanded} becomes
\[
\hat{\sigma}_n^2
= \frac{1}{n} \sum_{r=0}^q a_r \sum_{i=1}^n \frac{X_i}{(n+i)^r} (K_{i,r} -k_{i,r} \bar{X}_n)
-\frac{1}{n} \sum_{r=0}^q a_r \sum_{i=1}^n \frac{\bar{X}_n}{(n+i)^r} (K_{i,r} -k_{i,r} \bar{X}_n),
\]
Consequently, we need to re-compute $X_i K_{i,r}/(n+i)^r$ for $i=1,\ldots,n$ and $r=0,\ldots,q$, which likely cannot be done in $O(1)$ time.

Then, we discuss the subsampling function $S(x)=\I_{x\le 1}$ condition.
It is possible to use other subsampling functions such as $S(x)=\I_{x\le C}$ for some $C \in \R^+$.
Nevertheless, the form $S(x)=\I_{x\le 1}$ is commonly studied in the literature.
It also offers a direct relationship between $s_n(i,j)$ and the subsample size since
\[
\mathcal{S}_n(i)
= \left\{ X_j: S\left( \frac{|i-j|}{s_n(i,j)} \right) = 1, 1 \le j \le i\right\}
= \left\{ X_j: i-s_n(i,j) \le j \le i \right\},
\]
which means that the subsample size is approximately $s_n(i,j)$.
Note that we use the term ``approximately'' because $s_n(i,j)$ may not be an integer.
If we use other $S(x)$ such as $\I_{x<C}$, we will lose this nice interpretation of $s_n(i,j)$.

Finally, for the subsampling parameter condition (L), the importance of local subsample has been briefly elaborated in \cref*{sec:principle-driven-sufficient-conditions}.
In short, it makes online estimation possible by adapting the estimates to their corresponding points in time.
By further restricting the growth of subsampling parameter such that $\sup_{i \in \Z^+} |s_{i+1}-s_i| \le 1$,
Algorithm \ref{algo:time} only requires $O(s_n)$ space in contrast to the $O(n)$ space requirement of \cref{eq:genClassOn-expanded}.
In addition, Algorithm \ref{algo:time} can be implemented using a queue data structure, which supports insertions and the necessary access in $O(1)$ time.
We remark that the condition $\sup_{i \in \Z^+} |s_{i+1} -s_i| \le 1$ is sufficient throughout \cref{sec:algorithms} because the subsample size grows sublinearly and cannot exceed the sample size in practice; see Definition \ref{def:winLASER}.

\begin{remark} \label{rmk:memory}
	When an online algorithm exhausts the initial memory, its theoretically $O(1)$-time operations may need more than $O(1)$ time to deal with memory reallocation.
	Therefore, the choice of data structure matters in implementation, and $O(1)$-space update is desirable under memory constraint.
\end{remark}

\begin{remark} \label{rmk:time}
	We believe that the conditions in the first part of Proposition \ref{prop:on} are close to be necessary given their implications.
	In particular, it is clear that online estimation is closely related to the backward finite difference method.
	However, it will be difficult to prove that an online algorithm does not exist without certain conditions in general.
\end{remark}

\subsection{\texorpdfstring{Proof of $O(1)$ Space}{
		Proof of O(1) Space}}  \label{sec:proof-space}

To prove that the conditions are sufficient for having $O(1)$-space update, we consider the case that the intended subsampling parameter $s_i$ is monotonically increasing.
Denote the set of intended components and the set of precalculated components under ramping by $\mathcal{C}'_n$ and $\mathcal{C}''_n$, respectively.
In the more general setting that $\hat{\sigma}_n^2$ can be updated in $O(1)$ time, recursive formulas for updating $\mathcal{C}'_n$ and $\mathcal{C}''_n$ exist.
Then, the proof is completed by Algorithm \ref{algo:space}.
For the case that $s_i$ is fixed or prespecified, the algorithm is similar as the ramping upper bound $a_i$ is also prespecified.

\begin{algorithm}[!t]
	\caption{$O(1)$-space update via precalculation} \label{algo:space}
	\SetAlgoVlined
	\DontPrintSemicolon
	\SetNlSty{texttt}{[}{]}
	\small
	\textbf{input}: \;
	(i) $n, s$ -- the sample size and subsampling parameter from the last iteration\;
	(ii) $\mathcal{C}'_n, \mathcal{C}''_n$ -- the intended components and their precalculated versions \;
	(iii) $s', a$ -- the ramped subsampling parameter and its upper bound from the last iteration \;
	\Begin{
		Receive $X_{n+1}$ \;
		Set $n = n + 1$ \;
		Compute $s_n, t_n, a_n$ \;
		\eIf{$s'+1 < a$}{
			Set $s'_n = s' + 1, s_n = s, a_n = a$ \tcc*[f]{restrict $s_n$ when $s'_n>s'$} \;
			Update $\mathcal{C}'_n$ with their recursive formulas \;
			\eIf{$s'_n \ge a - s$}{
				Update $\mathcal{C}''_n$ with their recursive formulas \;
			}{
				Re-initialize $\mathcal{C}''_n$ with their recursive formulas \;
			}
		}{
			Set $s'_n = s_n, a = a_n$ \;
			Replace the applicable components in $\mathcal{C}'_n$ with $\mathcal{C}''_n$ \;
			Update $\mathcal{C}'_n$ with their recursive formulas \;
			Re-initialize $\mathcal{C}''_n$ with their recursive formulas \;
		}
		Set $s'=s'_n, s=s_n$ \;
		Output the recursive estimate computed with $\mathcal{C}'_n$ \;
	}
\end{algorithm}

A practical example of $\mathcal{C}'_n$ can be found in Algorithm \ref{algo:laser}:
\[
\mathcal{C}'_n = \left\{ Q_n, \bar{X}_n, \left\{ K'_{n,b}, R'_{n,b}, k'_{n,b}, r'_{n,b}, U'_{n,b}, V'_{n,b} \right\}_{b=0,1} \right\}.
\]
We further discuss the implications of the conditions in Proposition \ref{prop:on}.
Conditions (L), (A), (S) and (E) in the first part ensure the existence of recursive formulas for updating $\mathcal{C}'_n$ and $\mathcal{C}''_n$.
Consequently, we can use the precalculation technique to perform $O(1)$-time and
$O(1)$-space update.
A working example is illustrated in \cref{sec:algo-laser}.

For the ramped subsampling parameter condition (R), it ensures the correctness of the precalculation technique in the proof.
This is because we need to re-initialize $\mathcal{C}''_n$ to prepare for the next reset of $s'_n$.
If re-initialization cannot be done, we need to drop far observations in $\mathcal{C}''_n$, which ruins the $O(1)$-space update.
To re-initialize properly, we need $s'_n \le a -s$ for some $s'_n \in [s,a)$, which is guaranteed by the $\phi \ge 2$ condition (R).
If the intended subsampling parameter $s_i$ is not monotonically increasing but prespecified, we can still precalculate the components by computing $s_n$ at the next point of reset in advance.
\hfill$\qedsymbol$

\section{Proof of Lemmas} \label{sec:proof-of-lemmas}

The technical lemmas used in other sections are stated and proved here.

\subsection{\texorpdfstring{Lemma \ref{lem:eqvMean}}{
		Lemma F.1}}\label{proof-eqvMean} 

The following lemma shows that the unknown-mean version \hyperref[eq:sigmaClass]{$\hat{\sigma}^2_n$} and the known-mean version \hyperref[eq:sigmaMean]{$\bar{\sigma}^2_n$} are asymptotically equivalent in the $\mathcal{L}^{\alpha/2}$-sense.

\begin{lemma} \label{lem:eqvMean}
	Let $\alpha > 2$.
	Suppose that $\norm{X_1}_\alpha < \infty$.
	If Assumption \ref{asum:stability} holds, then
	\[
	\norm{ \hat{\sigma}^2_n - \bar{\sigma}^2_n }_{\frac{\alpha}{2}}
	= O(n^{-1}) +O\left( n^{-\frac{3}{2}} G_{3,n}^{\frac{1}{2}} \right) +O\left( n^{-2} G_{4,n} \right),
	\]
	where
	\[
	G_{3,n} = \sum_{i=2}^n \left| \sum_{j=1}^{i-1} W_n(i,j) \right|^2
	+ \sum_{j=1}^{n-1} \left| \sum_{i=j+1}^n W_n(i,j) \right|^2 \quad \text{and} \quad
	G_{4,n} = \left| \sum_{i=2}^n \sum_{j=1}^{i-1} W_n(i,j) \right|.
	\]
	In addition, if Assumptions \ref{asum:winSums} and \ref{asum:winGen}(a) also hold, then
	$
	\norm{ \hat{\sigma}^2_n - \bar{\sigma}^2_n }_{\alpha/2} = o(1).
	$
\end{lemma}

\begin{proof}
	By the Minkowski inequality and Hölder's inequality,
	\begin{align}
		\begin{split}
			& \norm{ \hat{\sigma}^2_n - \bar{\sigma}^2_n }_{\frac{\alpha}{2}} \\
			\le{}& \norm{ \frac{1}{n} \sum_{i=1}^n (2X_i -\bar{X}_n -\mu)(\mu -\bar{X}_n) }_{\frac{\alpha}{2}}
			+\norm{ \frac{2}{n} \sum_{i=2}^n \sum_{j=1}^{i-1} W_n(i,j) \Big\{ \bar{X}_n^2 -(\bar{X}_n-\mu)(X_i+X_j) -\mu^2 \Big\} }_{\frac{\alpha}{2}} \\
			\le{}& \norm{ (\bar{X}_n -\mu)(\mu -\bar{X}_n) }_{\frac{\alpha}{2}}
			+\norm{ \frac{2}{n} \sum_{i=2}^n \sum_{j=1}^{i-1} W_n(i,j) \Big\{ (\bar{X}_n-\mu)^2 -(\bar{X}_n-\mu)(X_i+X_j -2\mu) \Big\} }_{\frac{\alpha}{2}} \nonumber
		\end{split} \\
		\begin{split}
			\le{}& \norm{ \bar{X}_n -\mu }_\alpha^2 + \frac{2}{n} \left| \sum_{i=2}^n \sum_{j=1}^{i-1} W_n(i,j) \right| \norm{ \bar{X}_n -\mu }_{2\alpha}^2 \\
			& +\frac{2}{n} \norm{ \sum_{i=2}^n \sum_{j=1}^{i-1} W_n(i,j) (X_i+X_j-2\mu) }_\alpha \norm{ \bar{X}_n -\mu }_\alpha. \label{eq:interMean}
		\end{split}
	\end{align}
	By the moment inequality in Theorem 3 of \citet{wu2011asymptotic} and Assumption \ref{asum:stability}, $\norm{ \bar{X}_n -\mu }_\alpha = O(n^{-1/2})$.
	For the last term in \cref{eq:interMean}, applying the Minkowski inequality gives
	\begin{align*}
		\norm{ \sum_{i=2}^n \sum_{j=1}^{i-1} W_n(i,j) (X_i+X_j-2\mu) }_\alpha
		\le{}& \norm{ \sum_{i=2}^n \sum_{j=1}^{i-1} W_n(i,j) (X_i-\mu) }_\alpha \\
		& +\norm{ \sum_{j=1}^{n-1} \sum_{i=j+1}^{n} W_n(i,j) (X_j-\mu) }_\alpha.
	\end{align*}
	By Lemma 1 of \citet{liu2010asymptotic},
	\begin{align*}
		\norm{ \sum_{i=2}^n \sum_{j=1}^{i-1} W_n(i,j) (X_i-\mu) }_\alpha
		&\le C_\alpha \Delta_\alpha \sqrt{ \sum_{i=2}^n \left| \sum_{j=1}^{i-1} W_n(i,j) \right|^2 }
		\le C_\alpha \Delta_\alpha G_{3,n}^{\frac{1}{2}}, \\
		\norm{ \sum_{j=1}^{n-1} \sum_{i=j+1}^{n} W_n(i,j) (X_j-\mu) }_\alpha
		&\le C_\alpha \Delta_\alpha \sqrt{ \sum_{j=1}^{n-1} \left| \sum_{i=j+1}^{n} W_n(i,j) \right|^2 }
		\le C_\alpha \Delta_\alpha G_{3,n}^{\frac{1}{2}},
	\end{align*}
	where $C_\alpha \in \R^+$ is some constant that only depends on $\alpha$ and may change from occurrence to occurrence in the proofs.
	Hence under Assumption \ref{asum:stability},
	\[
	\norm{ \sum_{i=2}^n \sum_{j=1}^{i-1} W_n(i,j) (X_i+X_j-2\mu) }_\alpha
	= O\Big( G_{3,n}^{\frac{1}{2}} \Big).
	\]
	Combining the results, we have
	\begin{align*}
		\norm{ \hat{\sigma}^2_n - \bar{\sigma}^2_n }_{\frac{\alpha}{2}}
		&= O(n^{-1}) + n^{-1} G_{4,n} O(n^{-1}) +n^{-1} O\Big( G_{3,n}^{\frac{1}{2}} \Big) O\Big( n^{-\frac{1}{2}} \Big)\\
		&= O(n^{-1}) +O\left( n^{-\frac{3}{2}} G_{3,n}^{\frac{1}{2}} \right) +O\left( n^{-2} G_{4,n} \right).
	\end{align*}
	In addition, if Assumptions \ref{asum:winSums} and \ref{asum:winGen}(a) also hold, the Minkowski inequality gives
	\begin{align*}
		G_{3,n}
		&\le 2 \sum_{i=1}^n \left( \sum_{j=1}^n \big| W_n(i,j) \big| \right)^2
		\le 2 \sum_{i=1}^n G_{1,n}^2
		= O(n) o\left( n^{4-\frac{4}{\alpha'}} \right) = o(n^3), \\
		G_{4,n} &\le \sum_{i=1}^n \sum_{j=1}^n \big| W_n(i,j) \big|
		\le \sum_{i=1}^n G_{1,n}
		= O(n) o\left( n^{2-\frac{2}{\alpha'}} \right) = o(n^2),
	\end{align*}
	since $\alpha' \in (1,2]$. Therefore, $\norm{ \hat{\sigma}^2_n - \bar{\sigma}^2_n }_{\alpha/2} = o(1)$.
\end{proof}

\subsection{\texorpdfstring{Lemma \ref{lem:eqvMDepend}}{
		Lemma F.2}}\label{proof-eqvMDepend} 

Recall that we can assume $\mu = 0$ after applying Lemma \ref{lem:eqvMean}. Define
\begin{align}
	\Delta_{m,\alpha} &= \sum_{i=m}^\infty \delta_{i,\alpha}, \label{eq:deltaMDepend} \\
	\Upsilon_{m,\alpha} &= \sqrt{ \sum_{i=m}^\infty \delta_{i, \alpha}^2 }, \label{eq:upsilonMDepend} \\
	d_{m,\alpha} &= \sum_{i=0}^\infty \min ( \delta_{i, \alpha}, \Upsilon_{m+1, \alpha} ), \label{eq:dMDepend} \\
	\mathcal{P}_i \cdot &= \E(\cdot \mid \mathcal{F}_i) -\E(\cdot \mid \mathcal{F}_{i-1}). \label{eq:project}
\end{align}
For simplicity, write $\Delta_{0,\alpha} = \Delta_{\alpha}$.
The following lemma shows that the known-mean version \hyperref[eq:sigmaMean]{$\bar{\sigma}^2_n$} and the $m$-dependent process approximated version \hyperref[eq:sigmaMDepend]{$\tilde{\sigma}^2_n$} are asymptotically equivalent in the $\mathcal{L}^{\alpha/2}$-sense.

\begin{lemma} \label{lem:eqvMDepend}
	Let $\alpha > 2$ and $\alpha' = \min(\alpha/2 ,2)$.
	Suppose that $\norm{X_1}_\alpha < \infty$.
	If Assumption \ref{asum:stability} holds, then
	\[
	\norm{ \bar{\sigma}^2_n -\E(\bar{\sigma}^2_n) -\tilde{\sigma}^2_n +\E(\tilde{\sigma}^2_n) }_{\frac{\alpha}{2}}
	= O\left( n^{\frac{1}{\alpha'}-1} d_{m, \alpha} G_{5,n}^{\frac{1}{2}} \right),
	\]
	where
	\[
	G_{5,n} = \max_{1 \le i \le n} \sum_{j=1}^n \Big\{ W_n^2(i,j) +W_n^2(j,i) \Big\}.
	\]
	In addition, if Assumptions \ref{asum:winSums} and \ref{asum:winGen}(a) also hold, then
	$
	\norm{ \bar{\sigma}^2_n -\E(\bar{\sigma}^2_n) -\tilde{\sigma}^2_n +\E(\tilde{\sigma}^2_n) }_{\alpha/2} = o(1).
	$
\end{lemma}

\begin{remark} \label{rmk:eqvMDepend}
	The proof of Lemma \ref{lem:eqvMDepend} follows a similar procedure as the proof of Proposition 1 in \citet{liu2010asymptotic}.
	However, their Proposition 1 cannot be applied directly since our window functions may depend on the running indices $i,j$ in addition to the difference $i-j$.
	In addition, their Proposition 1 does not provide explicit approximation of $\sum_{i=1}^n X_i^2$, which is included in our estimators.
	Finally, their Proposition 1 does not consider the case of $2 < \alpha < 4$, which we will show later for proving consistency of our estimators.
\end{remark}

\begin{proof}
	Recall that the $m$-dependent process $\tilde{X}_j$ is defined in \cref{eq:xMDepend}.
	Let $Z_i = 2 \sum_{j=1}^{i-1} W_n(i,j) X_j$, and $\tilde{Z}_i = 2 \sum_{j=1}^{i-1} W_n(i,j) \tilde{X}_j$.
	Define the intermediate version
	\[
	\tilde{\sigma}^*_n
	= \frac{1}{n} \sum_{i=1}^n X_i \tilde{X}_i +\frac{2}{n} \sum_{i=2}^{n} X_i \sum_{j=1}^{i-1} W_n(i,j) \tilde{X}_j
	= \frac{1}{n} \sum_{i=1}^n X_i \tilde{X}_i +\frac{1}{n} \sum_{i=2}^{n} X_i \tilde{Z}_i.
	\]
	In addition, recall that $X_{i,\{h\}} = g(\mathcal{F}_{i,\{h\}})$ represents a coupled version of $X_i = g(\mathcal{F}_i)$ with $\epsilon_h$ replaced by a copy $\epsilon'_h$ in $\mathcal{F}_i$.
	We similarly define $Z_{i,\{h\}}$ and $\tilde{Z}_{i,\{h\}}$ as the coupled versions of $Z_i$ and $\tilde{Z}_i$, respectively.
	Note that for any $X_i \in \sigma(\mathcal{F}_i)$,
	$
	\norm{ \mathcal{P}_h X_i }_\alpha \le \norm{ X_i -X_{i,\{h\}}}_\alpha
	$
	\citep{wu2005asymptotic}.
	Therefore, by the Minkowski inequality,
	\begin{align}
		\begin{split}
			& n \norm{ \mathcal{P}_h (\bar{\sigma}^2_n -\tilde{\sigma}^*_n) }_{\frac{\alpha}{2}} \\
			\le{}& \Bigg\lVert \sum_{i=1}^n \Big\{ X_i (X_i -\tilde{X}_i) -X_{i,\{h\}} (X_{i,\{h\}} -\tilde{X}_{i,\{h\}}) \Big\} \\
			& +\sum_{i=2}^{n} \Big\{ X_i (Z_i -\tilde{Z}_i) -X_{i,\{h\}} (Z_{i,\{h\}} -\tilde{Z}_{i,\{h\}}) \Big\} \Bigg\rVert_{\frac{\alpha}{2}} \\
			\le{}& \sum_{i=1}^n \norm{ X_{i,\{h\}} \Big\{ (X_i -\tilde{X}_i) - (X_{i,\{h\}} -\tilde{X}_{i,\{h\}}) \Big\} }_{\frac{\alpha}{2}}
			+\sum_{i=1}^n \norm{ (X_i -X_{i,\{h\}}) (X_i -\tilde{X}_i) }_{\frac{\alpha}{2}} \\
			& +\norm{ \sum_{i=2}^{n} X_{i,\{h\}} \Big\{ (Z_i -\tilde{Z}_i) - (Z_{i,\{h\}} -\tilde{Z}_{i,\{h\}}) \Big\} }_{\frac{\alpha}{2}}
			+\sum_{i=2}^{n} \norm{ (X_i -X_{i,\{h\}}) (Z_i -\tilde{Z}_i) }_{\frac{\alpha}{2}} \nonumber
		\end{split} \\
		={}& I_1 +I_2 +I_3 +I_4. \label{eq:interMDepend}
	\end{align}
	By the Jensen's inequality,
	\[
	\norm{ \tilde{X}_i -\tilde{X}_{i,\{0\}} }_\alpha
	= \norm{ \E(X_i -X_{i,\{0\}} \mid \epsilon_{i-m}, \ldots, \epsilon_i, \epsilon'_0) }_\alpha \\
	\le \norm{ X_i -X_{i,\{0\}} }_\alpha
	= \delta_{i, \alpha}.
	\]
	For $\alpha > 2$, the Burkholder inequality gives (see the proof of Lemma 1 in \citealt{liu2010asymptotic})
	\begin{align*}
		\norm{ X_i -\tilde{X}_i }_\alpha^2
		&\le C_\alpha \sum_{l=m+1}^\infty \norm{ \E(X_i \mid \epsilon_{i-l}, \ldots, \epsilon_i) -\E(X_i \mid \epsilon_{i-l+1}, \ldots, \epsilon_i) }_\alpha^2 \\
		&\le C_\alpha \sum_{l=m+1}^\infty \delta_{l, \alpha}^2
		= C_\alpha \Upsilon_{m+1,\alpha}^2.
	\end{align*}
	Combining the results yields
	\begin{equation} \label{eq:boundMDepend}
		\norm{ X_i -\tilde{X}_i -X_{i,\{h\}} +\tilde{X}_{i,\{h\}} }_\alpha
		\le C_\alpha \min ( \delta_{i-h, \alpha}, \Upsilon_{m+1, \alpha} ).
	\end{equation}
	Now we begin to derive upper bounds for $I_1$ to $I_4$.
	By the Hölder's inequality, \cref{eq:boundMDepend} and stationarity,
	\begin{align*}
		I_1 &\le \sum_{i=1}^n \norm{ X_{i,\{h\}} }_\alpha
		\norm {X_i -\tilde{X}_i -X_{i,\{h\}} +\tilde{X}_{i,\{h\}} }_\alpha \\
		&\le C_\alpha \max_{1 \le i \le n} \norm{ X_{i,\{h\}} }_\alpha \sum_{i=1}^n \min ( \delta_{i-h, \alpha}, \Upsilon_{m+1, \alpha} ) \\
		&\le C_\alpha \norm{ X_1 }_\alpha \sum_{i=1}^n \min ( \delta_{i-h, \alpha}, \Upsilon_{m+1, \alpha} ).
	\end{align*}
	Since $\norm{ X_1 }_\alpha < \infty$, $\Delta_\alpha < \infty$ under Assumption \ref{asum:stability}, and $\delta_{i, \alpha} = 0$ for $i < 0$, we have
	\begin{align*}
		\sum_{h=-\infty}^n I_1^{\alpha'} 
		&\le C_\alpha \norm{ X_1 }_\alpha^{\alpha'} \sum_{h=-\infty}^n \Delta_\alpha^{\alpha'-1} \sum_{i=1}^n \min ( \delta_{i-h, \alpha}, \Upsilon_{m+1, \alpha} ) \\
		&= C_\alpha \norm{ X_1 }_\alpha^{\alpha'} \Delta_\alpha^{\alpha'-1} \sum_{i=1}^n \sum_{h=-\infty}^n \min ( \delta_{i-h, \alpha}, \Upsilon_{m+1, \alpha} )
		= O(n d_{m,\alpha}).
	\end{align*}
	For the second term in \cref{eq:interMDepend}, by the Hölder's inequality we have
	\[
	I_2
	\le \sum_{i=1}^n \norm{ X_i -X_{i,\{h\}} }_\alpha \norm{ X_i -\tilde{X}_i }_\alpha
	\le \max_{1 \le i \le n} \norm{ X_i -\tilde{X}_i }_\alpha \sum_{i=1}^n \delta_{i-h, \alpha}
	\le C_\alpha \Upsilon_{m+1, \alpha} \sum_{i=1}^n \delta_{i-h, \alpha}.
	\]
	Since
	$
	\Upsilon_{m+1,\alpha}
	= \sqrt{ \sum_{i=m+1}^\infty \delta_{i, \alpha}^2 }
	\le \sqrt{ \sum_{i=1}^\infty \delta_{i, \alpha}^2 }
	\le \sqrt{ \left| \sum_{i=1}^\infty \delta_{i, \alpha} \right|^2 }
	= \Delta_\alpha,
	$
	we have
	\[
	\sum_{h=-\infty}^n I_2^{\alpha'}
	\le C_\alpha \Delta_\alpha^{\alpha'} \Delta_\alpha^{\alpha'-1} \sum_{i=1}^n \sum_{h=-\infty}^n \delta_{i-h, \alpha}
	= O(n).
	\]
	For the third term in \cref{eq:interMDepend}, we first rearrange the order of summation:
	\begin{align*}
		I_3 &= \norm{ 2 \sum_{i=2}^{n} X_{i,\{h\}} \sum_{j=1}^{i-1} W_n(i,j) ( X_j -\tilde{X}_j -X_{j,\{h\}} +\tilde{X}_{j,\{h\}} ) }_{\frac{\alpha}{2}} \\
		&= \norm{ 2 \sum_{j=1}^{n-1} ( X_j -\tilde{X}_j -X_{j,\{h\}} +\tilde{X}_{j,\{h\}} ) \sum_{i=j+1}^n W_n(i,j) X_{i,\{h\}} }_{\frac{\alpha}{2}}.
	\end{align*}
	By the Minkowski inequality, Hölder's inequality, Lemma 1 of \citet{liu2010asymptotic} and \cref{eq:boundMDepend},
	\begin{align*}
		I_3 &\le 2 \sum_{j=1}^{n-1} \norm{ X_j -\tilde{X}_j -X_{j,\{h\}} +\tilde{X}_{j,\{h\}} }_\alpha \norm{ \sum_{i=j+1}^n W_n(i,j) X_{i,\{h\}} }_\alpha \\
		&\le C_\alpha \Delta_\alpha \sum_{j=1}^n \min ( \delta_{j-h, \alpha}, \Upsilon_{m+1, \alpha} ) \sqrt{ \sum_{i=j+1}^n W_n^2(i,j) } \\
		&\le C_\alpha \Delta_\alpha \max_{1 \le j \le n} \sqrt{ \sum_{i=j+1}^n W_n^2(i,j) } \sum_{j=1}^n \min ( \delta_{j-h, \alpha}, \Upsilon_{m+1, \alpha} ) \\
		&\le C_\alpha \Delta_\alpha G_{5,n}^{\frac{1}{2}} \sum_{j=1}^n \min ( \delta_{j-h, \alpha}, \Upsilon_{m+1, \alpha} ).
	\end{align*}
	Therefore, we have
	\[
	\sum_{h=-\infty}^n I_3^{\alpha'}
	\le C_\alpha \Delta_\alpha^{\alpha'} G_{5,n}^{\frac{\alpha'}{2}} \Delta_\alpha^{\alpha'-1} \sum_{j=1}^n \sum_{h=-\infty}^n \min ( \delta_{j-h, \alpha}, \Upsilon_{m+1, \alpha} )
	= O\left( n d_{m,\alpha} G_{5,n}^{\frac{\alpha'}{2}} \right).
	\]
	For the last term in \cref{eq:interMDepend}, note that by Lemma 1 of \citet{liu2010asymptotic},
	\[
	\norm{ Z_i -\tilde{Z}_i }_\alpha
	= \norm{ 2 \sum_{j=1}^{i-1} W_n(i,j) (X_j -\tilde{X}_j) }_\alpha
	\le C_\alpha \Delta_{m+1,\alpha} \sqrt{ \sum_{j=1}^{i-1} W_n^2(i,j) }
	\le C_\alpha \Delta_{m+1,\alpha} G_{5,n}^{\frac{1}{2}}.
	\]
	Thus, by the Hölder's inequality,
	\[
	I_4
	\le \sum_{i=1}^n \norm{ X_i -X_{i,\{h\}} }_\alpha \norm{ Z_i -\tilde{Z}_i }_\alpha \\
	\le C_\alpha \Delta_{m+1,\alpha} G_{5,n}^{\frac{1}{2}} \sum_{i=1}^n \delta_{i-h, \alpha}.
	\]
	Check that $\Delta_{m+1,\alpha} \le d_{m,\alpha}$ because
	\[
	\sum_{i=0}^\infty \Upsilon_{m+1,\alpha} -\Delta_{m+1,\alpha}
	= \sum_{i=0}^\infty \sqrt{ \sum_{i=m+1}^\infty \delta_{i, \alpha}^2 } -\sum_{i=m+1}^\infty \sqrt{ \delta_{i,\alpha}^2 }
	\ge 0.
	\]
	Therefore, we have
	\[
	\sum_{h=-\infty}^n I_4^{\alpha'}
	\le C_\alpha d_{m,\alpha}^{\alpha'} G_{5,n}^{\frac{\alpha'}{2}} \Delta_\alpha^{\alpha'-1} \sum_{i=1}^n \sum_{h=-\infty}^n \delta_{i-h, \alpha}
	= O\left( n d_{m,\alpha}^{\alpha'} G_{5,n}^{\frac{\alpha'}{2}} \right).
	\]
	Since $\alpha' \in (1,2]$, combining the results leads to
	\[
	\sum_{h=-\infty}^n \norm{ \mathcal{P}_h (\bar{\sigma}^2_n -\tilde{\sigma}^*_n) }_{\frac{\alpha}{2}}^{\alpha'}
	= O\left( n^{1-\alpha'} d_{m, \alpha}^{\alpha'} G_{5,n}^{\frac{\alpha'}{2}} \right).
	\]
	Finally, by the decomposition $X_n -\E(X_n) = \sum_{h=-\infty}^n \mathcal{P}_h X_n$ for stationary $\{X_i\}$ and the Burkholder inequality,
	\begin{align*}
		\norm{ \bar{\sigma}^2_n -\E(\bar{\sigma}^2_n) -\tilde{\sigma}^*_n +\E(\tilde{\sigma}^*_n) }_{\frac{\alpha}{2}}^{\alpha'}
		&\le C_\alpha \sum_{h=-\infty}^n \norm{ \mathcal{P}_h (\bar{\sigma}^2_n -\tilde{\sigma}^*_n) }_{\frac{\alpha}{2}}^{\alpha'} \\
		&= O\left( n^{1-\alpha'} d_{m, \alpha}^{\alpha'} G_{5,n}^{\frac{\alpha'}{2}} \right), \\
		\norm{ \bar{\sigma}^2_n -\E(\bar{\sigma}^2_n) -\tilde{\sigma}^*_n +\E(\tilde{\sigma}^*_n) }_{\frac{\alpha}{2}}
		&= O\left( n^{\frac{1}{\alpha'}-1} d_{m, \alpha} G_{5,n}^{\frac{1}{2}} \right).
	\end{align*}
	A similar inequality for $\norm{ \tilde{\sigma}^*_n -\E(\tilde{\sigma}^*_n) -\tilde{\sigma}^2_n +\E(\tilde{\sigma}^2_n) }_{\alpha/2}$ completes the proof of the general case.
	In addition, note that $d_{m, \alpha} = o(1)$ as $m \to \infty$ in view of Lemma \ref{lem:dMDepend}.
	To fulfill this condition, we can choose $m$ such that $m \to \infty$ as $n \to \infty$ by Assumption \ref{asum:winSums}; see \cref{sec:proof-eqvMart}.
	It also follows from Assumptions \ref{asum:winSums} and \ref{asum:winGen}(a) that
	$
	G_{5,n} = 2 \max_{1 \le i \le n} \sum_{j=1}^n W_n^2(i,j)
	\le 2 G_{1,n}
	= o(n^{2-2/\alpha'}).
	$
	Therefore, $\norm{ \bar{\sigma}^2_n -\E(\bar{\sigma}^2_n) -\tilde{\sigma}^*_n +\E(\tilde{\sigma}^*_n) }_{\alpha/2} = o(1)$.
	Finding a similar inequality for $\norm{ \tilde{\sigma}^*_n -\E(\tilde{\sigma}^*_n) -\tilde{\sigma}^2_n +\E(\tilde{\sigma}^2_n) }_{\alpha/2}$ completes the proof.
\end{proof}

\subsection{\texorpdfstring{Lemma \ref{lem:eqvMart}}{
		Lemma F.3}} \label{sec:proof-eqvMart} 

The following lemma shows that the $m$-dependent process approximated version \hyperref[eq:sigmaMDepend]{$\tilde{\sigma}^2_n$} and the martingale difference approximated version \hyperref[eq:sigmaMart]{$\breve{\sigma}^2_n$} are asymptotically equivalent in the $\mathcal{L}^{\alpha/2}$-sense.

\begin{lemma} \label{lem:eqvMart}
	Let $\alpha > 2$ and $\alpha' = \min(\alpha/2 ,2)$.
	Suppose that $\E|X_1|^\alpha < \infty$.
	If Assumption \ref{asum:stability} holds, then
	\[
	\norm{ \tilde{\sigma}^2_n -\E(\tilde{\sigma}^2_n) -\breve{\sigma}^2_n +\E(\breve{\sigma}^2_n) }_{\frac{\alpha}{2}}
	= O\left( m^{3-\frac{1}{\alpha'}} n^{\frac{1}{\alpha'}-1} G_{6,n} \right),
	\]
	where
	\begin{align*}
		G_{6,n} &= \max_{1 \le i,j \le n} \big| W_n(i,j) \big|
		+\max_{1 \le i \le n} \left[ \sum_{j=2}^n \Big\{ \big| \nabla_1 W_n(j,i) \big|^{\alpha'} +\big| \nabla_2 W_n(i,j) \big|^{\alpha'} \Big\} \right]^{\frac{1}{\alpha'}}, \\
		\nabla_1 W_n(j,i) &= W_n(j,i) -W_n(j-1,i) \quad \text{and} \quad
		\nabla_2 W_n(i,j) = W_n(i,j) -W_n(i,j-1).
	\end{align*}
	In addition, if Assumptions \ref{asum:winSums} and \ref{asum:winGen}(a) also hold, then
	$
	\norm{ \tilde{\sigma}^2_n -\E(\tilde{\sigma}^2_n) -\breve{\sigma}^2_n +\E(\breve{\sigma}^2_n) }_{\alpha/2} = o(1).
	$
\end{lemma}

\begin{remark} \label{rmk:eqvMart}
	The proof of Lemma \ref{lem:eqvMart} follows a similar procedure as the proof of Proposition 2 in \citet{liu2010asymptotic}.
	However, their Proposition 2 cannot be applied directly for similar reasons stated in Remark \ref{rmk:eqvMDepend}.
	It is also worth noting that $W_n(i,j)$ is implicitly assumed to be real throughout our paper.
	Therefore, we do not consider the case of complex coefficients in contrast to Proposition 2 of \citet{liu2010asymptotic}.
\end{remark}

\begin{proof}
	We first show some simple results that will be used later in this proof.
	They were similarly established in the proof of Proposition 2 in \citet{liu2010asymptotic} but we try to give more details here.
	Recall that $\tilde{X}_i$, $M_i$ and $D_i$ are defined in \cref{eq:xMDepend,eq:aMart,eq:dMart}, respectively.
	By the construction of $\tilde{X}_i$, we can rewrite $M_i$ by
	$
	M_i = \sum_{h=0}^m \E(\tilde{X}_{i+h} \mid \mathcal{F}_i)
	$
	since $\E(\tilde{X}_{i+h} \mid \mathcal{F}_i) = 0$ when $h > m$.
	In addition, $D_i$'s are $m$-dependent and form martingale differences with respect to $\mathcal{F}_i$.
	By the Minkowski inequality, Jensen's inequality and stationarity of $X_i$, we have
	\begin{equation} \label{eq:boundAMart}
		\norm{ M_i }_\alpha
		= \norm{ \sum_{h=0}^m \E(\tilde{X}_{i+h} \mid \mathcal{F}_i) }_\alpha
		\le 2m \norm{ X_1 }_\alpha.
	\end{equation}
	We also have
	\begin{align*}
		M_i -\E(M_{i+1} \mid \mathcal{F}_i)
		&= \sum_{h=0}^m \E(\tilde{X}_{i+h} \mid \mathcal{F}_i) -\sum_{h=0}^m \E(\tilde{X}_{i+h+1} \mid \mathcal{F}_i) \\
		&= \E(\tilde{X}_i \mid \mathcal{F}_i) = \tilde{X}_i.
	\end{align*}
	As a result,
	\begin{align}
		\tilde{X}_i -D_i
		&= M_i -\E(M_{i+1} \mid \mathcal{F}_i) -M_i +\E(M_i \mid \mathcal{F}_{i-1}) \nonumber \\
		&= \E(M_i \mid \mathcal{F}_{i-1}) -\E(M_{i+1} \mid \mathcal{F}_i). \label{eq:diffMart}
	\end{align}
	Now, we begin the main part of this proof. Let
	\[
	\begin{array}{ll}
		\displaystyle Y_{1,i} = \tilde{X}_i \sum_{j=1}^{i-8m} W_n(i,j) (\tilde{X}_j -D_j),
		& Y_{2,i} = \tilde{X}_i \sum_{j=i-8m+1}^{i-1} W_n(i,j) (\tilde{X}_j -D_j), \\
		\displaystyle Y_i = 2\tilde{X}_i \sum_{j=1}^{i-1} W_n(i,j) (\tilde{X}_j -D_j) = 2 (Y_{1,i} +Y_{2,i}),
		& Y'_i = 2 (\tilde{X}_i -D_i) \sum_{j=1}^{i-1} W_n(i,j) D_j.
	\end{array}
	\]
	By the Minkowski inequality,
	\begin{align}
		\begin{split}
			n \norm{ \tilde{\sigma}^2_n -\E(\tilde{\sigma}^2_n) -\breve{\sigma}^2_n +\E(\breve{\sigma}^2_n) }_{\frac{\alpha}{2}}
			\le{}& \norm{ \sum_{i=1}^n \Big\{ \tilde{X}_i^2 -\E (\tilde{X}_i^2) \Big\} }_{\frac{\alpha}{2}} + \norm{ \sum_{i=1}^n \Big\{ D_i^2 -\E (D_i^2) \Big\} }_{\frac{\alpha}{2}} \\
			& +\norm{ \sum_{i=1}^n \Big\{ Y_i -\E(Y_i) \Big\} }_{\frac{\alpha}{2}} + \norm{ \sum_{i=1}^n \Big\{ Y'_i -\E(Y'_i) \Big\} }_{\frac{\alpha}{2}} \nonumber
		\end{split} \\
		={}& I_5 +I_6 +I_7 +I_8. \label{eq:interMart}
	\end{align}
	To upper bound the first term in \cref{eq:interMart}, note that
	\begin{align*}
		\norm{ \tilde{X}_i^2 -\tilde{X}_{i,\{0\}}^2 }_{\frac{\alpha}{2}}
		&\le \norm{ \E(X_i -X_{i,\{0\}}|\epsilon_{i-m}, \ldots, \epsilon_i, \epsilon'_0) }_\alpha \norm{ \E(X_i +X_{i,\{0\}}|\epsilon_{i-m}, \ldots, \epsilon_i, \epsilon'_0) }_\alpha \\
		&\le \delta_{i,\alpha} \norm{ X_i +X_{i,\{0\}} }_\alpha.
	\end{align*}
	Hence under Assumption \ref{asum:stability} and $\norm{X_i}_\alpha < \infty$,
	\[
	\sum_{i=0}^\infty \norm{ \tilde{X}_i^2 -\tilde{X}_{i,\{0\}}^2 }_{\frac{\alpha}{2}}
	\le \sup_{i \in \N} \norm{ X_i +X_{i,\{0\}} }_\alpha \sum_{i=0}^\infty \delta_{i,\alpha}
	= \Delta_\alpha \sup_{i \in \N} \norm{ X_i +X_{i,\{0\}} }_\alpha
	< \infty.
	\]
	This means that $\{\tilde{X}_i^2\}$ is $\alpha/2$-stable \citep{wu2011asymptotic}.
	Thus, it follows from Lemma 1 of \citet{liu2010asymptotic} that
	$
	I_5 = O(n^{1/\alpha'}).
	$
	To upper bound the second term in \cref{eq:interMart}, we apply the Minkowski inequality:
	\begin{equation} \label{eq:blockDMart}
		I_6 \le \sum_{i=1}^{4m-1} \norm{ \sum_{h=0}^{\lfloor (n-i)/(4m)\rfloor} \Big\{ D_{i+4mh}^2 -\E (D_{i+4mh}^2) \Big\} }_{\frac{\alpha}{2}}.
	\end{equation}
	Since $D_i^2$'s are also $m$-dependent, \cref{eq:blockDMart} becomes
	$
	I_6 = O(m^{1-1/\alpha'} n^{1/\alpha'}).
	$
	For the third term in \cref{eq:interMart}, we try to find an upper bound for $Y_{1,i}$ first.
	By \cref{eq:diffMart}, the Minkowski inequality, Jensen's inequality and \cref{eq:boundAMart},
	\begin{align}
		\begin{split}
			\norm{ \sum_{j=1}^{i-8m} W_n(i,j) (\tilde{X}_j -D_j) }_{\frac{\alpha}{2}}
			={}& \norm{ \sum_{j=1}^{i-8m} W_n(i,j) \Big\{ \E(M_j \mid \mathcal{F}_{j-1}) -\E(M_{j+1} \mid \mathcal{F}_j) \Big\} }_{\frac{\alpha}{2}} \\
			\le{}& C_\alpha m \norm{ X_1 }_{\frac{\alpha}{2}} \max_{1 \le i,j \le n} \big| W_n(i,j) \big| \\
			& +\norm{ \sum_{j=2}^{i-8m} \Big\{ W_n(i,j) -W_n(i,j-1) \Big\} \E(M_j \mid \mathcal{F}_{j-1}) }_{\frac{\alpha}{2}} \nonumber
		\end{split} \\
		={}& C_\alpha m \norm{ X_1 }_{\frac{\alpha}{2}} \max_{1 \le i,j \le n} \big| W_n(i,j) \big| +I_9. \label{eq:interY1Mart}
	\end{align}
	To upper bound $I_9$, note that $\E(M_j \mid \mathcal{F}_{j-1}) = \sum_{h=1}^m \mathcal{P}_{j-h} \E(M_j \mid \mathcal{F}_{j-1})$ and
	$\mathcal{P}_{j-h} \E(M_j \mid \mathcal{F}_{j-1})$ are martingale differences with respect to $\mathcal{F}_{j-h}$.
	They can be verified since
	\begin{align*}
		\sum_{h=1}^m \mathcal{P}_{j-h} \E(M_j \mid \mathcal{F}_{j-1})
		&= \sum_{h=1}^m \Big\{ \E(M_j \mid \mathcal{F}_{j-h}) -\E(M_j \mid \mathcal{F}_{j-h-1}) \Big\} \\
		&= \E(M_j \mid \mathcal{F}_{j-1}) -\E(M_j \mid \mathcal{F}_{j-m-1})
		= \E(M_j \mid \mathcal{F}_{j-1}),
	\end{align*}
	and
	\[
	\E \Big\{ \mathcal{P}_{j-h} \E(M_j \mid \mathcal{F}_{j-1}) \mid \mathcal{F}_{j-h-1} \Big\}
	= \E(M_j \mid \mathcal{F}_{j-h-1}) -\E(M_j \mid \mathcal{F}_{j-h-1}) = 0.
	\]
	Hence, by the Burkholder inequality, Minkowski inequality, Jensen's inequality and \cref{eq:boundAMart},
	\begin{align*}
		I_9^{\alpha'}
		&\le C_\alpha \sum_{j=2}^{i-8m} \norm{ \Big\{ W_n(i,j) -W_n(i,j-1) \Big\} \sum_{h=1}^m \mathcal{P}_{j-h} \E(M_j \mid \mathcal{F}_{j-1}) }_{\frac{\alpha}{2}}^{\alpha'} \\
		&\le C_\alpha \sum_{j=2}^{i-8m} \big| W_n(i,j) -W_n(i,j-1) \big|^{\alpha'}
		\left\{ \sum_{h=1}^m \norm{ \mathcal{P}_{j-h} \E(M_j \mid \mathcal{F}_{j-1}) }_{\frac{\alpha}{2}} \right\}^{\alpha'} \\
		&\le C_\alpha m^{2\alpha'} \norm{ X_1 }_{\frac{\alpha}{2}}^{\alpha'}
		\sum_{j=2}^{i-8m} \big| W_n(i,j) -W_n(i,j-1) \big|^{\alpha'},
	\end{align*}
	which implies that
	\[
	I_9 \le C_\alpha m^2 \norm{ X_1 }_{\frac{\alpha}{2}} \left\{ \sum_{j=2}^{i-8m} \big| W_n(i,j) -W_n(i,j-1) \big|^{\alpha'} \right\}^{\frac{1}{\alpha'}}.
	\]
	\cref{eq:interY1Mart} becomes
	\begin{equation} \label{eq:boundY1Mart}
		\norm{ \sum_{j=1}^{i-8m} W_n(i,j) (\tilde{X}_j -D_j) }_{\frac{\alpha}{2}}
		\le C_\alpha m^2 G_{6,n} \norm{ X_1 }_{\frac{\alpha}{2}}.
	\end{equation}
	Similarly, we have
	\begin{equation} \label{eq:boundY2Mart}
		\norm{ \sum_{j=i-8m+1}^{i-1} W_n(i,j) (\tilde{X}_j -D_j) }_{\frac{\alpha}{2}}
		\le C_\alpha m^2 G_{6,n} \norm{ X_1 }_{\frac{\alpha}{2}}.
	\end{equation}
	Now to apply the idea of independence as in \cref{eq:blockDMart}, check that $Y_{1,i}, Y_{1,i+4m}, \ldots$ are martingale differences as
	\begin{align*}
		\E(Y_{1,i+4m} \mid \mathcal{F}_i) &= \E \left\{ \tilde{X}_{i+4m} \sum_{j=1}^{i-4m} W_n(i,j) (\tilde{X}_j -D_j)\ \Bigg| \ \mathcal{F}_i \right\} \\
		&= \sum_{j=1}^{i-4m} W_n(i,j) (\tilde{X}_j -D_j) \E (\tilde{X}_{i+4m}) = 0.
	\end{align*}
	Using Hölder's inequality and \cref{eq:boundY1Mart},
	\[
	\norm{ Y_{1,i} }_{\frac{\alpha}{2}}
	\le \norm{ \tilde{X}_i }_\alpha \norm{ \sum_{j=1}^{i-8m} W_n(i,j) (\tilde{X}_j -D_j) }_\alpha
	\le C_\alpha m^2 G_{6,n} \norm{ X_1 }_{\alpha}^2.
	\]
	Thus, by Burkholder inequality,
	\[
	\norm{ \sum_{h=0}^{\lfloor (n-i)/(4m) \rfloor} Y_{1,i+4mh} }_{\frac{\alpha}{2}}^{\alpha'}
	\le C_\alpha \sum_{h=0}^{\lfloor (n-i)/(4m) \rfloor} \norm{ Y_{1,i+4mh} }_{\frac{\alpha}{2}}^{\alpha'}
	\le C_\alpha m^{2\alpha'-1} n G_{6,n}^{\alpha'} \norm{ X_1 }_\alpha^{2\alpha'},
	\]
	which implies
	\[
	\norm{ \sum_{h=0}^{\lfloor (n-i)/(4m) \rfloor} Y_{1,i+4mh} }_{\frac{\alpha}{2}}
	\le C_\alpha m^{2-\frac{1}{\alpha'}} n^{\frac{1}{\alpha'}} G_{6,n} \norm{ X_1 }_\alpha^2.
	\]
	By the Minkowski inequality,
	\begin{equation}
		\norm{ \sum_{i=1}^n Y_{1,i} }_{\frac{\alpha}{2}}
		\le \sum_{i=1}^{4m-1} \norm{ \sum_{h=0}^{\lfloor (n-i)/(4m) \rfloor} Y_{1,i+4mh} }_{\frac{\alpha}{2}}
		\le C_\alpha m^{3-\frac{1}{\alpha'}} n^{\frac{1}{\alpha'}} G_{6,n} \norm{ X_1 }_\alpha^2. \label{eq:blockY1Mart}
	\end{equation}
	For $Y_{2,i}$, observe that
	\[
	\E \big\{ Y_{2,i+12m} -\E(Y_{2,i+12m}) \mid \mathcal{F}_i \big\}
	= \E(Y_{2,i+12m}) -\E(Y_{2,i+12m}) = 0.
	\]
	Therefore, by \cref{eq:boundY2Mart} and similar arguments as in \cref{eq:blockY1Mart}, we have
	\[
	\norm{ \sum_{i=1}^n \big\{ Y_{2,i} -\E(Y_{2,i}) \big\} }_{\frac{\alpha}{2}}
	\le C_\alpha m^{3-\frac{1}{\alpha'}} n^{\frac{1}{\alpha'}} G_{6,n} \norm{ X_1 }_\alpha^2.
	\]
	By the Minkowski inequality and $\E|X_1|^\alpha < \infty$, we now have
	\begin{equation} \label{eq:boundYMart}
		I_7 \le \norm{ \sum_{i=1}^n Y_{1,i} }_{\frac{\alpha}{2}} +\norm{ \sum_{i=1}^n \big\{ Y_{2,i} -\E(Y_{2,i}) \big\} }_{\frac{\alpha}{2}}
		= O\left( m^{3-\frac{1}{\alpha'}} n^{\frac{1}{\alpha'}} G_{6,n} \right).
	\end{equation}
	To upper bound the last term in \cref{eq:interMart}, rearrange the order of summation by
	\[
	\sum_{i=1}^n Y'_i
	= 2 \sum_{i=2}^n \big( \tilde{X}_i -D_i \big) \sum_{j=1}^{i-1} W_n(i,j) D_j
	= 2 \sum_{j=1}^{n-1} D_j \sum_{i=j+1}^n W_n(i,j) \big( \tilde{X}_i -D_i \big).
	\]
	By similar arguments as in \cref{eq:boundY1Mart,eq:blockY1Mart,eq:boundYMart},
	\[
	I_8 = O\left( m^{3-\frac{1}{\alpha'}} n^{\frac{1}{\alpha'}} G_{6,n} \right).
	\]
	Combining the results in \cref{eq:interMart} gives us
	\begin{align*}
		\norm{ \tilde{\sigma}^2_n -\E(\tilde{\sigma}^2_n) -\breve{\sigma}^2_n +\E(\breve{\sigma}^2_n) }_{\frac{\alpha}{2}}
		&= n^{-1} \left\{ O(n^{\frac{1}{\alpha'}}) +O(m^{1-\frac{1}{\alpha'}} n^{\frac{1}{\alpha'}}) +O\left( m^{3-\frac{1}{\alpha'}} n^{\frac{1}{\alpha'}} G_{6,n} \right) \right\} \\
		&= O\left( m^{3-\frac{1}{\alpha'}} n^{\frac{1}{\alpha'}-1} G_{6,n} \right),
	\end{align*}
	since $\alpha' \in (1,2]$.
	For the additional case, $m$ needs to be chosen such that $m \to \infty$ when $n \to \infty$ for $m$-dependent process approximation to work well.
	Under Assumptions \ref{asum:winSums} and \ref{asum:winGen}(a),
	$
	G_{6,n} \le 2 G_{2,n} = o(n^c)
	$
	for some $c \in (0, 1-1/\alpha')$.
	Therefore, We can choose $m = O(n^{(1-1/\alpha'-c)/3})$ such that $m \to \infty$ as $n \to \infty$, and
	$\norm{ \tilde{\sigma}^2_n -\E(\tilde{\sigma}^2_n) -\breve{\sigma}^2_n +\E(\breve{\sigma}^2_n) }_{\alpha/2} = o(1)$.
\end{proof}

\subsection{\texorpdfstring{Lemma \ref{lem:varDMart}}{
		Lemma F.4}} \label{sec:proof-varDMart} 

To establish the consistency or variance of our estimators under martingale difference approximation, we have the following lemma.

\begin{lemma} \label{lem:varDMart}
	Let $\alpha > 2$.
	If Assumption \ref{asum:stability} holds, then as $m \to \infty$,
	\[
	\norm{D_i} \to \sigma \quad \text{and} \quad
	\Var(D_i D_j) \to \sigma^4,
	\]
	for positive integers $i$ and $j$ where $|i-j| > m$.
\end{lemma}

\begin{proof}
	Recall that the long-run variance can be expressed in a probabilistic representation $\sigma = \norm{ \sum_{h=i}^\infty \mathcal{P}_i X_h }$; see, e.g., \citet{wu2009recursive}.
	By the Minkowski inequality, Theorem 1 of \citet{wu2005asymptotic} and \cref{eq:boundMDepend},
	\begin{align*}
		\norm{ D_i -\sum_{h=i}^\infty \mathcal{P}_i X_h }
		&= \norm{ \sum_{h=0}^\infty \Big\{ \E(\tilde{X}_{i+h} \mid \mathcal{F}_i) -\E(\tilde{X}_{i+h} \mid \mathcal{F}_{i-1}) \Big\}
			-\sum_{h=i}^\infty \mathcal{P}_i X_h } \\
		&= \norm{ \sum_{h=i}^\infty \mathcal{P}_i (\tilde{X}_h -X_h) } \\
		&\le \sum_{h=i}^\infty \norm{ \tilde{X}_h -X_h -\tilde{X}_{h, \{i\}} +X_{h, \{i\}} }
		\le C_2 d_{m,2},
	\end{align*}
	where $d_{m,2}$ is defined in \cref{eq:dMDepend}.
	As $m \to \infty$, $d_{m,2} = o(1)$ in view of Lemma \ref{lem:dMDepend}.
	Thus, we have
	\[
	\lim_{m \to \infty} \norm{D_i -\sum_{h=i}^\infty \mathcal{P}_i X_h} = 0.
	\]
	By the Minkowski inequality,
	\[
	\limsup_{m \to \infty} \norm{D_i}
	\le \limsup_{m \to \infty} \norm{ D_i -\sum_{h=i}^\infty \mathcal{P}_i X_h }
	+ \limsup_{m \to \infty} \norm{\sum_{h=i}^\infty \mathcal{P}_i X_h}
	= \sigma.
	\]
	At the same time,
	\[
	\liminf_{m \to \infty} \norm{\sum_{h=i}^\infty \mathcal{P}_i X_h}
	\le \liminf_{m \to \infty} \norm{D_i}
	+\liminf_{m \to \infty} \norm{ \sum_{h=i}^\infty \mathcal{P}_i X_h -D_i },
	\]
	which implies $\liminf_{m \to \infty} \norm{D_i} \ge \sigma$.
	By the Sandwich Theorem, $\lim_{m \to \infty} \norm{D_i} = \sigma$.
	For $\Var(D_i D_j)$, recall \hyperref[eq:dMart]{$D_i$}'s form martingale differences with respect to $\mathcal{F}_i$ and are $m$-dependent.
	Without loss of generality, assume $i > j$.
	Then $\E(D_i D_j) = \E\{ D_j \E(D_i \mid \mathcal{F}_j) \} = 0$ and
	\begin{align*}
		\Var(D_i D_j) &= \E(D_i^2 D_j^2) -\{ \E(D_i D_j) \}^2 \\
		&= \E(D_i^2) \E(D_j^2) \\
		&= \norm{D_i}^2 \norm{D_j}^2
		\to \sigma^4,
	\end{align*}
	where $|i-j| > m$ as $m \to \infty$.
\end{proof}

\subsection{\texorpdfstring{Lemma \ref{lem:dMDepend}}{
		Lemma F.5}} \label{sec:proof-dMDepend} 

The following lemma provides an asymptotic upper bound of \hyperref[eq:dMDepend]{$d_{m,\alpha}$}.

\begin{lemma} \label{lem:dMDepend}
	Let $\alpha > 2$.
	If Assumption \ref{asum:stability} holds, then as $m \to \infty$,
	$d_{m,\alpha} = o(1)$.
\end{lemma}

\begin{proof}
	Let $M = f(m)$ such that $\Upsilon_{m+1,\alpha} > \sup_{i \ge M} \delta_{i, \alpha}$ and $M \to \infty$ as $m \to \infty$.
	Then
	\begin{align*}
		d_{m,\alpha}
		&= \sum_{i=0}^\infty \min \big( \delta_{i, \alpha}, \Upsilon_{m+1, \alpha} \big) \\
		&\le \sum_{i=0}^{M-1} \Upsilon_{m+1, \alpha} +\sum_{i=M}^\infty \delta_{i, \alpha}
		= M \Upsilon_{m+1, \alpha} +\sum_{i=M}^\infty \delta_{i, \alpha}.
	\end{align*}
	As $m \to \infty$, $\sum_{i=M}^\infty \delta_{i, \alpha} \to 0$ and $\Upsilon_{m+1, \alpha} \to 0$ by Assumption \ref{asum:stability}.
	It suffices to show that $M \Upsilon_{m+1, \alpha} \to 0$, which we may apply the L'Hôpital's rule:
	\[
	\lim_{m \to \infty} M \Upsilon_{m+1, \alpha}
	= \lim_{m \to \infty} \frac{f(m)}{1/\Upsilon_{m+1, \alpha}}
	= \lim_{m \to \infty} -\frac{f'(m) \Upsilon_{m+1, \alpha}^2}{\Upsilon'_{m+1, \alpha}},
	\]
	which is 0 if
	\[
	f'(m) = -\frac{\Upsilon'_{m+1, \alpha}}{\Upsilon_{m+1, \alpha}}
	\Leftrightarrow f(m) = -\ln(\Upsilon_{m+1, \alpha}) +C,
	\]
	where $C$ is an arbitrary constant.
	We check that $-\ln(\Upsilon_{m+1, \alpha}) \to \infty$ since $\Upsilon_{m+1, \alpha} \to 0$.
	To ensure $f(m)$ is always positive, we can take $C = \ln(\Upsilon_{0, \alpha}) +1$.
\end{proof}

\subsection{\texorpdfstring{Lemma \ref{lem:lastBlock}}{
		Lemma F.6}} \label{sec:proof-lastBlock} 

The following lemma provides asymptotic upper bounds of the sum of squared terms and the last incomplete block \hyperref[eq:interBlockRamp]{$\bar{\sigma}_{2,n}^2$} when $\phi>1$,
and the sum of squared terms \hyperref[eq:interBlock]{$\bar{\sigma}_{4,n}^2$} when $\phi=1$.

\begin{lemma} \label{lem:lastBlock}
	Let $\alpha \ge 4$.
	Suppose that $\norm{X_1}_\alpha < \infty$.
	If Assumption \ref{asum:stability} holds, then
	\[
	\norm{ \bar{\sigma}_{2,n}^2 -\E(\bar{\sigma}_{2,n}^2) }^2 = o\big( n^{\psi +\max\{ 2q(\psi-\theta), 0\} -1} \big) \quad \text{and} \quad
	\norm{ \bar{\sigma}_{4,n}^2 -\E(\bar{\sigma}_{4,n}^2) }^2 = O(n^{-1}).
	\]
\end{lemma}

\begin{proof}
	We first establish the stability of $\{X_i X_{i-k}\}$ for a fixed $k \in [0,i) \cap \Z$.
	By the Minkowski inequality and Hölder's inequality,
	\begin{align*}
		\norm{ X_i X_{i-k} -X_{i, \{0\}} X_{i-k, \{0\}} }_{\frac{\alpha}{2}}
		&= \norm{ X_i (X_{i-k}-X_{i-k, \{0\}}) +X_{i-k, \{0\}}(X_i -X_{i, \{0\}}) }_{\frac{\alpha}{2}} \\
		&\le \delta_{i-k,\alpha} \norm{X_i}_\alpha +\delta_{i,\alpha} \norm{X_{i-k, \{0\}}}_\alpha.
	\end{align*}
	Hence under Assumption \ref{asum:stability} and $\norm{X_i}_\alpha < \infty$,
	\[
	\sum_{i=0}^\infty \norm{ X_i X_{i-k} -X_{i, \{0\}} X_{i-k, \{0\}} }_{\frac{\alpha}{2}}
	\le \Delta_\alpha \left( \sup_{i \in \N} \norm{X_i}_\alpha + \sup_{i \in \N} \norm{X_{i-k, \{0\}}}_\alpha \right)
	< \infty,
	\]
	which means $\{X_i X_{i-k}\}$ is $\alpha/2$-stable for a fixed $k \in [0,i) \cap \Z$ \citep{wu2011asymptotic}.
	For $\alpha \ge 4$, it follows from Lemma 1 of \citet{liu2010asymptotic} that
	\[
	\norm{ \bar{\sigma}_{4,n}^2 -\E(\bar{\sigma}_{4,n}^2) }
	= \frac{1}{n} \norm{ \sum_{i=1}^n \left\{ X_i^2 -\E(X_i^2) \right\} }
	= \frac{1}{n} \cdot O \left\{ \left(  \sum_{i=1}^n 1^2 \right)^{\frac{1}{2}} \right\}
	= O(n^{-\frac{1}{2}}).
	\]
	In addition, by the Minkowski inequality and Lemma 1 of \citet{liu2010asymptotic},
	\begin{align*}
		\norm{ \bar{\sigma}_{2,n}^2 -\E(\bar{\sigma}_{2,n}^2) }
		&\le \frac{2}{n} \norm{ \sum_{j=\eta_{s_n}}^n \sum_{k=1}^{s'_j} \left( 1 -\frac{k^q}{t_n^q} \right) \{ X_j X_{j-k} -\E(X_j X_{j-k}) \} } +O( n^{-\frac{1}{2}}) \\
		&\le \frac{2}{n} \sum_{k=1}^{\lceil \phi s_n \rceil} \norm{ \sum_{j=\eta_{s_n}}^n \left( 1 -\frac{k^q}{t_n^q} \right) \I_{k \le s'_j} \{ X_j X_{j-k} -\E(X_j X_{j-k}) \} } +O( n^{-\frac{1}{2}}) \\
		&\le O(n^{-1}) \sum_{k=1}^{\lceil \phi s_n \rceil} \sqrt{ \sum_{j=\eta_{s_n}}^n \left| 1 -\frac{k^q}{t_n^q} \right|^2 } +O( n^{-\frac{1}{2}}) \\
		&= O\big( n^{\psi +\max\{ q(\psi-\theta, 0 \} -1} \big) +O( n^{-\frac{1}{2}}).
	\end{align*}
	Check that $2\psi +\max\{ 2q(\psi-\theta), 0\} -2 < \psi +\max\{ 2q(\psi-\theta), 0\} -1$ and $-1 < \psi +\max\{ 2q(\psi-\theta), 0\} -1$ always hold under Definition \ref{def:winLASER}.
\end{proof}

\subsection{\texorpdfstring{Lemma \ref{lem:varExact}}{
		Lemma F.7}}\label{sec:proof-varExact} 

Define
\begin{equation} \label{eq:varBoth}
	H_{7,\phi} = \left\{
	\begin{array}{ll}
		\displaystyle \frac{2}{n} \sum_{h=\lceil (m+1)/\phi \rceil}^{s_n-1} \sum_{i=1}^{\nu_h} \sum_{j=\eta_h+(i-1)\varpi_h+m+1}^{\eta_h+i\varpi_h-1} \sum_{k=m+1}^{j-\eta_h-(i-1)\varpi_h+h} \left( 1 -\frac{k^q}{t_n^q} \right) D_j D_{j-k}, & \phi > 1; \\
		\displaystyle \frac{2}{n} \sum_{h=m+1}^{s_n} \sum_{i=\eta_h}^{n} \left( 1 -\frac{h^q}{t_n^q} \right) D_i D_{i-h}, & \phi = 1. \\
	\end{array}
	\right.
\end{equation}

Note that $\Var(H_{7,\phi}) = H_3^2$ for $\phi>1$, and $\Var(H_{7,\phi}) = H_5^2$ for $\phi=1$, where $H_3$ and $H_5$ are defined in \cref{eq:interVarRamp,eq:interVar}, respectively.
The following lemma provides the exact convergence rate of $\Var(H_{7,\phi})$.

\begin{lemma} \label{lem:varExact}
	Let $\alpha \ge 4$.
	Suppose that $\norm{X_1}_\alpha < \infty$.
	If Assumption \ref{asum:stability} holds, then as $n \to \infty$,
	\[
	\Var(H_{7,\phi}) \sim \mathcal{V}_{\psi,\Psi,\theta,\Theta,q,\phi} \sigma^4 n^{\psi +\max\{ 2q(\psi-\theta), 0\} -1},
	\]
	where $\mathcal{V}_{\psi,\Psi,\theta,\Theta,q,\phi}$
	\[
	= \left\{
	\begin{array}{ll}
		\frac{2 \Psi \left(\phi + 1\right)}{\psi + 1} \I_{\psi \le \theta}
		-\frac{8 \Psi^{q + 1} \Theta^{- q} \left(\phi^{q + 2} - 1\right)}{\left(\phi - 1\right) \left(q + 1\right) \left(q + 2\right) \left(\psi q + \psi + 1\right)} \I_{\psi = \theta}
		+\frac{2 \Psi^{2 q + 1} \Theta^{- 2 q} \left(\phi^{2 q + 2} - 1\right)}{\left(\phi - 1\right) \left(q + 1\right) \left(2 q + 1\right) \left(2 \psi q + \psi + 1\right)} \I_{\psi \ge \theta}, & \phi > 1; \\
		\frac{4 \Psi}{\psi + 1} \I_{\psi \le \theta}
		-\frac{8 \Psi^{q + 1} \Theta^{- q}}{\left(q + 1\right) \left(\psi q + \psi + 1\right)} \I_{\psi = \theta}
		+\frac{4 \Psi^{2 q + 1} \Theta^{- 2 q}}{\left(2 q + 1\right) \left(2 \psi q + \psi + 1\right)} \I_{\psi \ge \theta} , & \phi = 1. \\
	\end{array}
	\right.
	\]
\end{lemma}

\begin{proof}
	Consider $\phi > 1$ first.
	Recall that \hyperref[eq:dMart]{$D_i$}'s form martingale differences with respect to $\mathcal{F}_i$ and are $m$-dependent.
	Therefore, $\E(D_i D_{i-k}) = 0$ if $k > m$, and $\E(D_i D_{i-k} D_j D_{j-k}) = 0$ if $k > m$ and $i \neq j$.
	By Lemma \ref{lem:varDMart}, $\Var(D_i D_{i-k}) \to \sigma^4$ for $k > m$ as $m \to \infty$.
	Thus, as $n \to \infty$,
	\begin{align}
		\Var(H_{7,\phi}) &= \frac{4}{n^2} \Var\left\{ \sum_{h=\lceil (m+1)/\phi \rceil}^{s_n-1} \sum_{i=1}^{\nu_h} \sum_{j=\eta_h+(i-1)\varpi_h+m+1}^{\eta_h+i\varpi_h-1} \sum_{k=m+1}^{j-\eta_h-(i-1)\varpi_h+h} \left( 1 -\frac{k^q}{t_n^q} \right) D_j D_{j-k} \right\} \nonumber \\
		&\sim \frac{4\sigma^4}{n^2} \sum_{h=\lceil (m+1)/\phi \rceil}^{s_n-1} \sum_{i=1}^{\nu_h} \sum_{j=\eta_h+(i-1)\varpi_h+m+1}^{\eta_h+i\varpi_h-1} \sum_{k=m+1}^{j-\eta_h-(i-1)\varpi_h+h} \left( 1 -\frac{k^q}{t_n^q} \right)^2. \label{eq:riemannVarRamp}
	\end{align}
	Since we can choose $m=o(n^{\psi/6})$ as in \cref{sec:proof-var-mDepend}, we can approximate \cref{eq:riemannVarRamp} by Riemann integrals and use change of variables to obtain
	\begin{align*}
		\Var(H_{7,\phi}) &\sim \frac{4\sigma^4}{n^2} \int_0^{s_n} \int_0^{\nu_h} \int_{\eta_h+(i-1)\varpi_h}^{\eta_h+i\varpi_h} \int_0^{j-\eta_h-(i-1)\varpi_h+h} \left( 1 -\frac{k^q}{t_n^q} \right)^2 \,\di k\ \di j\ \di i\ \di h \\
		&= \frac{4\sigma^4}{n^2} \int_0^{s_n} \int_0^{\nu_h} \int_0^{\varpi_h} \int_0^{h+j} \left( 1 -\frac{k^q}{t_n^q} \right)^2\, \di k\ \di j\ \di i\ \di h \\
		&= \left\{
		\begin{array}{ll}
			4 \sigma^4 n^{\psi-1} \int_0^\Psi \int_0^{\nu_w} \int_0^{(\phi-1)w} \int_0^{w+y} 1 \,\di z\ \di y\ \di x\ \di w, & \psi < \theta; \\
			4 \sigma^4 n^{\psi-1} \int_0^\Psi \int_0^{\nu_w} \int_0^{(\phi-1)w} \int_0^{w+y} (1 -z^q \Theta^{-q})^2 \, \di z\ \di y\ \di x\ \di w, & \psi = \theta; \\
			4 \sigma^4 n^{\psi+2q(\psi-\theta)-1} \int_0^\Psi \int_0^{\nu_w} \int_0^{(\phi-1)w} \int_0^{w+y} z^{2q} \Theta^{-2q} \, \di z\ \di y\ \di x\ \di w, & \psi > \theta \\
		\end{array}
		\right. \\
		&= \mathcal{V}_{\psi,\Psi,\theta,\Theta,q,\phi} \sigma^4 n^{\psi +\max\{ 2q(\psi-\theta), 0\} -1}.
	\end{align*}
	Evaluating the integral gives
	\[
	\textstyle
	\mathcal{V}_{\psi,\Psi,\theta,\Theta,q,\phi}
	= \frac{2 \Psi \left(\phi + 1\right)}{\psi + 1} \I_{\psi \le \theta}
	- \frac{8 \Psi^{q + 1} \Theta^{- q} \left(\phi^{q + 2} - 1\right)}{\left(\phi - 1\right) \left(q + 1\right) \left(q + 2\right) \left(\psi q + \psi + 1\right)} \I_{\psi = \theta}
	+ \frac{2 \Psi^{2 q + 1} \Theta^{- 2 q} \left(\phi^{2 q + 2} - 1\right)}{\left(\phi - 1\right) \left(q + 1\right) \left(2 q + 1\right) \left(2 \psi q + \psi + 1\right)} \I_{\psi \ge \theta}.
	\]
	Now consider $\phi = 1$.
	By similar arguments as in \cref{eq:riemannVarRamp},
	\[
	\Var(H_{7,\phi}) \sim \frac{4 \sigma^4}{n} \sum_{h=m+1}^{s_n} \sum_{i=\eta_h}^{n} \left( 1 -\frac{h^q}{t_n^q} \right)^2.
	\]
	Using Riemann integral approximation and change of variables,
	\begin{align*}
		\Var(H_{7,\phi}) &\sim \frac{4\sigma^4}{n^2} \int_0^{s_n} \int_{\eta_h}^n \left( 1 -\frac{h^q}{t_n^q} \right)^2 \, \di i\ \di h \\
		&= \left\{
		\begin{array}{ll}
			4 \sigma^4 n^{\psi-1} \int_0^\Psi \int_{(w/\Psi)^{1/\psi}}^1 1 \,\di x\ \di w, & \psi < \theta; \\
			4 \sigma^4 n^{\psi-1} \int_0^\Psi \int_{(w/\Psi)^{1/\psi}}^1 (1 -w^q \Theta^{-q})^2 \,\di x\ \di w, & \psi = \theta; \\
			4 \sigma^4 n^{\psi+2q(\psi-\theta)-1} \int_0^\Psi \int_{(w/\Psi)^{1/\psi}}^1 w^{2q} \Theta^{-2q} \,\di x\ \di w, & \psi > \theta \\
		\end{array}
		\right. \\
		&= \mathcal{V}_{\psi,\Psi,\theta,\Theta,q,\phi} \sigma^4 n^{\psi +\max\{ 2q(\psi-\theta), 0\} -1}.
	\end{align*}
	Evaluating the integral gives
	\[
	\mathcal{V}_{\psi,\Psi,\theta,\Theta,q,\phi}
	= \frac{4 \Psi}{\psi + 1} \I_{\psi \le \theta}
	- \frac{8 \Psi^{q + 1} \Theta^{- q}}{\left(q + 1\right) \left(\psi q + \psi + 1\right)} \I_{\psi = \theta}
	+ \frac{4 \Psi^{2 q + 1} \Theta^{- 2 q}}{\left(2 q + 1\right) \left(2 \psi q + \psi + 1\right)} \I_{\psi \ge \theta}.
	\]
\end{proof}

\section{Additional Results} \label{sec:additional-results}

Some results that complement the main text are given here.

\subsection{\texorpdfstring{Example \ref{eg:on}: Window of Subsample Selection Rules}{
		Example 2: Window of Subsample Selection Rules}} \label{sec:omit-on} 

In this subsection, we try to find the window associated with the subsampling estimator defined in \cref*{eq:subClass}.
Recall that
\[
\hat{\sigma}^2_{n,\sub} = \frac{\sum_{i=1}^n \left\{ \sum_{j=i-\ell_i+1}^i (X_j -\bar{X}_n) \right\}^2}{\sum_{i=1}^n \ell_i}.
\]
Let $\bar{\ell}_n = n^{-1} \sum_{i=1}^n \ell_i$ be the global average batch size.
Assume that $t\mapsto t-\ell_t$ is a monotonically increasing sequence, which is satisfied by existing subsample selection rules.
Then, we can write
\begin{align*}
	\hat{\sigma}^2_{n,\sub}
	&= \frac{1}{n \bar{\ell}_n} \sum_{i=1}^n \sum_{j=i-\ell_i+1}^i \sum_{j'=i-\ell_i+1}^i (X_j -\bar{X}_n) (X_{j'} -\bar{X}_n) \\
	&= \frac{1}{n \bar{\ell}_n} \sum_{j=1}^n \sum_{j'=1}^n \sum_{i\in I_n(j,j')} (X_j -\bar{X}_n) (X_{j'} -\bar{X}_n)  \\
	&= \frac{1}{n} \sum_{i=1}^n \sum_{j=1}^n \frac{\left\vert I_n(i,j) \right\vert}{\bar{\ell}_n} (X_i -\bar{X}_n) (X_j -\bar{X}_n),
\end{align*}
where $I_n(i,j) = \{ t\in\{1,\ldots,n\}: i,j\in\{t-\ell_t+1, \ldots, t\}\}$.
Denote $f_n(i) = \max\{t\in\{1,\ldots,n\}: i\in\{t-\ell_t+1, \ldots, t\}\}$ and $b_n(i) = f_n(i)-i+1$.
Physically, the meaning of $f_n(i)$ is the largest index such that the $i$-th observation is still included in the subsample,
while $b_n(i)$ is the subsample size of the $f_n(i)$-th subsample.
We can express the cardinality of $I_n(i,j)$ as
\[
|I_n(i,j)| = \bigg\{ b_n(i \land j) - |i-j|\bigg\}\I_{|i-j| \le b_n(i \land j)}.
\]
Consequently, $\hat{\sigma}^2_{n,\sub}$ is exactly equivalent to
$\hat{\sigma}^2_n(W_\sub)$ with
\[
W_\sub = W_{n,\sub}(i,j)
= \left( 1 -\frac{|i-j|+\bar{\ell}_n-b_n(i \land j)}{\bar{\ell}_n} \right) \I_{|i-j| \le b_n(i \land j)}.
\]
Note that the subsampling parameter admits the form $b_n(i \land j)$ because the way to subsample differs under the subsampling form in \cref*{eq:subClass} and the quadratic form in \cref*{eq:genClass}.
Nevertheless, we can approximate $W_\sub$ under $\PSR$ asymptotically by
\[
W_\PSR = W_{n,\PSR}(i,j)
= \left( 1 -\frac{|i-j|+\bar{\ell}_n-\ell_{i \lor j}}{\bar{\ell}_n}\right) \I_{|i-j| \le \ell_{i \lor j}}.
\]
We formalize this result with the following proposition.

\begin{proposition}[Asymptotic equivalence of $\PSR$] \label{prop:sub}
	Let $\alpha \ge 4$.
	Suppose that $\norm{X_1}_\alpha < \infty$.
	If Assumptions \ref{asum:stability} and \ref{asum:qWeakStable} hold for $q=1$, and
	$\ell_i = \min(\lfloor \Lambda i^\lambda \rfloor, i)$ for some $\Lambda \in \R^+$ and $\lambda \in (0,1/2)$, i.e., $\PSR$ is used, then
	\[
	\frac{\norm{\hat{\sigma}^2_n(W_\sub)-\hat{\sigma}^2_n(W_\PSR)}^2}{\norm{\hat{\sigma}^2_n(W_\sub)-\sigma^2}^2}
	= o(1).
	\]
\end{proposition}

\begin{proof}
	When Assumptions \ref{asum:stability} and \ref{asum:qWeakStable} hold for $q=1$, we can apply Theorem 3 of \citet{rtacm} to obtain $\norm{\hat{\sigma}^2_n(W_\sub)-\sigma^2}^2 = O(n^{\max(-2\lambda, \lambda-1)})$.
	It suffices to show that
	\[
	\norm{\hat{\sigma}^2_n(W_\sub)-\hat{\sigma}^2_n(W_\PSR)}
	= o(n^{\max(-\lambda, \frac{\lambda-1}{2})}).
	\]
	Recall the definition of the known-mean version $\bar{\sigma}_n^2$ in \cref{eq:sigmaMean}.
	We can follow \cref{sec:proof-var-mean} to replace the sample mean similarly and assume $\mu = 0$ without loss of generality.
	This is because our Lemma \ref{lem:eqvMean} and the Lemma E.1 in \citet{rtacm} give
	\[
	\norm{\hat{\sigma}^2_n(W_\PSR)-\bar{\sigma}^2_n(W_\PSR)} = O(n^{\lambda-1}) \qquad \text{and} \qquad
	\norm{\hat{\sigma}^2_n(W_\sub)-\bar{\sigma}^2_n(W_\sub)} = O(n^{\lambda-1})
	\]
	Check that $\lambda-1 < \max(-\lambda, \lambda-1/2)$ always hold because $\lambda \in (0,1/2)$.
	Therefore, we only need to prove
	\[
	\norm{\bar{\sigma}^2_n(W_\sub)-\bar{\sigma}^2_n(W_\PSR)}
	= o(n^{\max(-\lambda, \frac{\lambda-1}{2})}).
	\]
	Express $\bar{\sigma}^2_n(W_\sub)$ and $\bar{\sigma}^2_n(W_\PSR)$ by
	\begin{align*}
		\bar{\sigma}^2_n(W_\sub)
		&= \frac{1}{n \bar{\ell}_n} \sum_{i=1}^n \sum_{j=1}^n \bigg\{ b_n(i \land j) - |i-j|\bigg\}\I_{|i-j| \le b_n(i \land j)} X_i X_j, \\
		\bar{\sigma}^2_n(W_\PSR)
		&= \frac{1}{n \bar{\ell}_n} \sum_{i=1}^n \sum_{j=1}^n \bigg( \ell_{i \lor j} - |i-j|\bigg)\I_{|i-j| \le \ell_{i \lor j}} X_i X_j.
	\end{align*}
	Denote $F_n(i) = \max\{t \in \Z^+: i\in\{t-\ell_t+1, \ldots, t\}\}$,
	$\Xi_{1,n} = \{i \in \{1, \ldots, n\}: b_n(i) -\ell_i > 1 \}$,
	$\Xi_{2,n} = \{i \in \{1, \ldots, n\}: f_n(i) < F_n(i) \}$, and
	$\Xi_{3,n} = \{1,\ldots,n\} \setminus \{\Xi_{1,n} \cup \Xi_{2,n}\}$.
	The meaning of $\Xi_{1,n}$ is the set of indices in the beginning where $\ell_i$ is strictly increasing,
	while the meaning of $\Xi_{2,n}$ is the set of indices that belongs to the last incomplete block where $F_n(i)$ is truncated by $n$.
	Then, we partition $\bar{\sigma}^2_n(W_\sub)$ into
	\begin{align*}
		\bar{\sigma}^2_{1,n}(W_\sub)
		={}& \frac{1}{n \bar{\ell}_n} \sum_{i \in \Xi_{1,n}} \sum_{j=1}^n \bigg\{ b_n(i \land j) - |i-j|\bigg\}\I_{|i-j| \le b_n(i \land j)} X_i X_j \\
		&+ \frac{1}{n \bar{\ell}_n} \sum_{i \in \Xi_{3,n}} \sum_{j \in \Xi_{1,n}} \bigg\{ b_n(i \land j) - |i-j|\bigg\}\I_{|i-j| \le b_n(i \land j)} X_i X_j, \\
		\bar{\sigma}^2_{2,n}(W_\sub)
		={}& \frac{1}{n \bar{\ell}_n} \sum_{i \in \Xi_{2,n}} \sum_{j=1}^n \bigg\{ b_n(i \land j) - |i-j|\bigg\}\I_{|i-j| \le b_n(i \land j)} X_i X_j \\
		&+ \frac{1}{n \bar{\ell}_n} \sum_{i \in \Xi_{3,n}} \sum_{j \in \Xi_{2,n}} \bigg\{ b_n(i \land j) - |i-j|\bigg\}\I_{|i-j| \le b_n(i \land j)} X_i X_j, \\
		\bar{\sigma}^2_{3,n}(W_\sub)
		={}& \frac{1}{n \bar{\ell}_n} \sum_{i \in \Xi_{3,n}} \sum_{j \in \Xi_{3,n}} \bigg\{ b_n(i \land j) - |i-j|\bigg\}\I_{|i-j| \le b_n(i \land j)} X_i X_j,
	\end{align*}
	and similarly for $\bar{\sigma}^2_n(W_\PSR)$ with
	$\{ b_n(i \land j) - |i-j|\}\I_{|i-j| \le b_n(i \land j)}$ replaced by
	$( \ell_{i \lor j} - |i-j|)\I_{|i-j| \le \ell_{i \lor j}}$.
	By the Minkowski inequality,
	\begin{align*}
		\norm{\bar{\sigma}^2_n(W_\sub)-\bar{\sigma}^2_n(W_\PSR)}
		\le{}& \norm{\bar{\sigma}^2_{1,n}(W_\sub)-\bar{\sigma}^2_{1,n}(W_\PSR)}
		+\norm{\bar{\sigma}^2_{2,n}(W_\sub)-\bar{\sigma}^2_{2,n}(W_\PSR)} \\
		& +\norm{\bar{\sigma}^2_{3,n}(W_\sub)-\bar{\sigma}^2_{3,n}(W_\PSR)} \\
		={}& H_8 +H_9 +H_{10}.
	\end{align*}
	For $H_8$, since $\Xi_{1,n}$ is the set of indices in the beginning where $\ell_i$ is strictly increasing, $\max_{i \in \Xi_{1,n}} i \le C$ for some constant $C \in \R^+$ that may change from line to line in the remaining of this proof.
	Without loss of generality, consider $i \ge j$ as in \cref*{eq:genClassOn}.
	By the Minkowski inequality and the meaning of $\Xi_{1,n}$,
	\begin{align*}
		H_8
		\le{}& \frac{2}{n \bar{\ell}_n} \norm{\sum_{i=1}^{C+\ell_n} \sum_{j=1}^C \bigg\{ b_n(i \land j) -|i-j| \bigg\}\I_{|i-j| \le b_n(i \land j)} X_i X_j} \\
		& +\frac{2}{n \bar{\ell}_n} \norm{\sum_{i=1}^{C+\ell_n} \sum_{j=1}^C \bigg( \ell_{i \lor j} - |i-j|\bigg)\I_{|i-j| \le \ell_{i \lor j}} X_i X_j} \\
		={}& H_{11} +H_{12}.
	\end{align*}
	Consider $H_{11}$. The Minkowski inequality gives
	\begin{align}
		H_{11}
		\le{}& \frac{2}{n \bar{\ell}_n} \norm{\sum_{i=1}^{C+\ell_n} \sum_{j=1}^C \bigg\{ b_n(i \land j) -|i-j| \bigg\}\I_{|i-j| \le b_n(i \land j)} \left\{X_i X_j -\E(X_i X_j)\right\} } \nonumber \\
		& +\frac{2}{n \bar{\ell}_n} \left| \sum_{i=1}^{C+\ell_n} \sum_{j=1}^C \bigg\{ b_n(i \land j) -|i-j| \bigg\}\I_{|i-j| \le b_n(i \land j)} \E(X_i X_j) \right| \nonumber \\
		={}& H_{13} +H_{14}. \label{eq:interSub}
	\end{align}
	Since the stability of $\{X_i X_j\}$ for $j \in \{1,\ldots,i\}$ has been established in \cref{sec:proof-lastBlock}, it follows from Assumption \ref{asum:stability} and Lemma 1 of \citet{liu2010asymptotic} that
	\begin{equation} \label{eq:boundH13}
		H_{13}
		\le O(n^{-1-\lambda}) \sqrt{\sum_{i=1}^{C+\ell_n} \left| \sum_{j=1}^C C \right|^2}
		= O(n^{-1-\lambda}) O(n^{\frac{\lambda}{2}})
		= o(n^{\max(-\lambda, \frac{\lambda-1}{2})}).
	\end{equation}
	By the Minkowski inequality and $\sum_{k \in \Z} |\gamma_k| < \infty$ under Assumption \ref{asum:stability} \citep{wu2009recursive},
	\begin{equation} \label{eq:boundH14}
		H_{14} \le O(n^{-1-\lambda}) \sum_{i=1}^{C+\ell_n} \sum_{j=1}^C C |\gamma_{|i-j|}|
		= O(n^{-1-\lambda}) O(1)
		= o(n^{\max(-\lambda, \frac{\lambda-1}{2})}).
	\end{equation}
	Using the same procedure from \crefrange{eq:interSub}{eq:boundH14}, we have
	\begin{align*}
		H_{12}
		&\le O(n^{-1-\lambda}) \sqrt{\sum_{i=1}^{C+\ell_n} \left| \sum_{j=1}^C \ell_n \right|^2} +O(n^{-1-\lambda}) \sum_{i=1}^{C+\ell_n} \sum_{j=1}^C \ell_n |\gamma_{|i-j|}| \\
		&= O(n^{-1-\lambda}) O(n^{\frac{3\lambda}{2}}) +O(n^{-1-\lambda}) O(n^{\lambda})
		= o(n^{\max(-\lambda, \frac{\lambda-1}{2})}).
	\end{align*}
	For $H_9$, since the meaning of $\Xi_{2,n}$ is the set of indices that belongs to the last incomplete block where $F_n(i)$ is truncated by $n$,
	$\min_{i \in \Xi_{2,n}} i \ge (\ell_n/\Lambda)^{1/\lambda}-C = n-C$; see \cref{eq:blockStart}.
	Repeating the same arguments, we have
	$
	H_9 = o(n^{\max(-\lambda, \frac{\lambda-1}{2})}).
	$
	Finally, note that for $i,j \in \Xi_{3,n}$, we have
	$\{b_n(i \land j)-\ell_{i \lor j}\}\I_{|i-j| \le b_n(i \land j)}=1$ or
	$\{b_n(i \land j)-\ell_{i \lor j}\}\I_{|i-j| \le b_n(i \land j)}=0$.
	By the Minkowski inequality,
	\begin{align*}
		H_{10}
		={}& \frac{1}{n \bar{\ell}_n} \norm{\sum_{i \in \Xi_{3,n}} \sum_{j \in \Xi_{3,n}} \bigg\{ b_n(i \land j) -\ell_{i \lor j} \bigg\}\I_{|i-j| \le b_n(i \land j)} X_i X_j} \\
		\le{}& \frac{1}{n \bar{\ell}_n} \norm{\sum_{i \in \Xi_{3,n}} \sum_{j \in \Xi_{3,n}} \bigg\{ b_n(i \land j) -\ell_{i \lor j} \bigg\} \I_{|i-j| \le b_n(i \land j)} \left\{ X_i X_j -\E(X_i X_j) \right\} } \\
		& +\frac{1}{n \bar{\ell}_n} \left| \sum_{i \in \Xi_{3,n}} \sum_{j \in \Xi_{3,n}} \bigg\{ b_n(i \land j) -\ell_{i \lor j} \bigg\} \I_{|i-j| \le b_n(i \land j)} \E(X_i X_j) \right| \\
		={}& H_{15} +H_{16}.
	\end{align*}
	Without loss of generality, consider $i \ge j$ again.
	By similar reasons as in \cref{eq:boundH13,eq:boundH14},
	\begin{align*}
		H_{15} &\le O(n^{-1-\lambda}) \sqrt{\sum_{i \in \Xi_{3,n}} \left| \sum_{j=\min_{k \in \Xi_{3,n}} k}^i \bigg\{ b_n(i \land j) -\ell_{i \lor j} \bigg\} \I_{|i-j| \le b_n(i \land j)} \right|^2} \\
		&= O(n^{-\frac{1}{2}}) = o(n^{\max(-\lambda, \frac{\lambda-1}{2})}), \\
		H_{16} &\le O(n^{-1-\lambda}) \sum_{i \in \Xi_{3,n}} \sum_{j \in \Xi_{3,n}} \I_{|i-j| \le \ell_n} |\gamma_{|i-j|}|
		= O(n^{-1-\lambda}) O(n^{\lambda})
		= o(n^{\max(-\lambda, \frac{\lambda-1}{2})}).
	\end{align*}
	Combining the results completes the proof.
\end{proof}

\subsection{\texorpdfstring{Exact Convergence Rate of $\hat{\sigma}^2_n(W_{\Lest})$}{
		Exact Convergence Rate under the Simple Modification}} \label{sec:omit-winL}

For the exact convergence rate of $\hat{\sigma}^2_n(W_{\Lest})$, Theorem \ref{thm:mse} does not apply directly.
Nevertheless, we can use the same procedure in \cref{sec:proof-mse} to prove a similar theorem particularly for this simple modification.
After going through the steps, we have
\[
n^{1-\lambda} \Var(\hat{\sigma}_{n,\Lest}^2)
\to \frac{4\Lambda}{3(\lambda+1)} \quad \text{and} \quad
\Bias(\hat{\sigma}_{n,\Lest}^2)
\sim - \frac{\Lambda^{-1}}{(1-\lambda)} n^{-\lambda} v_1.
\]
Clearly, $\lambda_\star = 1/3$ optimize the order of $\MSE(\hat{\sigma}_{n,\Lest}^2)$.
We can also find that $\Lambda_\star = (9/2)^{1/3} \kappa_1^{2/3}$ by differentiating $\MSE(\hat{\sigma}_{n,\Lest}^2)$.
As a result, the  optimal mean squared error improves from $1.12B_n$ ($\PSR$) to $1.08B_n$.

\subsection{\texorpdfstring{Exact Convergence Rate of $\hat{v}_{q,n,\LASER(p,\phi)}$}{
		Exact Convergence Rate of the Nuisance Parameter Estimator}} \label{sec:omit-vq}

We can also use the same procedure in \cref{sec:proof-mse} to prove the exact convergence rate of $\hat{v}_{q,n,\LASER(p,\phi)}$.
Consider the case of interest $\psi = \theta$.
If $u_{p+q} < \infty$, we have
\[
\Bias(\hat{v}_{q,n,\LASER(p,\phi)}) \sim -\frac{1}{\Theta^p n^{p\theta}} v_{p+q}.
\]
Similarly,
\begin{align*}
	& 4^{-1} \sigma^{-4} n^{1-\psi} \Var(\hat{v}_{q,n,\LASER(p,\phi)}) \\
	\sim{}& \left\{
	\begin{array}{ll}
		\int_0^\Psi \int_0^{\nu_w} \int_0^{(\phi-1)w} \int_0^{w+y} (z^q -z^{p+q} \Theta^{-p})^2 \, \di z\ \di y\ \di x\ \di w, & \phi > 1; \\
		\int_0^\Psi \int_{(w/\Psi)^{1/\psi}}^1 (w^q -w^{p+q} \Theta^{-p})^2 \,\di x\ \di w, & \phi = 1; \\
	\end{array}
	\right. \\
	\sim{}& \left\{
	\begin{array}{ll}
		\begin{array}{l}
			\frac{2 \Psi^{2 q + 1} (\phi^{2 q + 2} - 1)}{(\phi - 1) (q + 1) (2 q + 1) (2 \psi q + \psi + 1)}
			-\frac{8 \Psi^{p + 2 q + 1} \Theta^{- p} (\phi^{p + 2 q + 2}-1)}{(\phi - 1) (p + 2 q + 1) (p + 2 q + 2) (\psi p + 2 \psi q + \psi + 1)} \\
			+ \frac{2 \Psi^{2 p + 2 q + 1} \Theta^{- 2 p} (\phi^{2 p + 2 q + 2} - 1)}{(\phi - 1) (p + q + 1) (2 p + 2 q + 1) (2 \psi p + 2 \psi q + \psi + 1)}
		\end{array}
		, & \phi > 1; \\
		\frac{4 \Psi^{2 q + 1}}{(2 q + 1) (2 \psi q + \psi + 1)}
		- \frac{8 \Psi^{p + 2 q + 1} \Theta^{- p}}{(p + 2 q + 1) (\psi p + 2 \psi q + \psi + 1)}
		+ \frac{4 \Psi^{2 p + 2 q + 1} \Theta^{- 2 p}}{(2 p + 2 q + 1) (2 \psi p + 2 \psi q + \psi + 1)}, & \phi = 1. \\
	\end{array}
	\right.
\end{align*}
The standardized mean squared error when $\psi = \theta = 1/(1+2p)$ is
\begin{align*}
	\left\{
	\begin{array}{ll}
		\begin{array}{l}
			\frac{\kappa_{p+q}^2}{\Theta^{2p}}
			+\frac{\Psi^{2 q + 1} (\phi^{2 q + 2} - 1) (2p+1)}{(\phi - 1) (q + 1) (2 q + 1) (p+q+1)}
			-\frac{8 \Psi^{p + 2 q + 1} \Theta^{- p} (\phi^{p + 2 q + 2}-1) (2p+1)}{(\phi - 1) (p + 2 q + 1) (p + 2 q + 2) (3p+2q+2)} \\
			+ \frac{\Psi^{2 p + 2 q + 1} \Theta^{- 2 p} (\phi^{2 p + 2 q + 2} - 1) (2p+1)}{(\phi - 1) (p + q + 1) (2 p + 2 q + 1) (2p+q+1)}
		\end{array}
		, & \phi > 1; \\
		\frac{\kappa_{p+q}^2}{\Theta^{2p}}		
		+\frac{2 \Psi^{2 q + 1} (2p+1)}{(2 q + 1) (p+q+1)}
		- \frac{8 \Psi^{p + 2 q + 1} \Theta^{- p} (2p+1)}{(p + 2 q + 1) (3p+2q+2)}
		+ \frac{2 \Psi^{2 p + 2 q + 1} \Theta^{- 2 p} (2p+1)}{(2 p + 2 q + 1) (2p+q+1)}, & \phi = 1. \\
	\end{array}
	\right.
\end{align*}
When $\Psi = \Theta$, we have
\[
\Psi_\star = \left\{
\begin{array}{ll}
	\begin{array}{l}
		\Big\{ \frac{(\phi^{2 q + 2} - 1) (2p+1)}{2 (\phi - 1) p (q + 1) (p+q+1)}
		- \frac{4 (\phi^{p + 2 q + 2}-1) (2p+1) (2q+1)}{(\phi - 1) p (p + 2 q + 1) (p + 2 q + 2) (3p+2q+2)} \\
		+ \frac{(\phi^{2 p + 2 q + 2} - 1) (2p+1) (2q+1)}{2 (\phi - 1) p (p + q + 1) (2 p + 2 q + 1) (2p+q+1)} \Big\}^{-\frac{1}{1+2p+2q}} \kappa_{p+q}^{\frac{2}{1+2p+2q}}
	\end{array}
	, & \phi > 1; \\
	\Big\{ \frac{(2p+1)}{p (p+q+1)}
	- \frac{4 (2p+1) (2q+1)}{p (p + 2 q + 1) (3p+2q+2)}
	+ \frac{(2p+1) (2q+1)}{p (2 p + 2 q + 1) (2p+q+1)} \Big\}^{-\frac{1}{1+2p+2q}} \kappa_{p+q}^{\frac{2}{1+2p+2q}}, & \phi = 1. \\
\end{array}
\right.
\]
If we allow $\Theta = \rho \Psi_\star$, then the standardized mean squared error becomes
\begin{align*}
	\left\{
	\begin{array}{ll}
		\begin{array}{l}
			\frac{\kappa_{p+q}^2}{\rho^{2p} \Psi_\star^{2p}}
			+\frac{\Psi_\star^{2 q + 1} (\phi^{2 q + 2} - 1) (2p+1)}{(\phi - 1) (q + 1) (2 q + 1) (p+q+1)}
			-\frac{8 \rho^{-p} \Psi_\star^{2 q + 1} (\phi^{p + 2 q + 2}-1) (2p+1)}{(\phi - 1) (p + 2 q + 1) (p + 2 q + 2) (3p+2q+2)} \\
			+ \frac{\rho^{-2p} \Psi_\star^{2 q + 1} (\phi^{2 p + 2 q + 2} - 1) (2p+1)}{(\phi - 1) (p + q + 1) (2 p + 2 q + 1) (2p+q+1)}
		\end{array}
		, & \phi > 1; \\
		\frac{\kappa_{p+q}^2}{\rho^{2p} \Psi_\star^{2p}}		
		+\frac{2 \Psi_\star^{2 q + 1} (2p+1)}{(2 q + 1) (p+q+1)}
		- \frac{8 \rho^{-p} \Psi_\star^{2 q + 1} (2p+1)}{(p + 2 q + 1) (3p+2q+2)}
		+ \frac{2 \rho^{-2p} \Psi_\star^{2 q + 1} (2p+1)}{(2 p + 2 q + 1) (2p+q+1)}, & \phi = 1, \\
	\end{array}
	\right.
\end{align*}
and we have
\[
\rho_\star = \left\{
\begin{array}{ll}
	\begin{array}{l}
		\Big\{ \frac{(\phi^{2 p + 2 q + 2} - 1) (p + 2 q + 1) (p + 2 q + 2) (3p+2q+2)}{4 (\phi^{p + 2 q + 2}-1) (p + q + 1) (2 p + 2 q + 1) (2p+q+1)} \\
		+ \frac{\Psi_\star^{-2p-2q-1} \kappa_{p+q}^2 (\phi - 1) (p + 2 q + 1) (p + 2 q + 2) (3p+2q+2)}{4 (\phi^{p + 2 q + 2}-1) (2p+1)} \Big\}^{\frac{1}{p}}
	\end{array}
	, & \phi > 1; \\
	\left\{ \frac{(p + 2 q + 1) (3p+2q+2)}{2 (2 p + 2 q + 1) (2p+q+1)}
	+ \frac{\Psi_\star^{-2p-2q-1} \kappa_{p+q}^2 (p + 2 q + 1) (3p+2q+2)}{4 (2p+1)}	 \right\}^{\frac{1}{p}}, & \phi = 1. \\
\end{array}
\right.
\]

\subsection{Positive Definiteness Adjustment} \label{sec:omit-pd}

A practical concern in long-run covariance matrix estimation is that the finite-sample estimates may not be positive definite.
Indeed, classical estimators such as those utilizing the overlapping batch means and Bartlett kernel are only guaranteed to be positive semi-definite.
To address this issue, we follow \citet{jentsch2015pd} to provide an adjustment that retains asymptotic properties of long-run covariance matrix estimators:

\begin{enumerate}[label=(\arabic*)]
	\item Obtain the diagonal matrix $\hat{V} = \mathrm{diag}(\hat{\Sigma}_n)$.
	\item Compute the correlation matrix $\hat{R}_n = \hat{V}^{-1/2} \hat{\Sigma}_n \hat{V}^{-1/2}$.
	\item Eigendecompose the correlation matrix into $\hat{R}_n = \hat{Q} \hat{\Lambda}_n \hat{Q}^\T$, where $\hat{Q}$ is an orthogonal matrix whose columns are the eigenvectors and $\hat{\Lambda}_n = \mathrm{diag}(\hat{\lambda}_1, \ldots, \hat{\lambda}_d)$ is the diagonal matrix whose diagonal elements are the eigenvalues.
	\item Set $\hat{\Lambda}_n^+ = \mathrm{diag}(\hat{\lambda}_1^+, \ldots, \hat{\lambda}_d^+)$, where $\hat{\lambda}_i^+ = \max(\hat{\lambda}_i, a n^{-b})$ for some user-specified $a>0$ and $b>1/2$.
	We use $a=\{\ln(n)/d\}^{1/2}$ and $b=9/10$ as suggested in \citet{vats2021kernel}.
	\item Return the positive definite estimate $\hat{\Sigma}_n^+ = \hat{V}^{1/2} \hat{Q}^\T \hat{\Lambda}_n^+ \hat{Q} \hat{V}^{1/2}$.
\end{enumerate}

\subsection{Online Quantile Regression: Independent Case} \label{sec:omit-quant}

Figure \ref{fig:quant-supp} reports the results in the independent case, which are similar to the results in the dependent case in \cref*{sec:online-quantile-regression}.

\begin{figure}[!t]
	\centering
	\includegraphics[width=\linewidth]{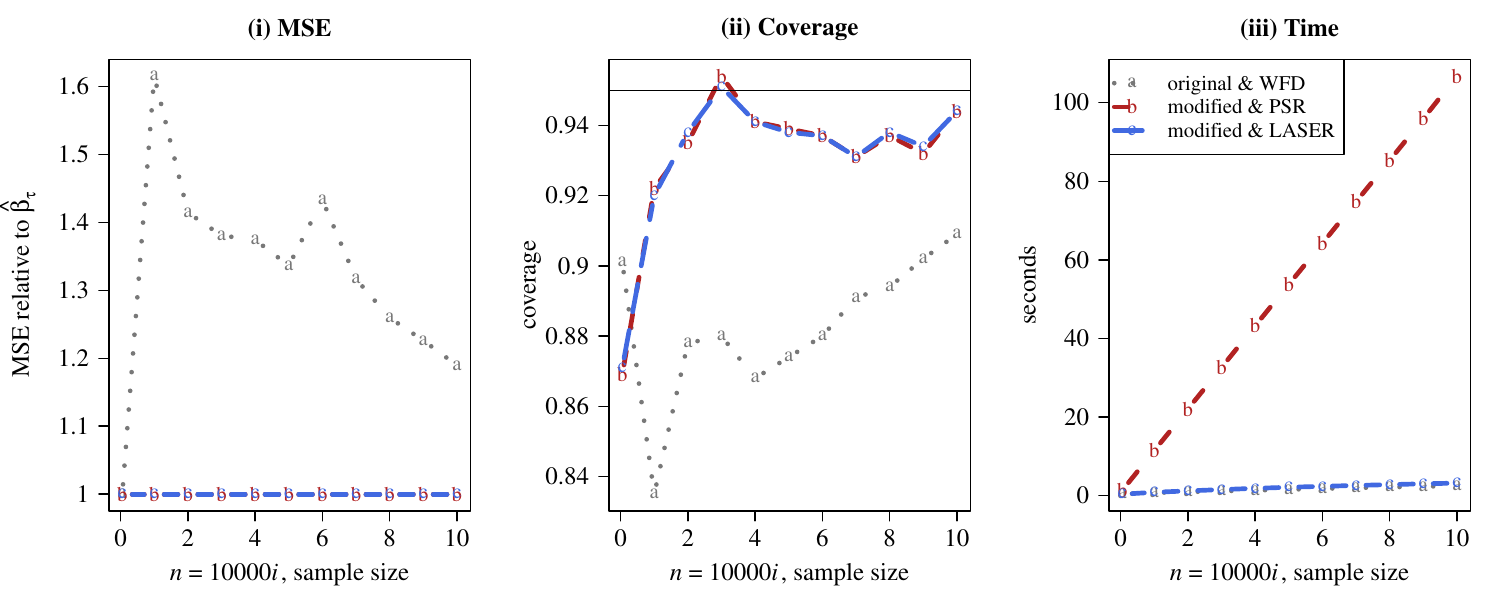}
	\caption{Online quantile regression under dependence using original and modified inference procedures at $5\%$ nominal size: 
		(a) original and $\WFD$ (dotted gray); 
		(b) modified and $\PSR$ (dashed red); 
		(c) modified and $\LASER(1,1)$ (longdash blue).
		The plot in (i) shows the value of $\MSE(\cdot)/\MSE(\hat\beta_{\tau,n})$.}
	\label{fig:quant-supp}
\end{figure}

\subsection{Online Change Point Detection: Other Cases} \label{sec:omit-cp}

Figure \ref{fig:cp-supp} reports the results for $\nu+k^* = 601$ (top panel) and $\nu+k^* = 1001$ (bottom panel), which are similar to the results for $\nu+k^* = 1401$ in \cref*{sec:online-change-point-detection}.

\begin{figure}[!t]
	\centering
	\includegraphics[width=\linewidth]{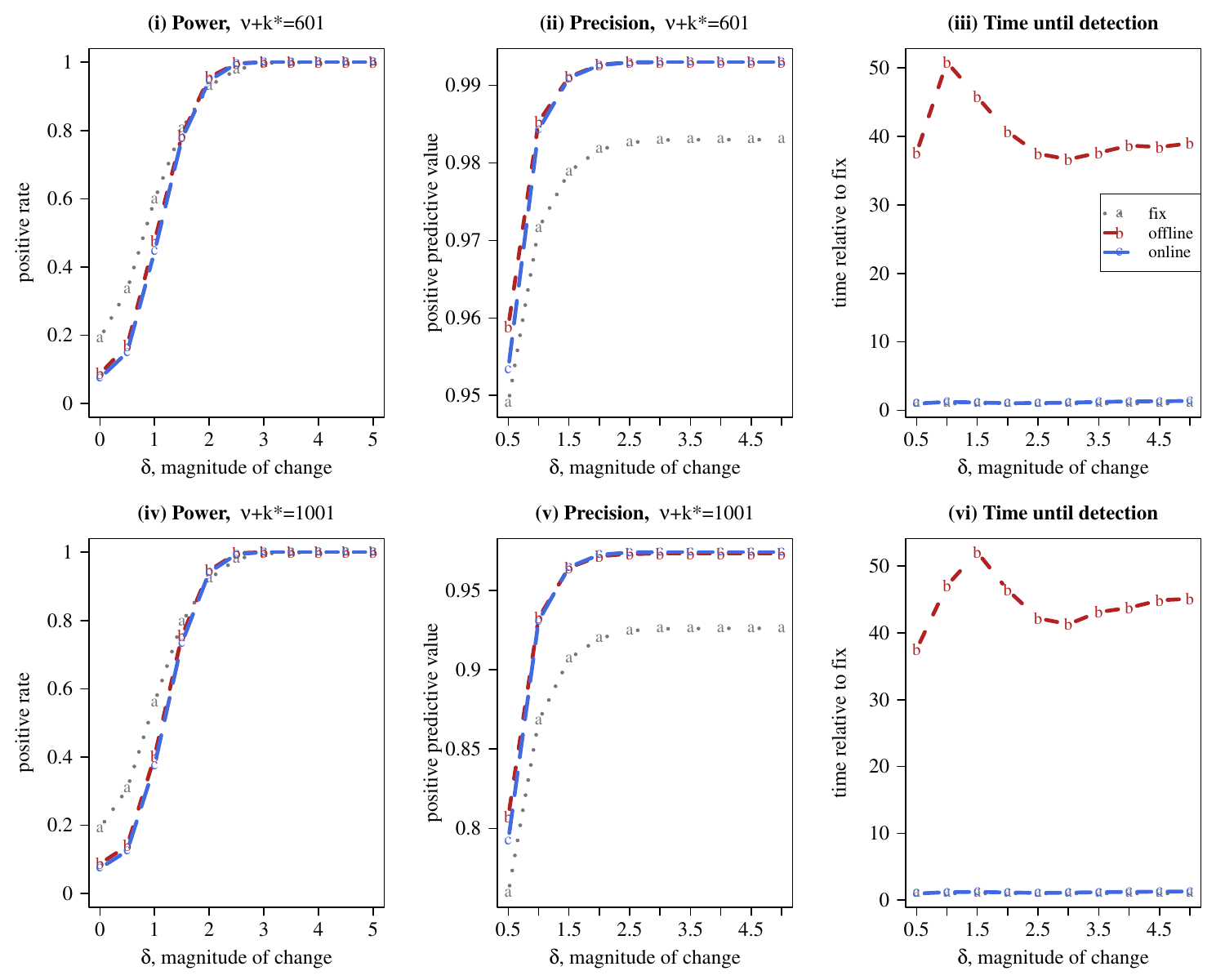}
	\caption{Online change point detection at $5\%$ nominal size using different long-run variance estimation methods: (a) fix (dotted gray); (b) offline (dashed red); (c) online (longdash blue).
		The locations of the change point are $\nu+k^* = 601$ (top panel) and $\nu+k^* = 1001$ (bottom panel).}
	\label{fig:cp-supp}
\end{figure}

\end{document}